\documentclass[reprint]{revtex4-2}

\usepackage{graphicx} 
\usepackage[inkscapelatex=false]{svg}
\usepackage[a4paper,top=2cm,bottom=2cm,left=2cm,right=2cm,marginparwidth=1.75cm]{geometry}
 
\usepackage{amsmath}
\usepackage{amssymb}
\usepackage{amsthm}
\usepackage{amsfonts}
\usepackage{graphicx}
\usepackage{subcaption}
\usepackage{xspace}
\usepackage[colorlinks=true, allcolors=blue]{hyperref}
\usepackage{enumerate}
\usepackage{color}
\usepackage{url}
\usepackage{centernot}
\usepackage{xspace}
\usepackage{hyperref}
\usepackage{enumitem}
\usepackage{adjustbox}

\newcommand{\cS}{\mathcal{S}}
\newcommand{\cE}{\mathcal{E}}

\newcommand\eps{\varepsilon}

\newcommand\Z{\mathbb{Z}}

\newcounter{marginalnote}

\usepackage{caption}
\usepackage{ragged2e}
\DeclareCaptionFormat{myformat}{\justifying#1#2#3}
\begin{document}

\title{Minimum Weight Decoding in the  Colour Code is NP-hard}

\author{Mark Walters}
\email{mark.walters@riverlane.com}
\affiliation{Riverlane, Cambridge, CB2 3BZ, UK}
\author{Mark L. Turner}
\affiliation{Riverlane, Cambridge, CB2 3BZ, UK}
\date{\today}
\begin{abstract}
All utility-scale quantum computers will require some form of Quantum
Error Correction in which logical qubits are encoded in a larger
number of physical qubits. One promising encoding is known as the
colour code which has broad applicability across all qubit types and
can decisively reduce the overhead of certain logical operations when
compared to other two-dimensional topological codes such as the
surface code. However, whereas the surface code decoding problem can
be solved exactly in polynomial time by finding minimum weight
matchings in a graph, prior to this work, it was not known whether
exact and efficient colour code decoding was possible. Optimism in
this area, stemming from the colour code's significant structure and
well understood similarities to the surface code, fanned this
uncertainty. In this paper we resolve this, proving that exact
decoding of the colour code is NP-hard -- that is, there does not
exist a polynomial time algorithm unless $\text{P}=\text{NP}$. This highlights a
notable contrast to some of the colour code's key competitors, such as
the surface code, and motivates continued work in the narrower space
of heuristic and approximate algorithms for fast, accurate and
scalable colour code decoding.
\end{abstract}

\maketitle

\newtheorem{theorem}{Theorem}
\newtheorem{lemma}[theorem]{Lemma}
\newtheorem{proposition}[theorem]{Proposition}
\newtheorem{corollary}[theorem]{Corollary}
\newtheorem*{defn}{Definition}
\newtheorem*{theorem*}{Theorem}
\newtheorem*{lemma*}{Lemma}
\newtheorem*{conjecture}{Conjecture}
\newtheorem{question}{Question}
\newtheorem*{quotedresult}{Lemma A}
\theoremstyle{remark}
\newtheorem*{remark}{Remark}

\section{Introduction}

Quantum computing has the potential to solve many problems that are
classically
intractable~\cite{babbushGrandChallengeQuantum2025}. However, the most
interesting of these will require a very large number of operations
(in excess of $10^{10}$) which means the error rate needs to be very
low. As physical qubits have much higher error rates than this requires, Quantum Error
Correction (QEC) is vital. In QEC, logical information is distributed
across physical qubits in a \emph{quantum code} such that, if the
physical error rate is below a threshold, logical errors are
exponentially suppressed in the system
size~\cite{knillResilientQuantumComputation1998}. Choosing a quantum code involves balancing
the number of physical qubits required for a given logical fidelity
with the overhead of performing quantum gates at the logical
level -- all the while meeting core hardware constraints such as qubit
connectivity.  A prominent error correction code is the surface
code~\cite{kitaevQuantumComputationsAlgorithms1997,fowlerSurfaceCodesPractical2012}, which can be implemented with
two-dimensional nearest neighbour hardware and has seen impressive experimental
demonstrations across multiple qubit types~\cite{acharyaQuantumErrorCorrection2025, cauneDemonstratingRealtimeLowlatency2024, krinnerRealizingRepeatedQuantum2022, zhaoRealizationErrorCorrectingSurface2022, bluvsteinLogicalQuantumProcessor2024, bluvsteinArchitecturalMechanismsUniversal2026}.

Colour codes~\cite{bombinTopologicalQuantumDistillation2006,
  bombinTransversalGatesError2018,
  kubicaUniversalTransversalGates2015} are closely related to surface
codes~\cite{kubicaUnfoldingColorCode2015} and are the subject of this
paper. They have also been realised
experimentally~\cite{lacroixScalingLogicColour2025,
  salesrodriguezExperimentalDemonstrationLogical2025,
  ryan-andersonRealizationRealTimeFaultTolerant2021,
  bluvsteinLogicalQuantumProcessor2024} and have memory footprints
that are competitive with the surface
code~\cite{lacroixScalingLogicColour2025,
  beniTesseractSearchBasedDecoder2025}. However, colour codes
differentiate themselves through their richer set of logical
operations~\cite{kubicaUnfoldingColorCode2015,
  kubicaUniversalTransversalGates2015,
  thomsenLowoverheadQuantumComputing2024} and utility during
non-Clifford state preparation
protocols~\cite{itogawaEfficientMagicState2025,
  gidneyMagicStateCultivation2024,
  leeLowOverheadMagicState2025}. Notably, logical~S  and Hadamard gates
are transversal in the colour code, meaning a simultaneous
single-qubit gate on all data qubits suffices -- a strong contrast to
the surface code where these gates require $O(d)$ rounds of syndrome
extraction to implement the necessary lattice surgery
protocols~\cite{geherErrorcorrectedHadamardGate2024,
  gidneyInplaceAccessSurface2024} or increased connectivity to enable
fold
transversality~\cite{moussaTransversalCliffordGates2016,chenTransversalLogicalClifford2024}.

In order to perform quantum error correction we need a
\emph{decoder}. Decoders are \emph{classical} algorithms which take
measurement data from our quantum code as input, and output a
consistent set of errors that explain these
measurements. \emph{Minimum weight decoding} involves finding the most
likely physical error consistent with the measurements. In the surface
code this can be solved exactly in polynomial time using a Minimum
Weight Perfect Matching decoder (MWPM) based on Edmond's Blossom
algorithm~\cite{edmondsPathsTreesFlowers1965}.  In contrast, while MWPM is used as a subroutine in
many colour code decoders~\cite{delfosseDecodingColorCodes2014, sahayDecoderTriangularColor2022, gidneyNewCircuitsOpen2023, kubicaEfficientColorCode2023, leeColorCodeDecoder2025, buttDecoding3DColor2025}, these decoders fail sooner than
would be expected from the code distance -- for a  code of
distance $d$ an exact minimum weight decoder will correct all errors
up to weight $\lfloor d / 2 \rfloor$, (as well as many higher weight
errors). However, all known exact decoders for the colour code have very poor worst case
runtimes~\cite{berentDecodingQuantumColor2024,beniTesseractSearchBasedDecoder2025}. Promising numerics have
been observed with heuristic decoders based on both belief
propagation~\cite{koutsioumpasColourCodesReach2025} and machine
learning~\cite{seniorScalableRealtimeNeural2025}, but these decoders make no formal
guarantees on the behaviour of error suppression at high
distances. Remarkably, whether the colour code will replace the
surface code as the leading planar architecture hinges not so much on
the code itself but whether sufficiently good decoders can be found. Prior to
this work it was unknown whether there exists an efficient and exact
minimum weight decoder for colour codes. Indeed, whilst general 
hypergraph matching is NP-hard~\cite{karpReducibilityCombinatorialProblems1972}, the prospect of a polynomial time algorithm for the colour code has remained promising because the
colour code seems qualitatively simpler than this more general
problem due to its significant structure and symmetry.

In this paper we prove that, for the colour code, there is no
polynomial time algorithm that returns the minimum weight decoding for all syndromes (assuming that
$\text{P}\not=\text{NP}$).
We prove that this is hard even in the
\emph{simplest} possible model for decoding (code-capacity with
independent, equally likely errors). This implies that that there is no
polynomial time decoding algorithm in more realistic models
such as weighted models or phenomenological noise, and strongly suggests the same for circuit level
noise.  Our main technique is to construct a very specific decoding
problem such that solving this decoding problem (i.e., finding the
most likely set of errors) is equivalent to solving the well-known
3-SAT problem (see Section~\ref{ss:3-sat}). Since 3-SAT does not have
a polynomial time solution (unless $\text{P}=\text{NP}$) this proves
our result. Our result is not intended to discourage decoder research
in the colour code space -- the colour code remains highly
compelling. Rather we aim to guide the direction of future decoder
research: progress will be made by focusing on
developing improved heuristic or approximate decoders for colour
codes that operate in the trade-off space between accuracy and speed.

\section{Main Results}
Before we state our results we need to define our terminology. As we
said in the introduction, our results apply to most noise
models, including circuit level noise, phenomenological noise, errors
with unequal probabilities, mixtures of $X$ and $Z$ errors etc. However, we
will work in the simplest possible noise model (the
code-capacity model) where errors only occur on data qubits (there are
no measurement errors), the only errors are $X$-errors and each error
occurs independently with the same probability. This is sufficient, since hardness results in
this case immediately imply the corresponding hardness results in the
more complicated models (with some minor caveats for circuit level noise depending on the exact choice of circuit -- see Section~\ref{a:other-noise} in the
Appendix for a discussion). 

Each data qubit is in some checks; since we are only considering
$X$-errors, these are $Z$-checks. A check is triggered if it meets an
odd number of errors and we call such a check a \emph{defect}. The
\emph{syndrome} $\cS$ is the state of all the checks.  We say that an
error-set $\cE$ \emph{generates} a syndrome if the error set would
give rise to that syndrome.  Recall, a minimum weight decoding is any of the 
 most likely error-sets (i.e., sets of \emph{physical} errors)
that generate a given syndrome (i.e., any of the error strings with the
smallest number of errors).

Finally, we highlight some standard definitions from computational
complexity. A decision problem (i.e., a problem with a yes/no answer) is in P if it can be solved in polynomial time; it is in NP if a solution demonstrating a `yes' answer can be verified in polynomial time.
A problem is \emph{NP-hard} if it can be used to solve any
problem in NP with at most a polynomial increase in time; it is 
\emph{NP-complete} if it is NP-hard and is itself is in NP. Throughout this paper we
will be assuming $\text{P}\not=\text{NP}$ so the reader is free to just read NP-hard
and NP-complete as saying `there does not exist a polynomial time
algorithm for this problem.'

Before defining the colour code formally (see Section~\ref{s:colour-code}) we state our results.

\begin{theorem}\label{t:min_weight}
  In the colour code, the problem of finding a minimum weight decoding for a given syndrome is NP-hard.  
\end{theorem}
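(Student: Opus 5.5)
We prove Theorem~\ref{t:min_weight} by a polynomial-time reduction from 3-SAT. Given a 3-SAT formula $\phi$ with variables $x_1,\dots,x_n$ and clauses $C_1,\dots,C_m$, we build a patch of the (hexagonal, 3-colourable) colour code of size polynomial in $n+m$, a syndrome $\cS$ on it, and an integer $W$ such that $\phi$ is satisfiable if and only if some error-set of weight at most $W$ generates $\cS$. Any polynomial-time minimum weight decoder would then decide 3-SAT. We use the standard local picture of the colour code: a single $X$ error on a data qubit (a vertex) triggers the three faces around it, one of each colour. A minimum weight decoding can therefore be viewed as a minimum-size set of vertices whose odd-incidence pattern on faces equals the defect set.

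The construction uses three types of gadgets.

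\textbf{Wires.} A wire is a long corridor carrying a pair of defects at its ends, for instance two same-coloured faces joined by a string of same-colour faces. Such a pair can be explained by exactly two minimum-weight error strings, which carry a Boolean value (``true''/``false''). Every other explanation costs at least one extra error.

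\textbf{Variable gadgets.} Each variable is a loop, or a branching structure of wires, whose minimum-weight explanations come in exactly two globally consistent families. A locally inconsistent choice anywhere along the loop costs at least $+1$.

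\textbf{Clause gadgets.} Each clause is a junction where three literal wires meet, together with extra defects. The junction can be completed at some baseline cost $c$ exactly when at least one incoming wire arrives in the ``true'' state, and it costs at least $c+1$ when all three arrive ``false''.

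Crossings are needed because the code is planar. We either use a crossing gadget that transmits both values independently at fixed cost, or reduce instead from planar 3-SAT, which remains NP-complete, so that no crossings are required. I would take the planar 3-SAT route.

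We then set $W$ to the sum of the baseline costs of all gadgets. The easy direction is immediate: a satisfying assignment gives a decoding of weight exactly $W$. For the converse we must show that every decoding of weight at most $W$ is \emph{canonical}, meaning it decomposes into gadget-local pieces, each of which is at least its baseline, with equality only in the intended configurations. This is done by a charging or lower-bound argument. We partition the lattice into disjoint regions, one per gadget, with separating buffers, and prove a local lower bound in each region that depends only on the defects inside it and on the parity of error strings crossing its boundary. Summing these bounds gives $W$, with a strict increase whenever a wire is inconsistent or a clause is unsatisfied.

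The main obstacle is this local lower-bound step. Colour-code errors are not strings in a graph; they can branch at vertices, forming trivalent ``string-nets'' in which three colours of strings meet. These nets can open cheap shortcuts between distant defects that break the intended gadget behaviour. Matching-style arguments from the surface code do not directly apply. To handle this, I would first design the gadgets so that every defect pair within a wire shares a colour, and so that the separation between gadgets is large, say at least $4$ times the gadget diameter. Any net that connects different gadgets then costs more than the local savings it could produce. I would then verify each finite gadget's minimum-weight explanations, and the cost of every boundary-crossing pattern, by exhaustive computation, which is acceptable since gadgets have constant size. Finally, I would argue that the constructed instance and $W$ are computable in polynomial time.
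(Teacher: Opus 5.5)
Your overall strategy (a gadget reduction from 3-SAT to the decision version, from which Theorem~\ref{t:min_weight} follows immediately) matches the paper's. But the step that carries the whole proof, that every decoding of weight at most $W$ is canonical, is left as an unspecified charging argument, and your proposed fix does not work. Gadgets are joined by wires of polynomial length, so there is no constant-size region to check exhaustively, and spacing gadgets apart does nothing to stop error nets running along the wires that must connect them. The missing idea is to make the syndrome \emph{separated}: no two defects are adjacent in the dual lattice. Each error is a triangle, so it then contains at most one defect, and $|\cE|\ge|\cS|$ for every generating error-set; take $W=|\cS|$. In an exact cover every error contains exactly one defect and every defect lies in exactly one error. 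Optimality is therefore forced everywhere (no region can overspend so that another can underspend), and each gadget's behaviour reduces to a short forced case check. Interactions are confined to red link nodes, each at distance two from a unique partner, with all other inter-gadget distances at least four; TRUE just means ``matched to the partner''. Likewise a wire must be a dense chain of defects at mutual distance two with an even number of nodes. Two distant defects alone have many minimum-weight explanations and do not carry a bit.

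Second, your gadget specifications ignore colour-parity conservation. Each error component generates defects of the three colours with equal parity, so in any exact cover a gadget's error cluster (its defects plus the partners of its TRUE links) has equal red, green and blue parities. With single-colour wires, as you propose, each TRUE link adds one red defect. A clause junction of three such wires with no other outlet therefore has a fixed parity of TRUE inputs in all its exact covers, so it cannot be completed at baseline cost with both one and two TRUE inputs. The paper therefore gives the clause three output links carrying exactly one fewer TRUE than the inputs. These are routed, together with unused variable outputs, into a garbage-collection gadget: an XOR chain whose final output is absorbed by an optional defect chosen to make the total number of red defects even.

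The same argument rules out copying one single-colour wire into two, since all-TRUE and all-FALSE differ by three red partners. A single-colour loop with several outputs also does not force them to agree. The paper's variable gadget is instead built from RGB-duplicators, where a red, a green and a blue link meet so that the all-TRUE state is parity-neutral. These three-colour junctions, the string-nets you treat only as an obstacle, are exactly the resource the reduction needs; as the paper notes, single-colour gadgets are matching-decodable. Using planar 3-SAT instead of a crossing gadget is fine in principle, but you would still have to check that the global parity sink can be wired in without crossings.
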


The key step is the following lemma:
\begin{lemma}\label{l:given_weight}
In the colour code the decision problem of `is the  syndrome $\cS$
generated by a set of at most $k$ errors' is NP-complete.
\end{lemma}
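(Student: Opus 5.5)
Membership in NP is immediate: a certificate is an error set $\cE$ with $|\cE|\le k$, and we check in polynomial time that every check meets $\cE$ an odd number of times exactly when it is a defect in $\cS$. The substance is NP-hardness, which we will obtain by a polynomial reduction from 3-SAT. Given a formula $\phi$ with variables $x_1,\dots,x_n$ and clauses $C_1,\dots,C_m$, we will build a colour code patch on the hexagonal (3-colourable) lattice, a syndrome $\cS$, and an integer $k$ such that $\cS$ is generated by at most $k$ errors if and only if $\phi$ is satisfiable. We will use the standard viewpoint that in the colour code a single error on a data qubit (a vertex) triggers the three faces around it, one of each colour. Consequently, minimal error sets look like ``strings'' of errors along a path of vertices, which flip defects only at their endpoint faces, together with branching points where three strings of different colours meet.

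The construction is gadget-based, and we will fix the gadgets in the following order. First, a \emph{wire}: a corridor of the lattice containing a pair of defects whose cheapest explanation has exactly two optimal configurations of equal weight $w$. These two configurations play the roles of \emph{true} and \emph{false}, and any other explanation costs at least $w+1$. We will achieve the two-way degeneracy with a defect arrangement such as a defect at each end of a loop of even length, so that the two halves of the loop have equal length. Second, a \emph{variable gadget}, which couples the wires for all occurrences of $x_i$, with the negations appearing as the opposite state, so that the cheapest joint configuration forces consistency. Third, a \emph{clause gadget}: a small region containing a few defects that can be absorbed at cost $c$ if at least one incoming wire arrives in its ``true'' state (the string ends near the gadget and can combine with its defects via a colour-code branching triple), and that otherwise costs at least $c+1$. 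Fourth, \emph{crossings}: because the embedding of the formula is planar only after allowing wires to cross, we need a crossover gadget. Alternatively, we could reduce from planar 3-SAT, which is still NP-complete, and avoid crossings altogether. We will choose planar 3-SAT to simplify the argument. We then set $k=\sum(\text{wire costs})+\sum(\text{variable costs})+mc$.

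The correctness proof has two directions. If $\phi$ is satisfiable, we read off the configuration and exhibit an error set of weight exactly $k$. The converse is the main obstacle. We must show that \emph{every} error set of weight at most $k$ decomposes into per-gadget pieces, each achieving its local minimum, so that a consistent assignment satisfying every clause can be read off. To do this, we will space the gadgets far apart relative to their internal sizes and use a lower bound that charges each error to the nearest gadget. The tool for this is the fact that a set of defects inside a region whose boundary meets no defects requires at least a distance-based number of errors, together with a parity and colour argument: a string that leaves a gadget must travel to another one, and such long-range strings cost more than the slack available. The colour-code subtlety is that three strings of different colours can meet and annihilate at a vertex, which creates cheap ``branching'' explanations that have no analogue in matching. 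We will handle this by a local case analysis on the colours of the defects in each gadget, choosing those colours so that no unintended branching is ever cheaper than the intended configuration. We expect this local exhaustive verification, probably aided by a small computer check for each gadget, to be the hardest and most delicate step.
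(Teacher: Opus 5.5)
\paragraph{Verdict.} Your NP-membership argument is fine, and reducing from planar 3-SAT is a legitimate way to avoid a crossing gadget. The hardness part, however, is a template rather than a proof. Every gadget is specified only by the properties you want. The converse direction, which you rightly call the crux, rests on a charging argument that cannot work as described.

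\paragraph{Why the charging argument and the wire fail.} Communicating gadgets cannot be ``far apart'', because wires must carry state from variables to clauses. So all the difficulty sits at the interfaces. Distance-based lower bounds for colour-code defect sets are not tight there, precisely because of three-colour branching. Your one concrete design also fails. In the bulk you control only the defect set, so there is no loop or corridor. Two isolated defects are optimally explained by geodesic strings between them, typically many of them and never the two halves of a loop, so you do not get a two-state wire.

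\paragraph{The missing idea.} The paper uses only \emph{separated} syndromes, in which no two defects are adjacent in the dual lattice. Every error is a triangle, so it contains at most one defect, which gives $|\cE|\ge|\cS|$ for free. With $k=|\cS|$, any witness is an exact cover. In an exact cover every error contains exactly one defect and every defect lies in exactly one error, so the cover is optimal everywhere at once and nothing can be traded between regions. The paper adds two more ingredients:
\begin{itemize}
\item red link nodes whose only foreign defect within distance three is a partner at distance two, so a link is TRUE exactly when it is matched to its partner;
\item the invariant that each error component generates equally many red, green and blue defects mod 2.
\end{itemize}
With these, each gadget's behaviour becomes a short forced check. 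For example, the wire is simply an even-length path of red defects at mutual distance two, and its two ends must agree by parity.

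\paragraph{The clause gadget.} A clause gadget that costs the same whether one, two or three inputs are TRUE clashes with the parity invariant whenever a TRUE input adds one defect to the clause's error cluster. The one-TRUE and two-TRUE clusters then differ in the parity of exactly one colour, so they cannot both satisfy it. The paper therefore gives each clause output links, with exactly one fewer TRUE output than TRUE inputs. It routes these, together with all unused variable outputs, to a garbage-collection gadget: a chain of XOR subgadgets plus an optional defect that fixes the global red parity. Nothing in your plan plays this role. If you add it, its wiring must also be routed, so planar 3-SAT alone does not obviously remove crossings.

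\paragraph{The variable gadget.} Your variable gadget has no mechanism forcing consistency across occurrences. A one-colour splitter cannot provide this fan-out. Take an all-red gadget with three red links, $R$ red defects, and $t$ TRUE links. Its error cluster has $R+t$ red defects and no green or blue ones, so $R+t$ must be even, and ``all FALSE'' and ``all TRUE'' cannot both occur. This is exactly where the paper exploits the three-colour interaction rather than suppressing it. It uses the RGB-duplicator, in which red, green and blue strings meet and parity forces all three links to agree. These are chained into duplicators, and two duplicators are joined by an odd-length blue path to give the opposite-polarity output sets $A$ and $B$.
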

In both of these results we prove the NP-hardness with respect to the
diameter $D$ of the syndrome, where the diameter is defined to be the
maximum distance (length of a path in the dual lattice graph) between any two nodes of
the syndrome.  Since the number of nodes in the syndrome, the minimum
number of errors generating the syndrome, and the size of the smallest
code containing the syndrome are all $O(D^2)$, this shows that
decoding is also NP-hard with respect to any of these parameters.

Theorem~\ref{t:min_weight} follows immediately from
Lemma~\ref{l:given_weight} -- if a polynomial time algorithm can find a minimum
weight decoding it can obviously be used to answer the decision
problem in Lemma~\ref{l:given_weight} in polynomial time.

Note that Theorem~\ref{t:min_weight} and Lemma~\ref{l:given_weight}
both apply without reference to any `boundary' -- we can work in the
infinite lattice, or a large finite lattice which is large enough that
we can ensure that no error meets both a boundary and any defect in
our construction.

Our second result is about the logical effect of the decoding. The
point is that for quantum error correction we do not actually need to
\emph{find} the most likely error, we just need to know whether it
changes the logical state. In the colour code the
  $Z$-logical state can be taken to be the mod 2 sum of the
  $Z$-measurements on every data qubit -- in this setting the $Z$-logical is
  flipped if an odd number of $X$-errors have occurred. This means
  that the decoder just needs to decide on the parity of the minimum
  weight decoding of a given syndrome. It turns out that even this is NP-hard.

\begin{theorem}\label{t:logical}
  Given a syndrome for the colour code, deciding whether the minimum
  weight decoding flips the logical is NP-hard.
\end{theorem}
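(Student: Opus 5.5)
We prove Theorem~\ref{t:logical} by reducing the decision problem of Lemma~\ref{l:given_weight} to the parity question. The basic fact is that all error-sets generating a fixed syndrome $\cS$ differ by elements of the kernel of the check matrix, namely stabilizers and logicals. In the infinite lattice, or in a finite lattice whose boundary is far from the construction, these differences are sums of plaquettes. In the colour code every plaquette has even weight (hexagons have 6 qubits, and every face of a $4.8.8$ lattice has even size). So every error-set generating $\cS$ has the same parity $p(\cS)$, and $p(\cS)$ can be computed in polynomial time by taking any explicit error-set that generates $\cS$. The minimum weight $w(\cS)$ is therefore always congruent to $p(\cS)$ mod 2. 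For the question to be nontrivial, the reduction must place the construction inside a patch whose logical operator has odd weight, so that decodings differing by a logical have different parities. Equivalently, it must include a boundary region where the minimum weight decoding can choose between two homology classes.

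The plan is to take the NP-complete instance $(\cS,k)$ produced in the proof of Lemma~\ref{l:given_weight} (from 3-SAT) and embed it in a finite colour code patch with boundaries. We design the embedding so that there are two candidate families of decodings:
\begin{enumerate}[label=(\roman*)]
\item decodings that resolve the gadget syndrome $\cS$ internally, with total weight $w(\cS)+A$;
\item decodings that instead use a path to the boundary, representing the opposite logical class, with weight exactly $k+\tfrac12+A$ or $k+1+A$.
\end{enumerate}
The weight in case (ii) is tuned so that it beats case (i) precisely when $w(\cS)>k$. Concretely, we attach a gadget defect near a boundary at a carefully chosen distance $L$. The boundary-connected option costs a fixed, explicitly computable amount. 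Because the two options differ by a logical of odd weight, they have opposite parities. Ties are broken by adjusting $L$ by one, which we can do because $w(\cS)$ and $k$ are integers and the parities are known. Then the minimum weight decoding flips the logical if and only if $w(\cS)\le k$, or if and only if $w(\cS)>k$, depending on the orientation. This gives NP-hardness.

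The main obstacle is the decoupling step. We must ensure that the boundary option does not interact with the 3-SAT gadget. That is, the minimum weight decoding must be either the pure internal optimum or a fixed boundary-connected correction plus something of weight exactly $w(\cS)$ adjusted by a known constant. No hybrid decoding may exploit the boundary to shortcut inside the gadget. We would first try placing the gadget at distance much greater than its diameter $D$ from all boundaries, and adding a single auxiliary pair of defects: one adjacent to the gadget's designated output node, and one whose cheapest boundary connection is tunable. We would then argue with a cut or triangle-inequality bound that any decoding using the boundary pays at least the distance to the boundary. Beyond that distance, its restriction to the gadget is still a decoding of $\cS$ modified by the auxiliary defect. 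Making this exact, rather than off by a constant we cannot control, is where the detailed structure of the Lemma~\ref{l:given_weight} gadgets would be needed.
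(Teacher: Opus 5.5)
You have the framing right. Away from the boundary every decoding of a syndrome has the same parity, so the reduction must force the minimum weight decoding to choose between two logical classes of opposite parity, with the choice decided by the hard instance. The gap is that you never produce an option whose cost is known in advance, and supplying that option is the substance of the proof.

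Every decoding, in either class, must still explain every defect of $\cS$. A string to the boundary can only absorb defects near it. So a class-(ii) decoding, restricted to the neighbourhood of $\cS$, is again a decoding of $\cS$ perturbed by your auxiliary defects, and its cost is governed by the same 3-SAT instance. Your decoupling paragraph in fact describes option (ii) as `a fixed boundary-connected correction plus something of weight exactly $w(\cS)$ adjusted by a known constant'. But then (i) and (ii) differ by a constant independent of $w(\cS)$, and the comparison cannot detect whether $w(\cS)\le k$. So your claim in (ii) that this option costs exactly $k+1+A$ is unsupported; note also that $k+\tfrac12+A$ is not a possible weight.

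Nor does the syndrome of Lemma~\ref{l:given_weight} have a `designated output node' whose state governs satisfiability. An unsatisfiable formula fails at different clauses under different assignments, so no single local auxiliary pair can make the rest of $\cS$ exactly coverable.

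The missing idea is to build this switch into the formula itself. The paper replaces $F$ by $F'$, obtained by adding a fresh variable $Y$ to every clause and splitting each 4-clause back into two 3-clauses with a new variable $Z_i$. Then $F'$ is satisfiable with $Y$ TRUE unconditionally, and with $Y$ FALSE if and only if $F$ is satisfiable.

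Two outputs of $Y$ are wired to a logical gadget. This gadget is the common syndrome of the two pieces, of weights $\lfloor d/2\rfloor$ and $\lceil d/2\rceil$, of a weight-$d$ logical, with its red wire opened into two link nodes. It can be covered with one error more than its number of defects only when its links are FALSE, and it needs one further error when they are TRUE. The two covers have opposite logical effect.

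Hence the minimum weight decoding of the combined syndrome $\cS_{\mathrm{tot}}$ has $|\cS_{\mathrm{tot}}|+1$ errors if $F$ is satisfiable (with $Y$ FALSE) and $|\cS_{\mathrm{tot}}|+2$ errors otherwise (with $Y$ TRUE). These have opposite parity. The $Y$-TRUE branch is exactly the known-cost alternative your option (ii) needs; without an analogue of it your construction does not go through.
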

  We prove Theorem~\ref{t:logical} in Section~\ref{s:logical} but the idea is the following.  We take a
  weight $d$ ($d$ odd = distance of code) logical operator and
  partition it into two pieces of weight $\lfloor d/2\rfloor$ and
  $\lceil d/2\rceil$ which necessarily have opposite effects on the
  logical but the same syndrome. We then combine this syndrome with
  the syndrome given by Lemma~\ref{l:given_weight} in such a way that,
  which one of these two pieces of the logical operator occurs in the
  minimum weight correction for the whole syndrome depends on the
  exact minimum weight decoding of the part of the syndrome given by
  Lemma~\ref{l:given_weight}. Therefore, deciding whether the minimum
  weight decoding flips the logical requires solving the decision
  problem in Lemma~\ref{l:given_weight}, which we have shown is
  NP-hard.

\section{Colour Code}\label{s:colour-code}

There are many varieties of colour code both in
two-dimensions and in higher dimensions~\cite{bombinTopologicalQuantumDistillation2006, bombinTransversalGatesError2018, kubicaUniversalTransversalGates2015, bombinGaugeColorCodes2015}. In this paper we focus on 
two-dimensional colour codes and, in
particular, the hexagonal code (defined below) as it is by far the most popular, and the proofs are
cleaner in this case.  We expect our results to generalise  to other
two-dimensional colour codes, and some higher dimensional colour
codes -- see Section~\ref{s:other-colour-codes}. 

The hexagonal colour code, shown in Figure~\ref{fig:bare-lattice}, can be defined as a stabiliser code as
follows. Start with a hexagonal lattice where each vertex is a
data-qubit and each face is both a weight-6 $X$ check and a weight-6
$Z$ check.  Since faces are either disjoint, identical, or share an
edge (2 qubits) these all commute. The faces can be three coloured
(conventionally red/green/blue) so that no two adjacent faces are the
same colour. An $X$ error on a data qubit triggers three $Z$-checks
(unless it is near one of the boundaries), one of each colour;
similarly a $Z$ error triggers three $X$-checks.

So far we have defined a code on the infinite lattice. To form a
finite code we restrict to a bounded region, typically a triangular or
a hexagonal region. When we do this, some checks near the
boundary only contain four data qubits -- see
Figure~\ref{fig:bare-lattice} -- and errors on the boundary will
meet two checks rather than three. (At the corners checks may
contain even fewer data qubits, and errors may only meet one check.)
For most of the paper we will work in the `lattice bulk' which can be
viewed as a portion of the infinite lattice, or of a sufficiently large
finite code, where our construction does not get close to the boundary.

For decoding we work on the dual lattice. The dual lattice has a node
corresponding to each face of the original lattice, and two nodes are
joined in the dual lattice if their corresponding faces in the
original lattice share an edge. The three-colouring of the faces of
the original lattice induces a three-colouring of the nodes of the
dual lattice.  In the dual lattice the nodes correspond to checks and
the faces correspond to data qubits. See Figures~\ref{fig:2-errors}
and~\ref{fig:3-errors} for some examples. Since each face of the dual
lattice is a triangle, we see that (as expected) each error triggers
three checks. In this paper we work entirely in the dual lattice: we
are given a syndrome (i.e., a set of defects), which is just a set of
nodes in the dual lattice. We are looking for the smallest number of
errors (i.e., faces) that generates the syndrome.

\begin{figure}
  \begin{subfigure}[t]{\columnwidth}
  \begin{subfigure}[t]{1\columnwidth}
    \begin{adjustbox}{trim=1 0 22 0, clip}
    \includegraphics[width=\columnwidth]{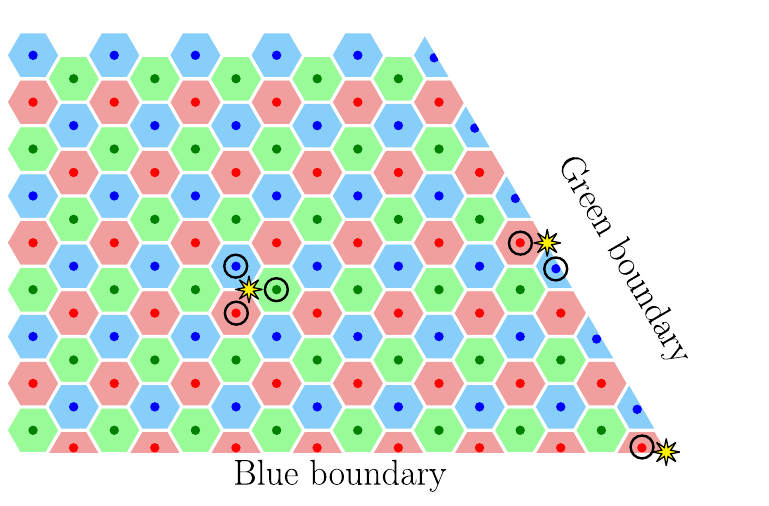}%
    \end{adjustbox}
    \caption{\label{fig:bare-lattice}}
      \begin{subfigure}[t]{0.45\columnwidth}
    \includegraphics[width=\columnwidth]{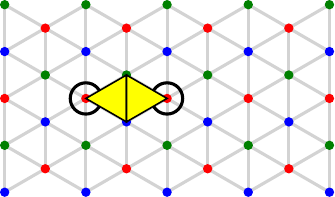}%
    \caption{\label{fig:2-errors}}
  \end{subfigure}
  \qquad 
  \begin{subfigure}[t]{0.45\columnwidth}
    \includegraphics[width=\columnwidth]{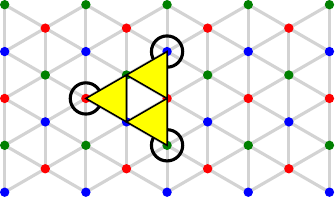}%
    \caption{\label{fig:3-errors}}
  \end{subfigure}
  \end{subfigure}
  \end{subfigure}
  \caption{\textbf{The colour code and its error model}. \textbf{(a)}
    The $6.6.6$ colour code which can be defined with data qubits on
    the vertices of a hexagonal lattice and checks on the coloured
    faces.  A Pauli error on a bulk data qubit (the yellow star in the
    centre of the picture) anti-commutes with all three stabiliser
    checks acting on that qubit, triggering the three checks ringed in
    black. However, if the data qubit lies on a boundary, then only the two checks it participates in are
    triggered (the yellow star on the middle right); and if it lies at a corner, it triggers the single check that it is contained in (the yellow star in the bottom right). We have not shown the other boundary (the red boundary) as our constructions will need rather large codes.  The coloured nodes in
    the centres of the hexagons are the vertices of the dual lattice,
    which is a representation of the colour code's error model where
    the nodes correspond to checks and the data qubit errors
    correspond to faces. \textbf{(b)} and \textbf{(c)} show errors (the yellow triangles) on the dual lattice of the colour code.  The check nodes fire (ringed in black) if they meet an
    odd number of errors so these errors partially cancel leaving the
    defect patterns shown.}
\end{figure}

\section{Outline of the proof}\label{s:proof-outline}

\subsection{Exact Covers and Separated Syndromes}\label{s:exact-cover}
We defer the formal definition of 3-SAT to Section~\ref{ss:3-sat}. For
the moment all we need is that 3-SAT is the question of deciding
whether a boolean formula~$F$ is satisfiable. Given any 3-SAT formula
$F$, we are going to construct a syndrome~$\cS$ in the colour code
with two properties.
\begin{itemize}
\item The number of errors in any error-set $\cE$ generating $\cS$ is
  at least the number of defects (i.e., $|\cE|\ge |\cS|$).
\item There exists an error set $\cE$ generating $\cS$ with $|\cE|=
  |\cS|$ if and only if the formula $F$ is satisfiable.
\end{itemize}

Suppose that we can do this. Then any algorithm that can answer the
question `is there an error set $\cE$ generating $\cS$ with $|\cE|\le
|\cS|$?' can answer the question `is the 3-SAT formula $F$
satisfiable?' In other words any algorithm that can solve the decision
problem in Lemma~\ref{l:given_weight} can be used to solve
3-SAT. Since 3-SAT is NP-complete this would prove
Lemma~\ref{l:given_weight} and thus Theorem~\ref{t:min_weight} showing
that the minimum weight decoding problem is NP-hard.

We make two definitions corresponding to the two properties
above.
\begin{defn}
  A syndrome is \emph{separated} if it does not contain any pair of
  defects at distance one (measured on the dual lattice graph) from each other.
\end{defn}
\begin{defn}
  An error-set $\cE$ is an \emph{exact cover} of a syndrome $\cS$ if
  the number of errors in $\cE$ is equal to the number of defects in $\cS$
   (i.e., $|\cE|= |\cS|$) and $\cE$ generates the syndrome
  $\cS$.
\end{defn}

Observe that, in any separated syndrome, each error contains at most
one defect. Since each defect must be contained in at least one error,
we see that we must have $|\cE|\ge |\cS|$.

In fact, this argument tells us more: in any exact cover of a separated
syndrome every defect must be contained in exactly one error, and
every error must contain exactly one defect.  These two properties
mean that an exact cover of a separated syndrome is very constrained
-- a large amount of what occurs is forced. Moreover, it is `locally'
constrained -- in an exact cover of a separated syndrome we cannot
trade off 
doing `worse' (using more errors)
in one region and make up for it by 
doing `better' (using fewer errors)
elsewhere -- the exact cover must be optimal everywhere. This
makes reasoning about the cover, and therefore the proof, much simpler
than in a more general setting. Let us give an example.  Fix a
particular error in an exact cover; one node is a defect, but the
other two are not, so must be cancelled by other
errors. Figure~\ref{fig:2-errors} and~\ref{fig:3-errors} show two
examples of how this can happen. Whilst those are not the only ways
the cancellation can occur, in many of the cases we construct it will
be easy to see that these are the only two options -- in fact in the
majority of the cases nodes will be `paired' or \emph{matched} as in
Figure~\ref{fig:2-errors}.

Finally,  there is a parity constraint on the syndrome. Given an
error-set that does not meet the boundary, since each error contains
one node of each colour, the parity of the number of
defects of each colour must be the same -- they must all be odd or all
even. It will be useful to have a `local' version of the same idea.
Given an error-set define an error component to be a connected set of
errors (where two errors are connected if they share a vertex in the
dual lattice). 
We see that the parity of the number of defects of each
colour generated by this error component must be the same. For
example, in~\ref{fig:2-errors} there are two blue defects and no red
or green defects (even parity case), whereas in~\ref{fig:3-errors}
there is one defect of each colour (odd parity case). 

We remark that, whilst asking whether a separated syndrome has an
exact cover is a very special case of the decoding problem, in
complexity theory to establish NP-hardness it suffices to show that
\emph{some} problem instances are hard to solve.  Indeed, although many
instances of 3-SAT are easy to solve,  it is still NP-complete --
there exist instances that are hard to solve. 

\begin{figure}
    \includegraphics[width=\columnwidth]{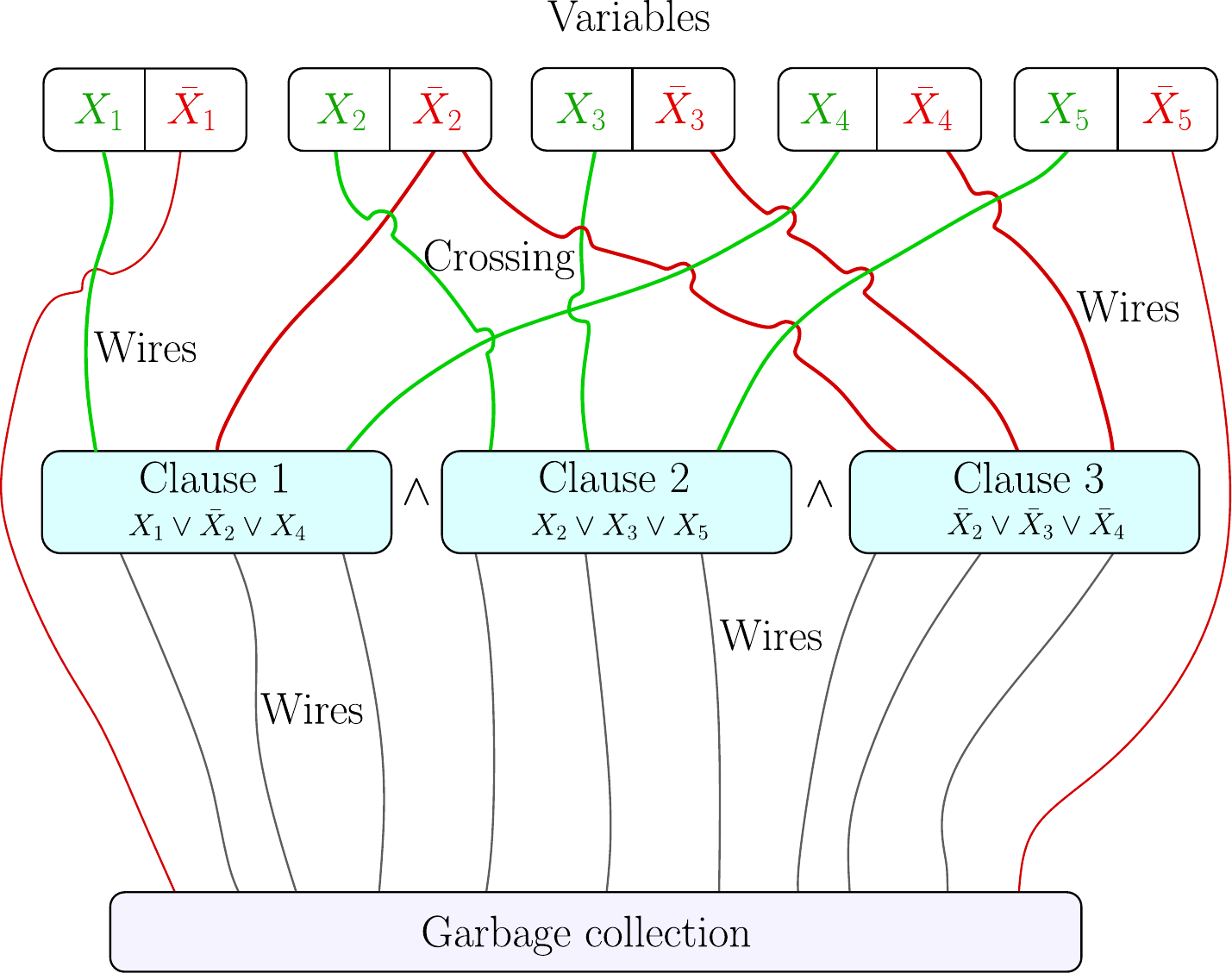}%
    \caption{\label{fig:construction} \textbf{Outline of the syndrome construction (see
      Section~\ref{s:proof-outline}).} In order to prove the NP-hardness of decoding in the colour code, we construct a specific syndrome that represents a 3-SAT problem and is, therefore, hard to decode. This figure illustrates the construction for the formula $(X_1\vee
      \bar{X}_2\vee X_4)\wedge (X_2\vee X_3\vee X_5)\wedge
      (\bar{X}_2\vee\bar{X}_3\vee \bar{X}_4)$. The boxes represent the main
      gadgets (which are collections of defects) -- for simplicity we
      have used lines to represent the wires.
      If the constructed syndrome can be covered exactly (see
      Section~\ref{s:exact-cover}), then each variable gadget must be
      in the TRUE or FALSE state, and each clause gadget must have at
      least one TRUE input, which means that the assignment is a
      satisfying assignment for the 3-SAT formula. The converse also
      holds: if the formula is satisfiable then putting the variable
      gadgets in the corresponding state allows us to exactly cover
      all the variable gadgets, wires and clause gadgets (since a
      clause gadget can be exactly covered providing at least one of
      its inputs is TRUE).
      Finally, we use the technical properties of the garbage
      collection gadget to complete the cover to an exact cover of the
      whole syndrome. }
\end{figure}

\begin{figure*}[t]
  \begin{subfigure}{0.3\textwidth}
  \includegraphics[width=\columnwidth]{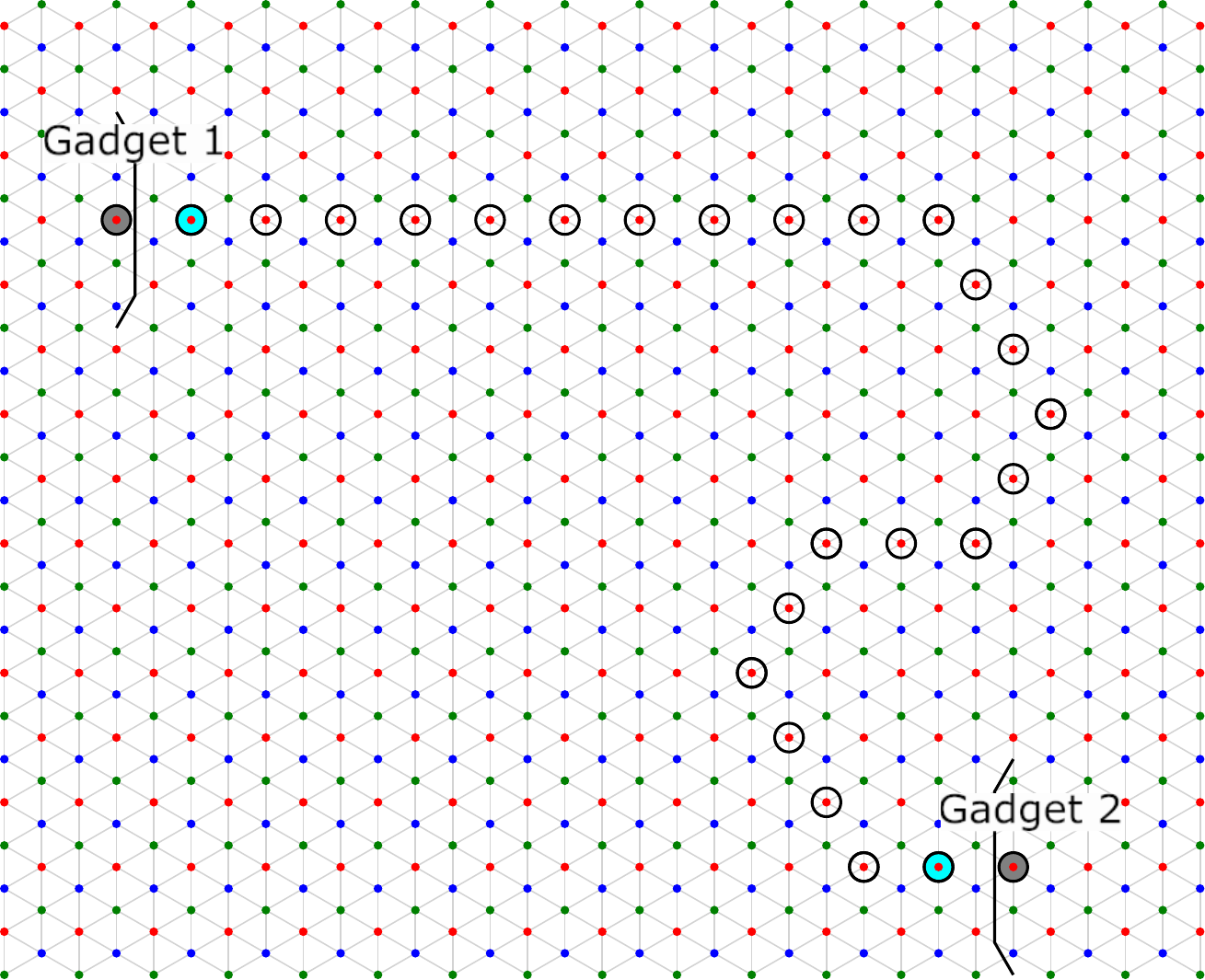}
      \caption{\centering\label{fig:wire-empty}Empty gadget}
  \end{subfigure}
  \hspace{0.03\columnwidth}
  \begin{subfigure}{0.3\textwidth}
  \includegraphics[width=\columnwidth]{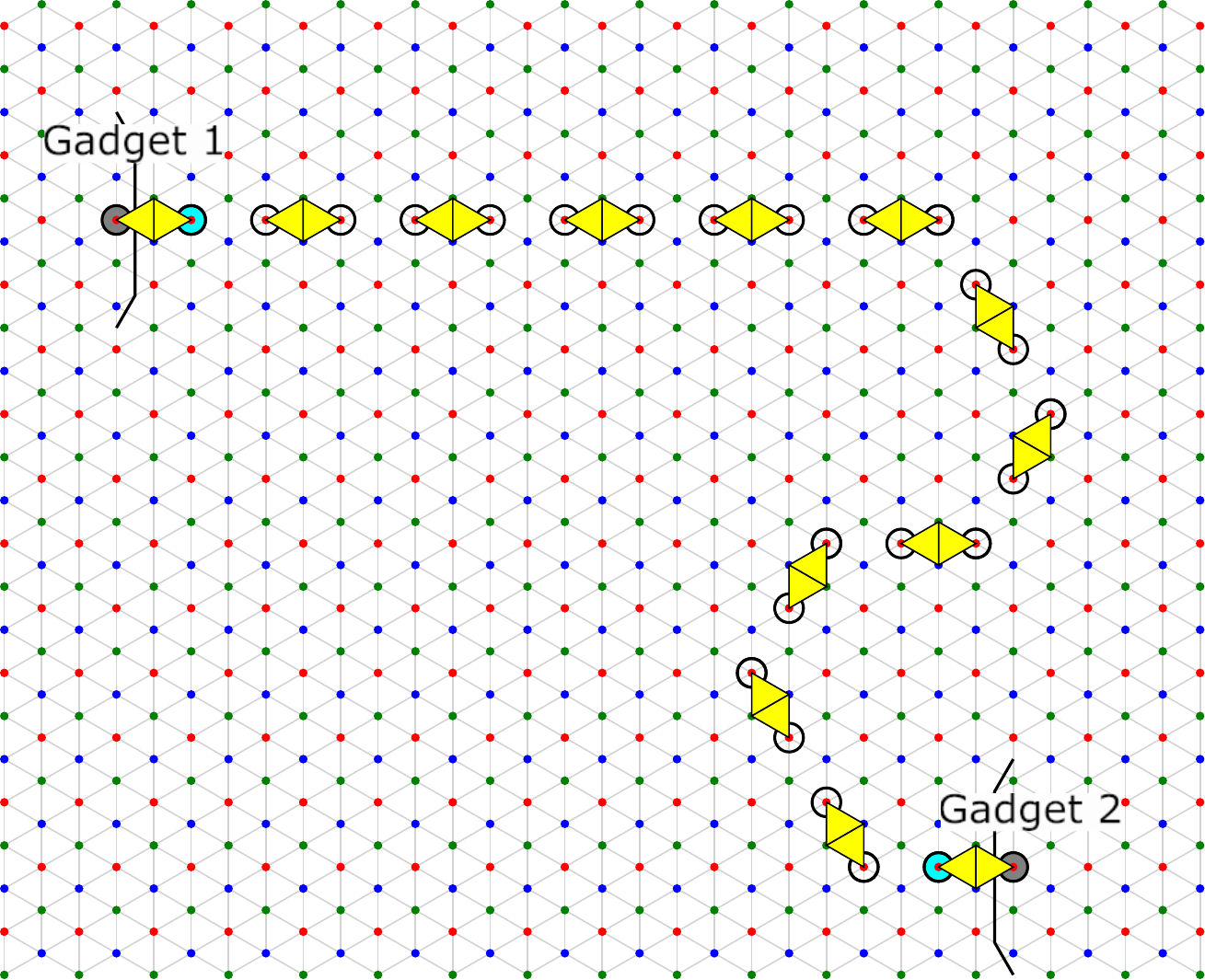}
      \caption{\centering\label{fig:wire-true}Both link nodes TRUE}
  \end{subfigure}
  \hspace{0.03\columnwidth}
  \begin{subfigure}{0.3\textwidth}
  \includegraphics[width=\columnwidth]{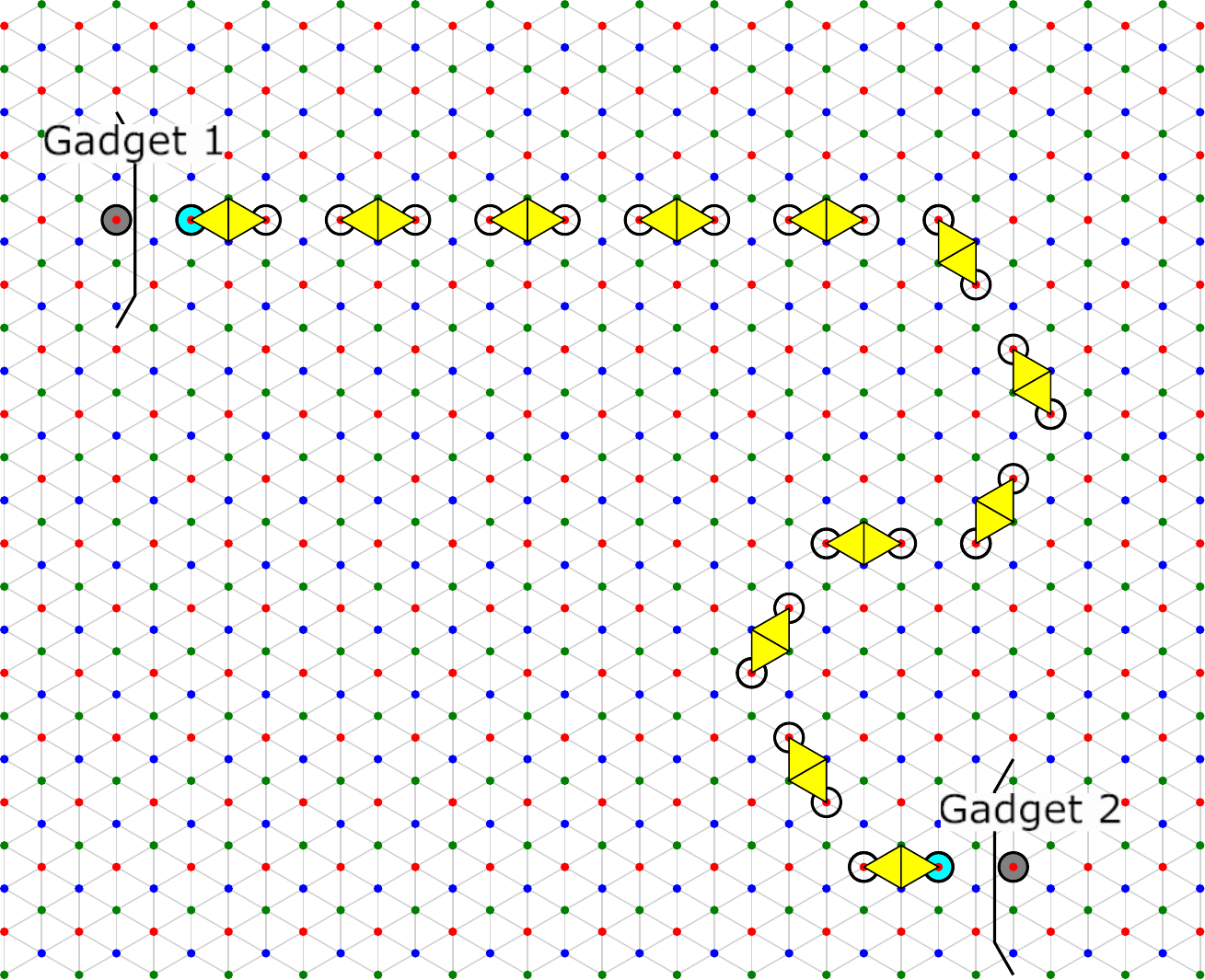}
      \caption{\centering\label{fig:wire-false}Both link nodes FALSE}
  \end{subfigure}
  \caption{\label{fig:wire-whole} \textbf{An example of the wire
      gadget}. \textbf{(a)} The nodes circled in black are defects; the cyan discs
    are the wire gadget's link nodes through which the gadget is
    connected to other gadgets; the grey discs are the partner
    link nodes in the neighbouring gadgets. The key point is that a wire
    has a `path like' structure, and has an even number of nodes. We
    see that the state of the link node in the top left is copied to
    the state of the link node in the bottom right -- TRUE to TRUE and
    FALSE to FALSE in \textbf{(b)} and \textbf{(c)} respectively.}
\end{figure*}

\subsection{3-SAT}\label{ss:3-sat}
To prove Lemma~\ref{l:given_weight} we show that any algorithm that
can solve this problem can also solve the NP-complete problem
3-SAT. For the reader's convenience, and to set notation, we recall
the definition here.

  Let $X_1,X_2,\dots, X_n$ be boolean variables each taking a value
  TRUE or FALSE.  A \emph{3-SAT formula} is an `AND' (conjunction) of
  clauses where each clause is an `OR' (disjunction) of three literals
  (a literal is a variable $X_i$ or its negation $\bar{X_i}$). For
  example
      \[(X_1  \vee \bar X_4 \vee X_6)\wedge(\bar X_3 \vee X_4 \vee \bar X_7)\wedge(\bar X_2 \vee \bar X_3 \vee \bar X_7)    \] is a 3-SAT formula with three clauses.
     The 3-SAT problem is to decide whether there is an assignment of
     TRUE/FALSE to the variables $X_i$ such that a given 3-SAT formula
     is true. 
     
     Intuitively, the reason 3-SAT is hard (while 2-SAT is not) is that, setting a variable to TRUE or FALSE does not immediately constrain any other variables, so the search space is exponential.

\subsection{Construction of Syndrome}

We have several `gadgets' that we use to construct the syndrome
$\cS$. Each gadget is a collection of defects. These gadgets provide a
simple algorithmic way to turn a 3-SAT formula into a syndrome. The
gadgets include `variable gadgets' (one for each variable), and
`clause gadgets' (one for each clause), but there are several
others. For an overview of how these gadgets are put together see
Figure~\ref{fig:construction}. 

Once we have given this construction we will prove that the
constructed syndrome has an exact cover if and only if the 3-SAT
formula is satisfiable.

We will insist that the gadgets are spaced out in such way that we can
control their interaction. When defining each gadget we will designate
a small number of its defects as \emph{link nodes}. The idea is that
these are the only places where the defects from different gadgets are
close enough to interact.  We will insist that defects in different
gadgets are at distance at least four \emph{unless} both nodes are
link nodes. Further, we will also insist that all link nodes are red,
that each link node has exactly one node not in its own gadget at
distance less than four (which must be a link node in another gadget),
and that node is at distance two. We call this node the link node's
\emph{partner}. This means that each link node is either matched to
its partner (their errors share two nodes), or their errors share no
nodes (in which case those nodes must be cancelled by errors inside
the gadget).  We will say a link node is in the \emph{TRUE} state if
it is matched to its partner and \emph{FALSE} otherwise.

In the remainder of this section we outline the properties of each of
our gadgets, but we defer the actual construction of the gadgets, and
the proof that they have the required properties, until
later. However, we do give references to the figures showing each
gadget and, in some cases, the subgadgets used to construct the
gadget. The notation for the figures is as follows: circled nodes
represent defects; circled nodes in cyan are link nodes in the gadget;
circled nodes in grey are the partner link nodes (not in the
gadget). Yellow triangles represent the errors. 

\smallskip
  \textbf{The wire gadget.} The wire gadget has two link nodes. The key property is that, in any
  exact cover, these two link nodes are in the same state: they are
  either both TRUE or both FALSE. Wires allow us to `transmit'
  information around our network. The gadget is shown in
  Figure~\ref{fig:wire-whole}.

     \begin{figure*}
  \begin{subfigure}{0.4\textwidth}
  \includegraphics[width=\textwidth]{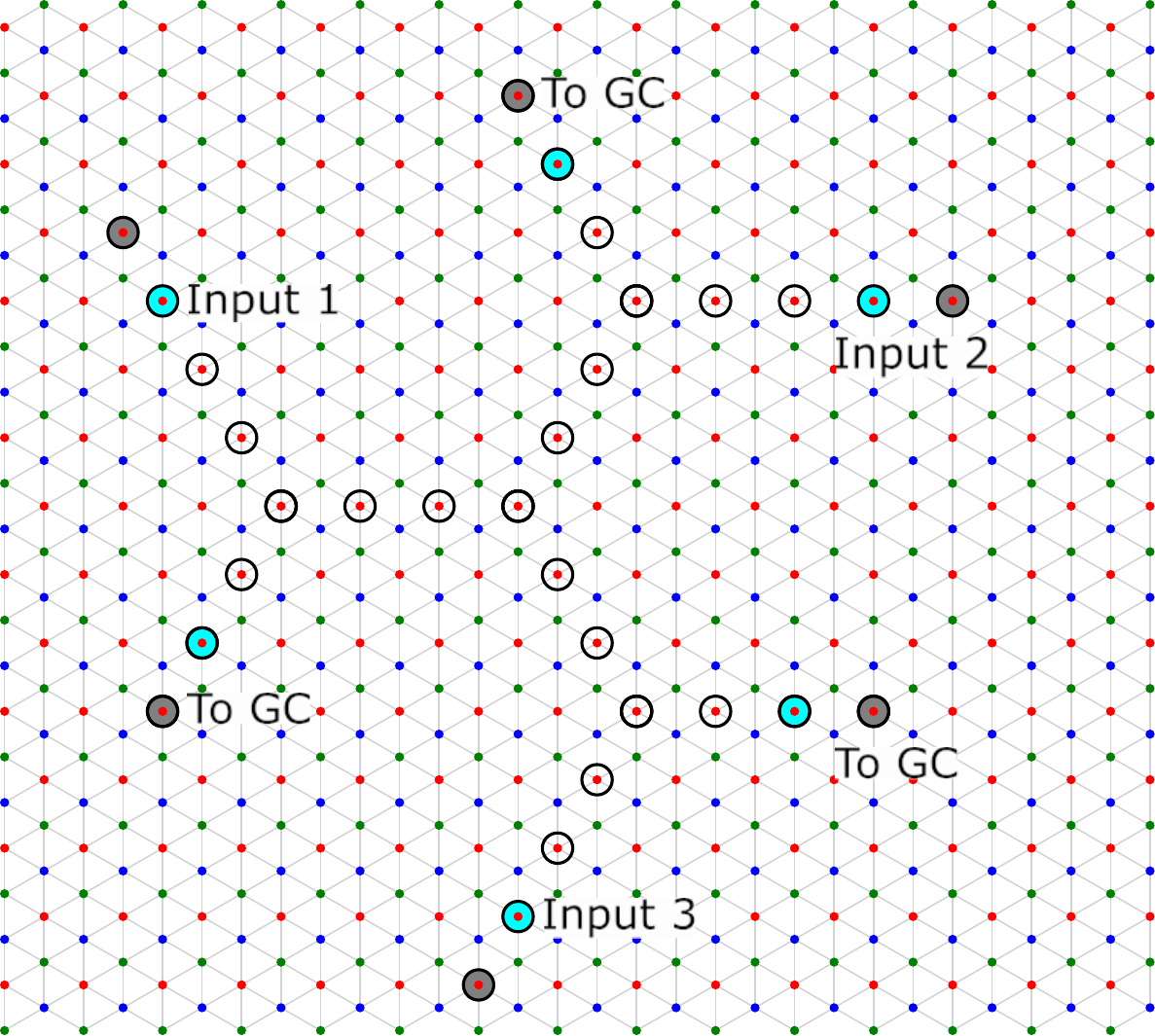}
      \caption{\centering\label{fig:clause-empty}Empty clause gadget}
  \end{subfigure}
  \hspace{0.03\textwidth}
  \begin{subfigure}{0.4\textwidth}
  \includegraphics[width=\textwidth]{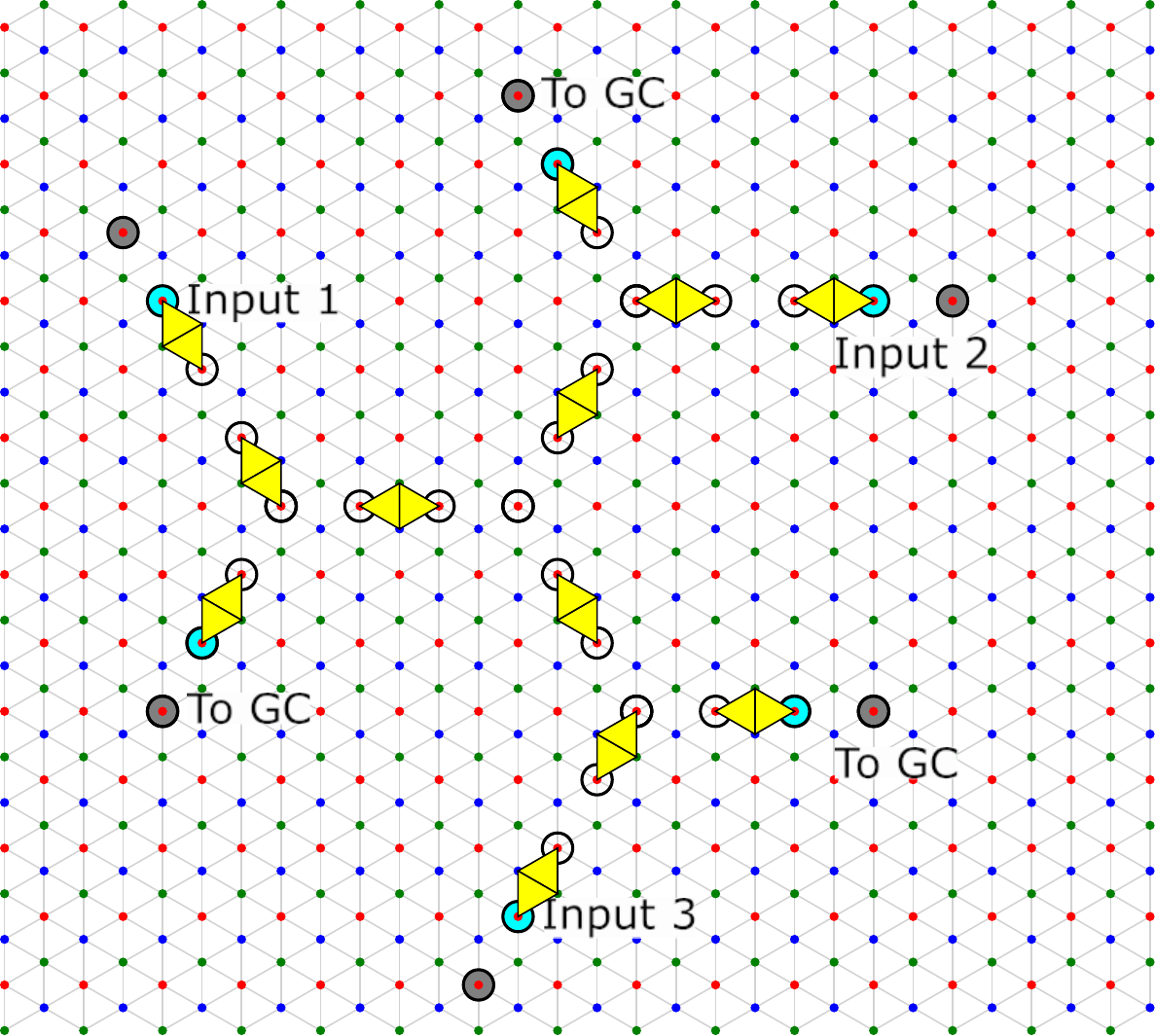}
      \caption{\centering\label{fig:clause-all-false}All inputs to the clause are FALSE}
  \end{subfigure}
  \hspace{0.03\textwidth}
  \begin{subfigure}{0.3\textwidth}
  \includegraphics[width=\textwidth]{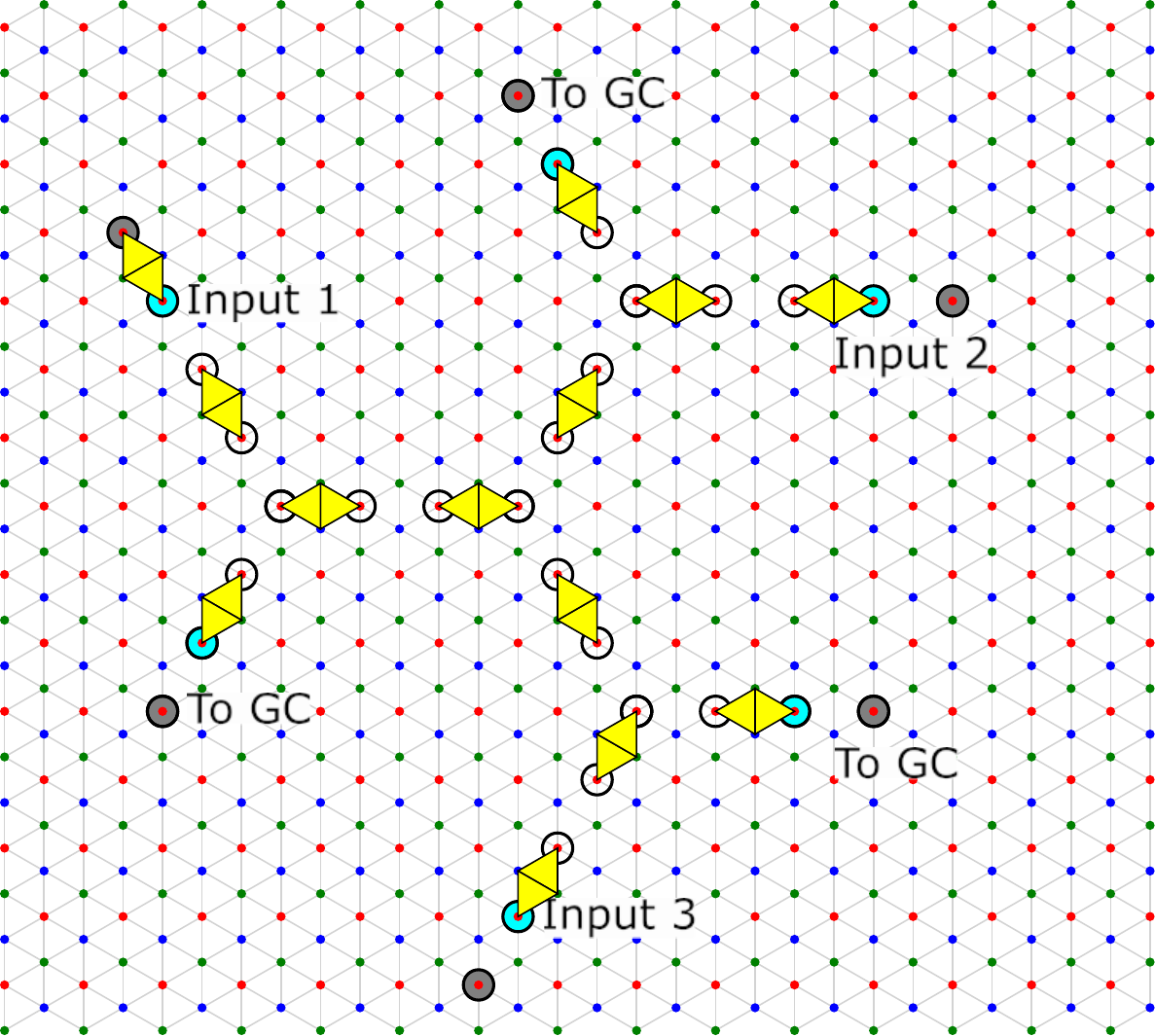}
      \caption{\centering\label{fig:clause-1-true}One input is TRUE}
  \end{subfigure}
    \hspace{0.03\textwidth}
  \begin{subfigure}{0.3\textwidth}
  \includegraphics[width=\textwidth]{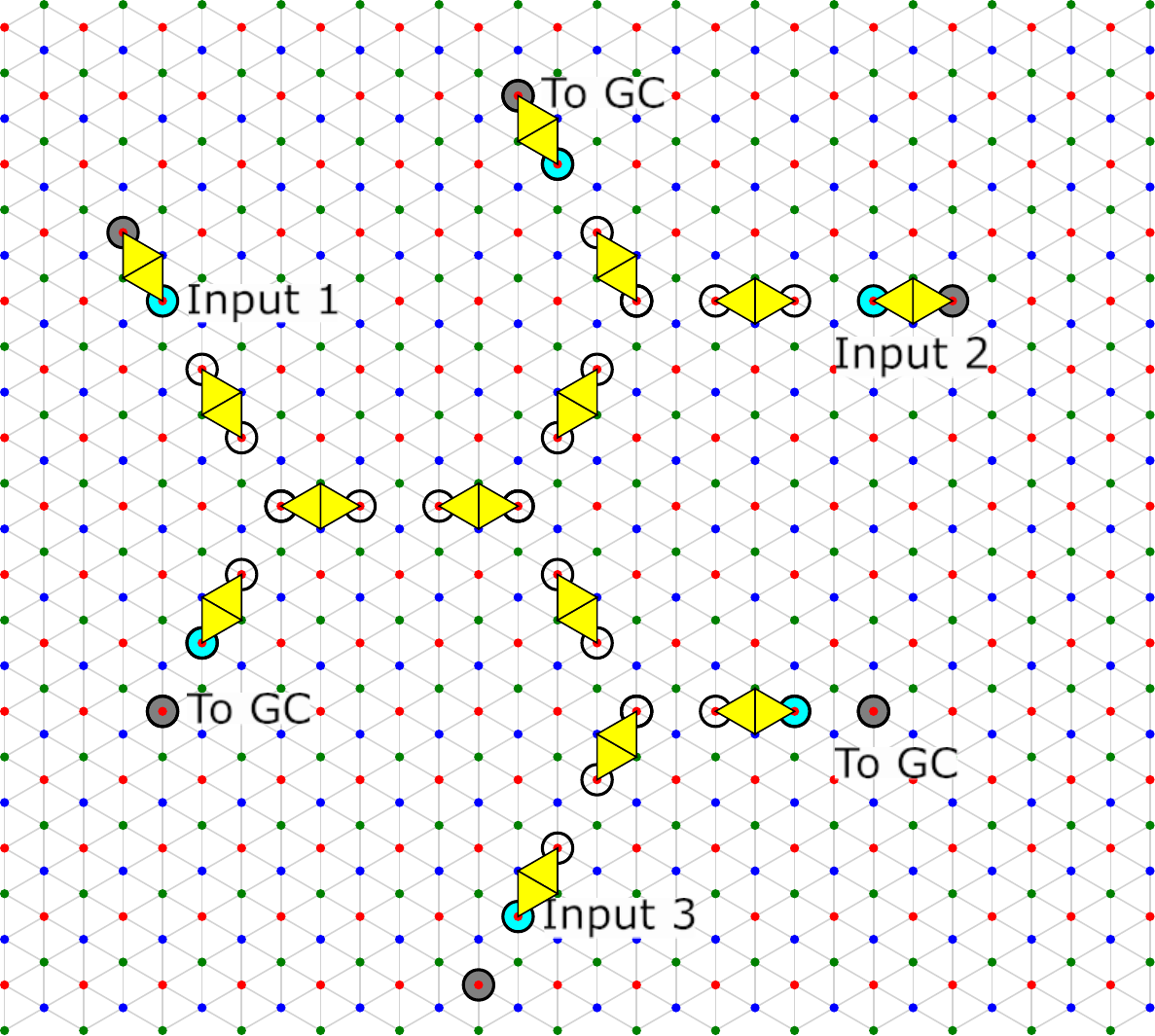}
      \caption{\centering\label{fig:clause-2-true}Two inputs are TRUE}
  \end{subfigure}
  \hspace{0.03\textwidth}
  \begin{subfigure}{0.3\textwidth}
  \includegraphics[width=\textwidth]{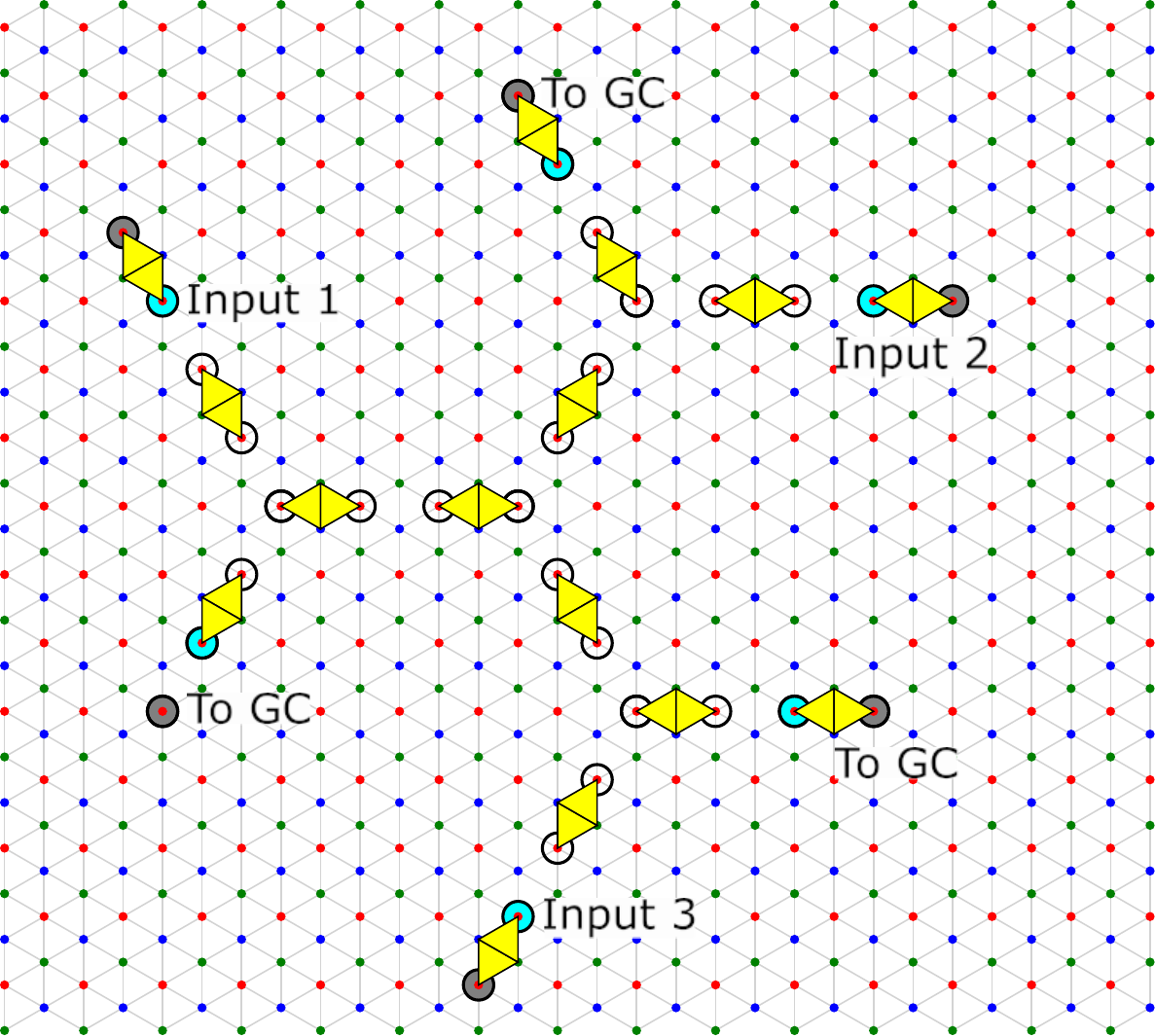}
      \caption{\centering\label{fig:clause-3-true}Three inputs are TRUE}
  \end{subfigure}
  \caption{\label{fig:clause-whole} \textbf{The clause
      gadget}. \textbf{(a)} To model 3-SAT, we want the clause gadget to have
    an exact cover if and only if at least one input is TRUE (the link
    node matched to its partner). The gadget has three input links and three links that are connected to the garbage collection gadget (GC). In \textbf{(b)} we see that the
    gadget cannot be covered exactly if all three inputs are FALSE --
    the errors show what is forced by the three inputs being
    FALSE, and we see the central node is uncovered. In contrast,
    \textbf{(c)}, \textbf{(d)} and \textbf{(e)} show that the gadget
    can be covered exactly if one, two or three inputs are TRUE. Since the gadget is rotationally symmetric, the other cases are just rotations of the cover shown in each figure.}
\end{figure*}

  \smallskip
  \textbf{The clause gadget.} The clause gadget has three `input' link
  nodes.  It has an exact cover if and only if at least one input is
  in the TRUE state.  In order to ensure the existence of an exact
  cover in each of the cases when one, two or three inputs are TRUE, we
  add three output link nodes (one corresponding to each input link
  node) to the gadget, then connect these to wires which, in turn, link
  to the garbage collection gadget which will `clean up' these
  cases. The gadget is shown in Figure~\ref{fig:clause-whole}.

  \smallskip
  \textbf{The variable gadget.}
 The variable gadget has two possible exact covers which we call TRUE
 and FALSE.  It has several `output' link nodes (we can choose how many),
 divided into two sets $A$ and $B$.
 \begin{itemize}
  \item If the gadget is in the TRUE state all outputs in $A$ are TRUE, and all outputs in $B$ are FALSE.
  \item If the gadget is in the FALSE state all outputs in $A$ are FALSE, and all outputs in $B$ are TRUE.
 \end{itemize}
  Suppose the gadget corresponds to the variable $X_i$. Then the
  outputs in $A$ have state $X_i$ and the outputs in $B$ have state
  $\bar X_i$. This gadget is relatively complicated and is shown in Figure~\ref{fig:variable-whole}.

\begin{figure*}
    \def\figbasename{rgb-duplicator}
  \begin{subfigure}{0.3\textwidth}
  \includegraphics[width=\textwidth]{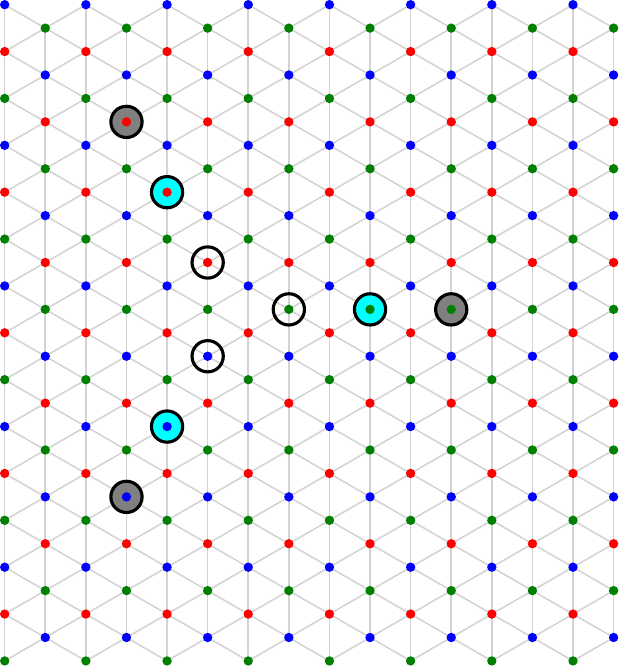}
      \caption{\centering\label{fig:rgb-subgadget-empty}Empty RGB-duplicator subgadget}
  \end{subfigure}
  \hspace{0.03\textwidth}
  \begin{subfigure}{0.3\textwidth}
  \includegraphics[width=\textwidth]{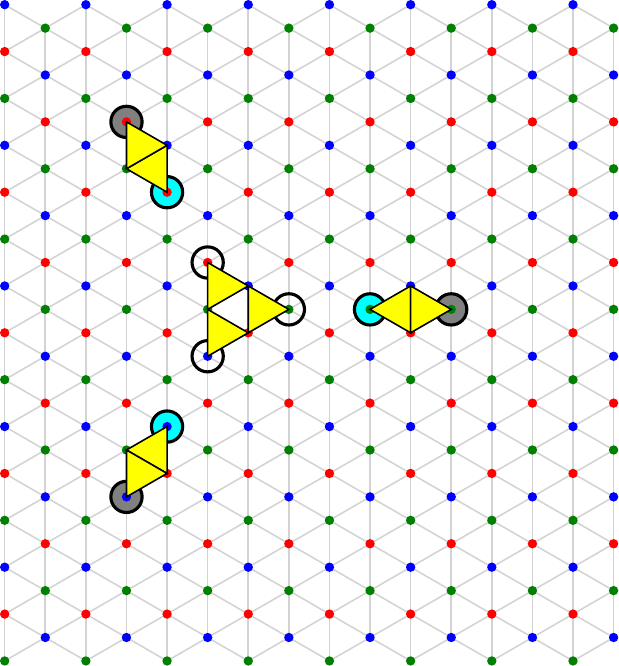}
      \caption{\centering\label{fig:rgb-subgadget-true}Link nodes TRUE}
  \end{subfigure}
  \hspace{0.03\textwidth}
  \begin{subfigure}{0.3\textwidth}
  \includegraphics[width=\textwidth]{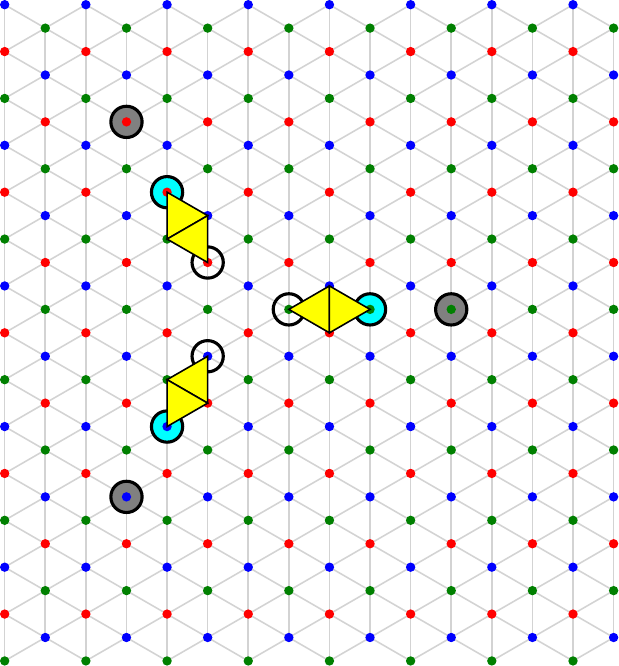}
      \caption{\centering\label{fig:rgb-subgadget-false}Link nodes FALSE}
  \end{subfigure}
    \def\figbasename{double-duplicator}
  \begin{subfigure}{0.3\textwidth}
  \includegraphics[width=\textwidth]{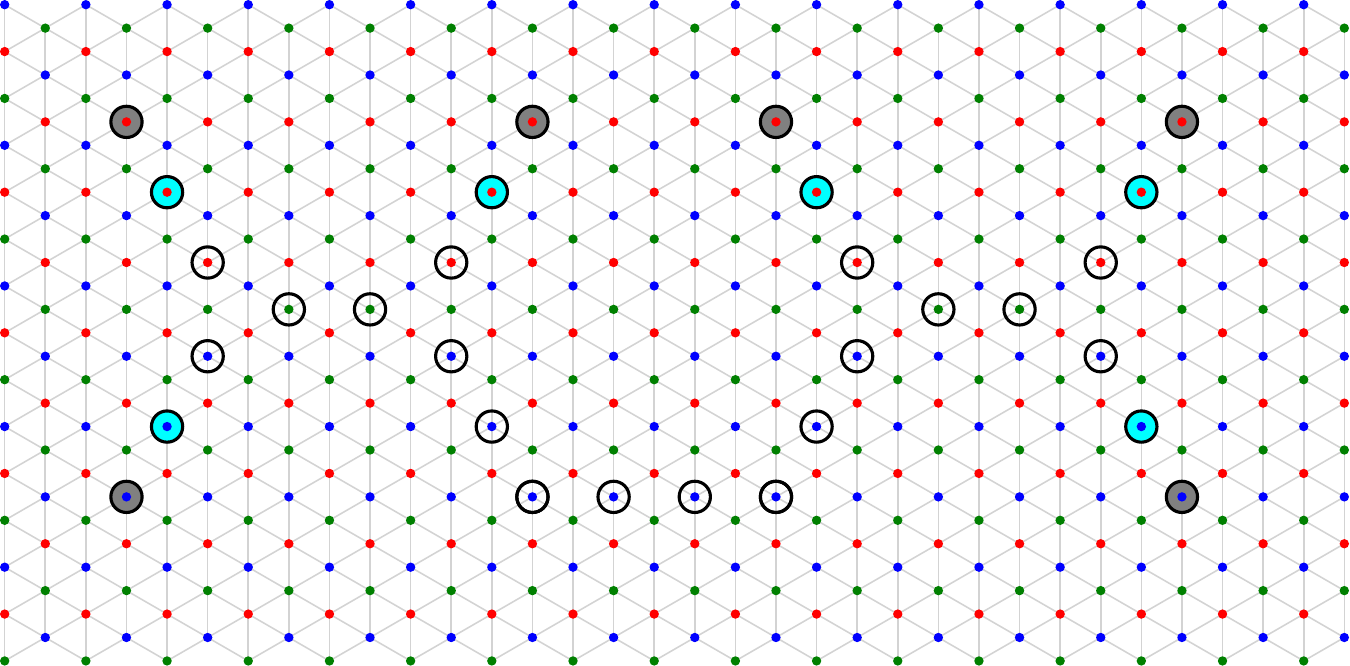}
      \caption{\centering\label{fig:double-duplicator-empty}Empty duplicator subgadget}
  \end{subfigure}
  \hspace{0.03\textwidth}
  \begin{subfigure}{0.3\textwidth}
  \includegraphics[width=\textwidth]{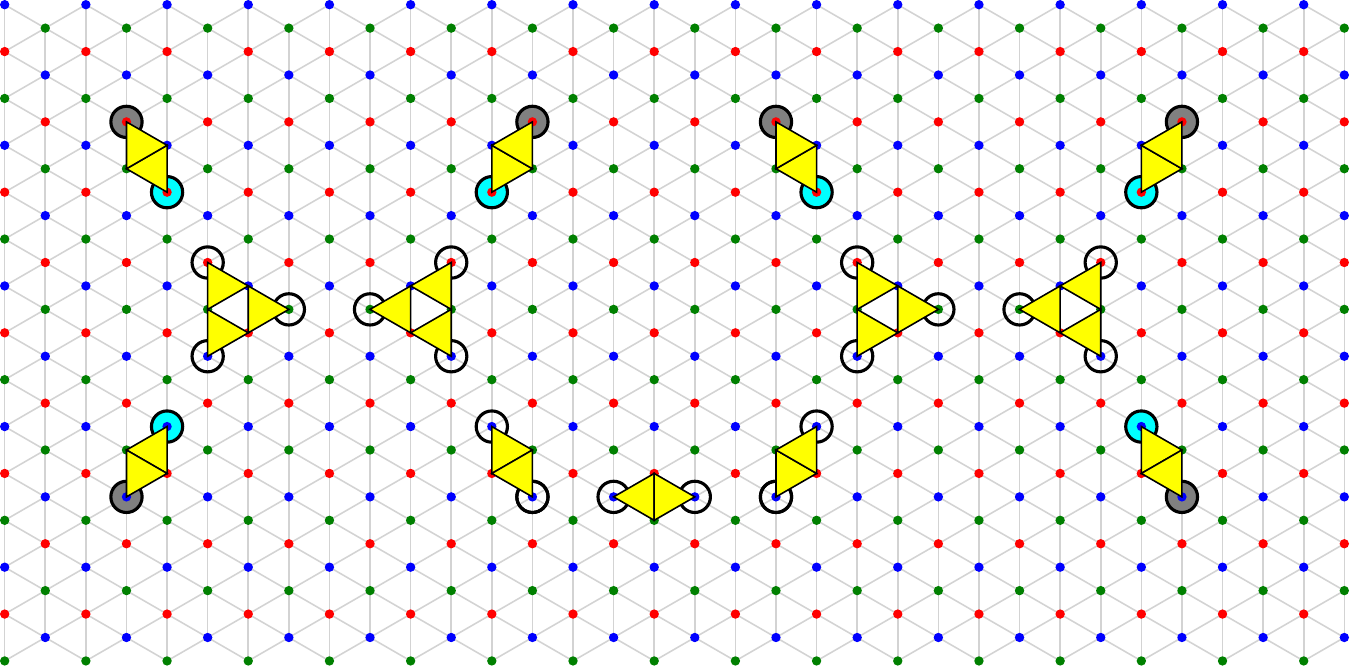}
      \caption{\centering\label{fig:double-duplicator-true}Link nodes TRUE}
  \end{subfigure}
  \hspace{0.03\textwidth}
  \begin{subfigure}{0.3\textwidth}
  \includegraphics[width=\textwidth]{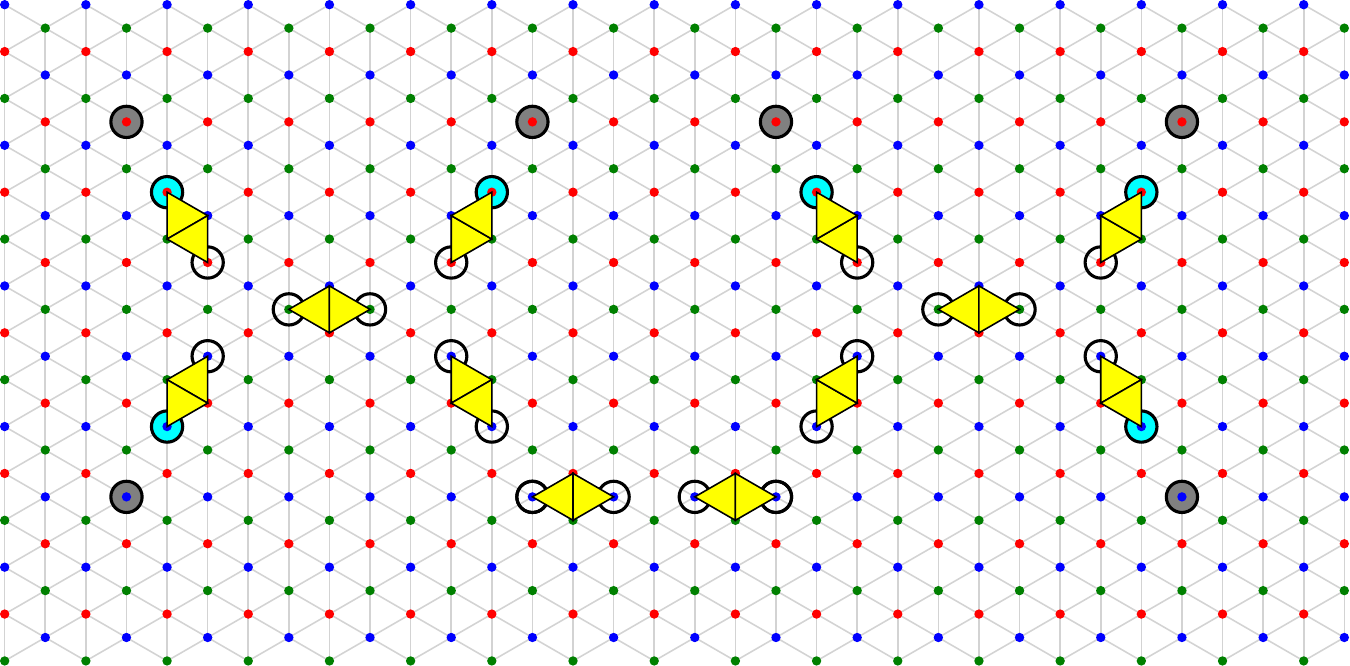}
      \caption{\centering\label{fig:double-duplicator-false}Link nodes FALSE}
  \end{subfigure}

  \begin{subfigure}{0.98\textwidth}
    \includegraphics[width=\textwidth]{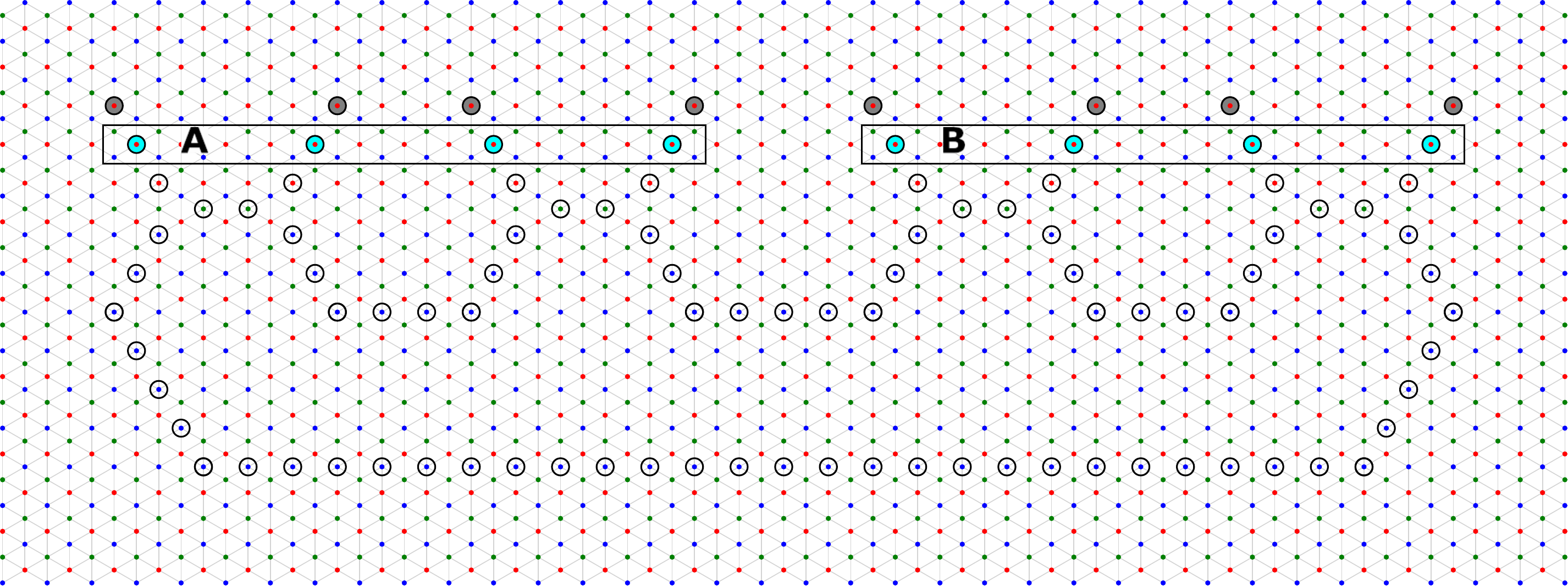}
    \caption{\centering\label{fig:variable-empty}Empty variable gadget} 
  \end{subfigure}
  \begin{subfigure}{0.98\textwidth}
    \includegraphics[width=\textwidth]{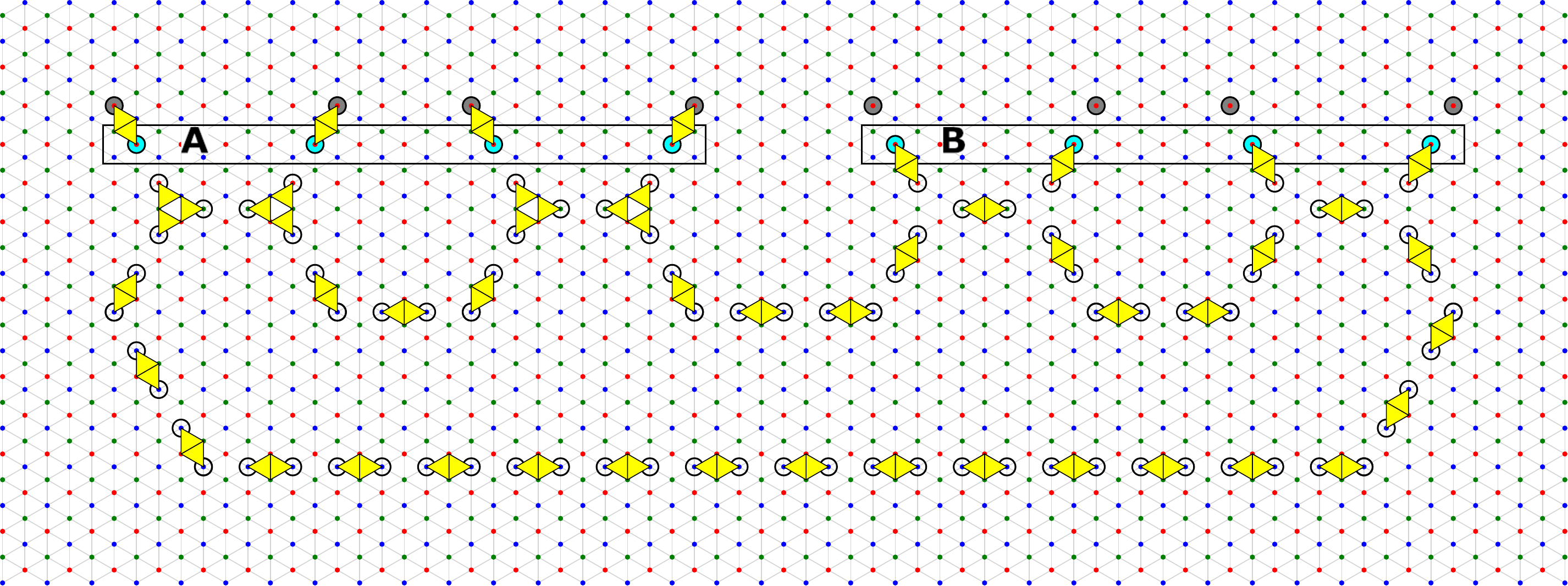} 
      \caption{\centering\label{fig:variable-true}Variable in the TRUE state (link nodes in $A$ TRUE; link nodes in $B$ false)}
  \end{subfigure}
  \caption{\label{fig:variable-whole}\textbf{The variable gadget}. In order to
    model 3-SAT we want the gadget to have two exact covers,
    corresponding to the variable being TRUE or FALSE, and for it to
    have several outputs in the same state as the variable, and
    several outputs in the opposite state. We construct the gadget out
    of smaller subgadgets. \textbf{(a)}, \textbf{(b)} and \textbf{(c)} show the RGB-duplicator
    subgadget, where all 3 links have the same state. \textbf{(d)}, \textbf{(e)} and
    \textbf{(f)} show these combined to form the duplicator subgadget which
    has several red links which must all be in the same state in any
    exact cover. Finally, \textbf{(g)} and \textbf{(h)} show the full variable
    gadget. Although we have shown four links nodes in $A$ and $B$, we
    can extend the duplicator subgadgets arbitrarily to ensure we
    have enough link nodes to enable connections to all the necessary clauses. Note the variable
    gadget in the FALSE state is just the mirror image of the variable
    gadget in the TRUE state shown in \textbf{(h)}.}
\end{figure*}
  \smallskip
  \textbf{The garbage collection gadget.}
  After the main part of our construction, we need to `clean
  up':  there will be some unused links from the
  variables, and there will be some parts of the construction we cannot
  fully `control' -- for example, some clauses may be `satisfied
  multiple times' (several of the literals are TRUE).
 
  The garbage collection gadget allows us to extend any exact cover of
  the variables, wires and clauses to an exact cover of the whole
  syndrome.  The gadget is shown in Figure~\ref{fig:xor-whole}.

  \smallskip
  \textbf{The crossing gadget.}  In our construction we will need to
  allow wires to cross or `pass' over each other. Whilst we have a
  construction which does this, we defer its construction to
  Appendix~\ref{s:crossing-wire}. In the more
  realistic noise models such as phenomenological noise or circuit
  level noise, it is not necessary -- in both of these models we have
  a third dimension (time) that we can use to allow wires to `hop'
  over each rather than needing to cross in the plane.

\subsection{An Example of the Construction}\label{s:example-construction}
Let us illustrate the construction with an example.  Suppose the 3-SAT
formula $F=(X_1\vee \bar{X}_2\vee X_4)\wedge (X_2\vee X_3\vee
X_5)\wedge (\bar{X}_2\vee\bar{X}_3\vee \bar{X}_4)$. We construct
the syndrome as follows (see Figure~\ref{fig:construction}) spacing the gadgets sufficiently far apart that we can join them as required by the formula.

  \begin{enumerate}
  \item
    First put in the variable gadgets;
  \item
    then the clause
    gadgets;
  \item
    then join the variables to the clauses with
    wires.
  \item
    Next add the garbage collection gadget;
  \item then
    join the unused variable outputs to the garbage collection gadget with wires.
  \item
    Finally, join the clause outputs to the garbage collection gadget with wires.
  \end{enumerate}

  Now that we have constructed the syndrome we need to show that it has an
  exact cover if and only if $F$ is satisfiable. 

  First suppose that there is an exact cover of this syndrome. Then each
  gadget must be covered exactly. In particular, each variable gadget
  must be in the TRUE or FALSE state, so each of its outputs must be
  in the corresponding state. Since each wire is covered exactly
  its ends are in the same state, which means that the inputs to the
  clauses are in the same state as their corresponding
  literal. Finally, since each clause gadget is covered exactly, it
  must have at least one TRUE input. In other words, this allocation
  of states to variables makes the formula $F$ TRUE; that is, the
  formula $F$ is satisfiable.

  To prove the converse suppose that the formula $F$ is satisfiable.
  Then there is an assignment of TRUE/FALSE to the variables that
  makes $F$ TRUE.  Put each variable gadget in the corresponding TRUE
  or FALSE state (i.e., choose the corresponding exact cover of each
  variable gadget).  Cover the wires between the variables and clauses
  exactly. Since we chose the state of the variables to satisfy the
  formula $F$, each clause gadget has at least one TRUE input so we
  can cover each of them exactly. Next cover the remaining wires
  exactly. Finally, cover the garbage collection gadget exactly (so far we have not
  explained why this can be done -- the details are given later). We
  have now covered the whole syndrome exactly, and in particular, we
  see that the syndrome does have an exact cover.

  This essentially completes the proof of Lemma~\ref{l:given_weight}
  and Theorem~\ref{t:min_weight} showing that decoding the colour code is NP-hard -- all that remains is to to show how
  to construct the various gadgets we described and to prove that they
  have the claimed properties. We do this in the next section.

  \section{Details of the gadgets}
  Before we go into the details of the individual gadgets we make the
  definition of the \emph{exact cover of a gadget} more
  precise. First, we define the \emph{error cluster of a gadget} to
  be the union of all error components meeting any defect of the
  gadget. This obviously contains all defects of the gadgets but may
  also contain some of the gadget's partner link nodes -- in fact it
  contains all the partner link nodes for the links that are
  TRUE. However, it cannot contain any other nodes from other
  gadgets. We say that a gadget is \emph{exactly covered} if the size
  of its error cluster equals the number of defects the error
  cluster meets.

  \subsection{The wire gadget}\label{ss:wire}
  A wire gadget is a sequence of red nodes $x_1,x_2,x_3,\dots,x_{2k}$ where
  consecutive nodes are at distance two, non-consecutive nodes are at
  distance at least three, the total number of nodes is even, and the
  only link nodes are $x_1$ and $x_{2k}$. See Figure~\ref{fig:wire-empty}
  for example. As mentioned above the key point is that, in any exact
  cover, the two link nodes of the wire need to be in the same state
  (both TRUE or both FALSE). This can be seen in two ways. One is to
  start at the link node at one end of the wire, setting it TRUE or
  FALSE and then observing that the errors in an exact cover are
  forced (see Figures~\ref{fig:wire-true}
  and~\ref{fig:wire-false}). Alternatively, since the wire's error
  cluster contains no blue or green defects, it must contain an even
  number of red defects, which means that either it contains both
  partner link nodes, or neither. In other words the two link nodes
  are either both TRUE, or both FALSE.

  Although we have insisted that wires have even length, this does not impede out ability to join the link nodes as required by our construction -- we discuss this in Section~\ref{s:combining}.

  Finally, we have chosen to insist that all the nodes in a wire are red. We
  have done this because it ensures that we can always join two link
  nodes in gadgets with a wire (recall link nodes are always red). Of
  course, we could have chosen any colour, and sometimes when
  constructing a gadget out of subgadgets we will join them by
  blue or green `wire-like' structures. 
  \subsection{The clause gadget}
 
  The clause gadget is shown in
  Figure~\ref{fig:clause-empty}. Figure~\ref{fig:clause-all-false}
  shows what is forced (if the cover were to be exact) if all three
  inputs are in the FALSE state, and we see that central node cannot
  be exactly covered; i.e., there is no exact cover in this
  case. Figures~\ref{fig:clause-1-true}, \ref{fig:clause-2-true}
  and~\ref{fig:clause-3-true} show an exact cover in the cases that
  one, two or three inputs are in the TRUE state.

  Note that we cannot control the state of the links to the garbage
  collection but we can see that the number of TRUE links from the
  gadget to garbage collection is exactly one less than the number of
  TRUE input links.
  \begin{figure*}
    \def\figbasename{xor-subgadget-false}
  \begin{subfigure}{0.3\textwidth}
  \includegraphics[width=\textwidth]{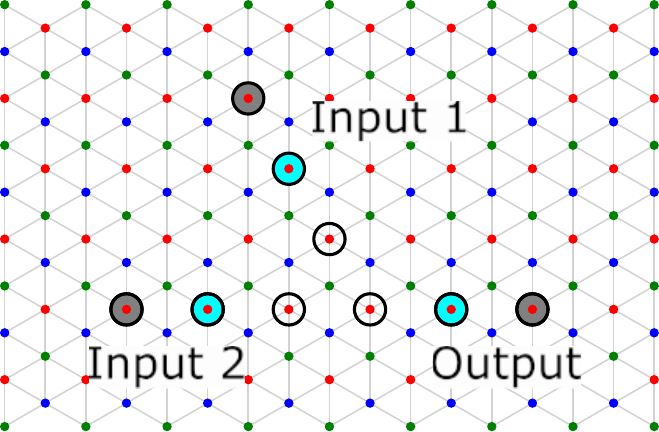}
      \caption{\centering\label{fig:xor-empty}Empty gadget}
  \end{subfigure}
    \hspace{0.03\textwidth}
  \begin{subfigure}{0.3\textwidth}
  \includegraphics[width=\textwidth]{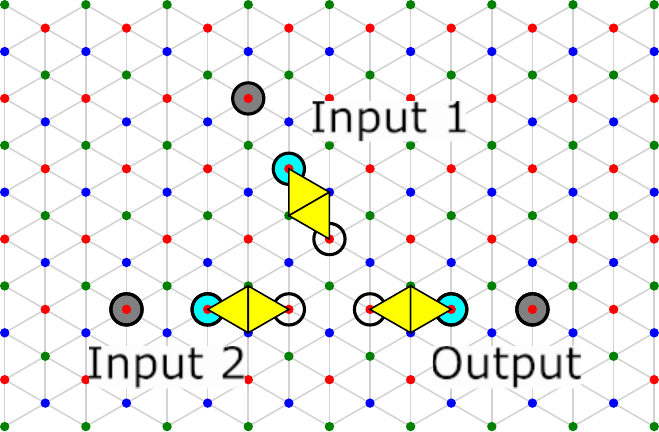}
      \caption{\centering\label{fig:xor-FF}Both inputs FALSE}
  \end{subfigure}
  \hspace{0.03\textwidth}
  \begin{subfigure}{0.3\textwidth}
  \includegraphics[width=\textwidth]{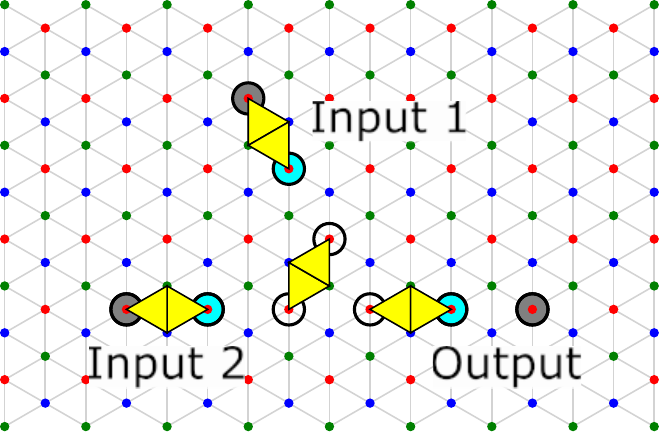}
      \caption{\centering\label{fig:xor-TT}Both inputs TRUE}
  \end{subfigure}

  \begin{subfigure}{0.3\textwidth}
  \includegraphics[width=\textwidth]{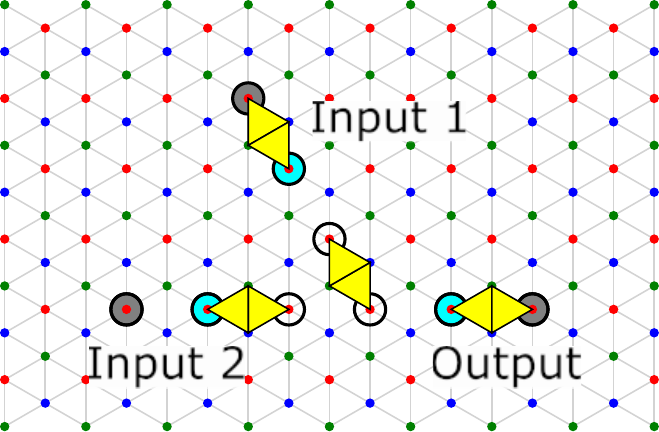}
      \caption{\centering\label{fig:xor-FT}Input 1 TRUE; Input 2 FALSE}
  \end{subfigure}
    \hspace{0.03\textwidth}
  \def\figbasename{xor-subgadget-true}
  \begin{subfigure}{0.3\textwidth}
  \includegraphics[width=\textwidth]{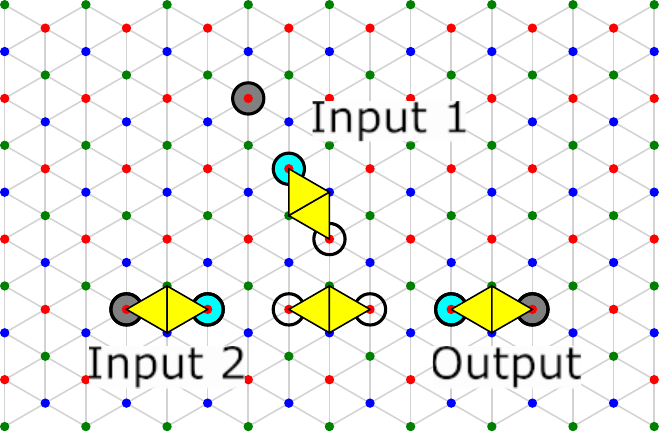}
      \caption{\centering\label{fig:xor-TF}Input 1 FALSE; Input 2 TRUE}
  \end{subfigure}
  \hspace{0.03\textwidth}
  \begin{subfigure}{\textwidth}
  \includegraphics[width=\textwidth]{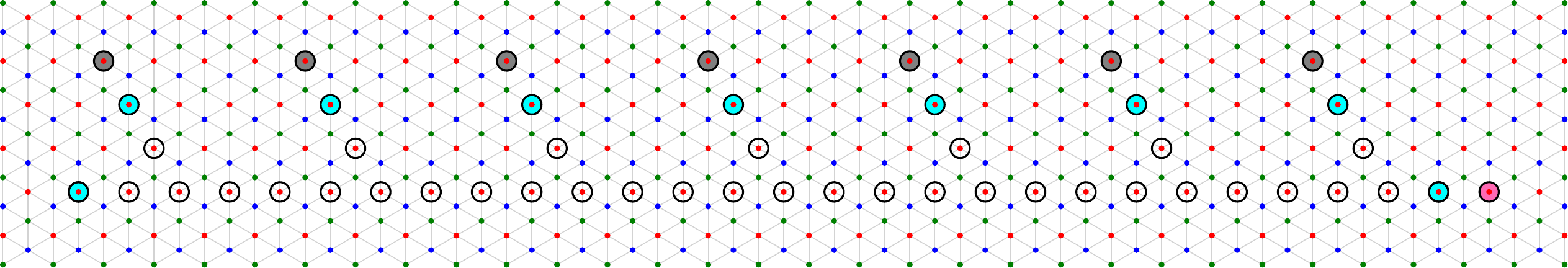}
      \caption{\centering\label{fig:full-gc-xor}The full garbage collection gadget}
   \end{subfigure}
  \caption{\label{fig:xor-whole}\textbf{The garbage collection gadget
      and its component XOR-subgadget}. \textbf{(a)} Shows the
    XOR-subgadget itself; \textbf{(b)-(e)} show that the (forced)
    exact covers showing that the output is indeed the XOR of the two
    inputs; \textbf{(f)} shows the full garbage collection gadget. When defining this gadget we ensure that the total number of red defects across our whole construction is even -- we do this by including or omitting the optional pink node as appropriate. This will depend on the formula $F$ and its associated syndrome. As
    there is no partner link node to the left of the gadget the input
    on the bottom left is always FALSE. The final output is the XOR of
    all the inputs, which will be TRUE, and thus use the pink node, exactly if there are an odd number of TRUE inputs to the garbage collection gadget. This case corresponds to an \emph{odd} number of red defects in the rest to the syndrome -- i.e., exactly the case that we included the pink node in our construction. 
     }
\end{figure*}

  \subsection{The variable gadget}\label{ss:variable-gadget}
  The variable gadget is more complicated then either of the previous
  gadgets. It is easiest to construct it out of smaller pieces. The
  first of these we call a RGB-duplicator subgadget and is shown in
  Figure~\ref{fig:rgb-subgadget-empty}. The key property of this
  subgadget is that all three of its link nodes have to be in the
  same state -- either all TRUE or all FALSE. Again this can be seen
  either by trying the two cases (Figures~\ref{fig:rgb-subgadget-true}
  and~\ref{fig:rgb-subgadget-false}), or observing that the
  subgadget's error cluster must have the same parity of red, green
  and blue defects, so either contains all three partner link nodes, or none
  of the partner link nodes. These two cases correspond to the link
  nodes all being TRUE or all being FALSE respectively.

  The RGB-duplicator subgadget duplicates a red link to a blue link
  and a green link. We actually want to be able to duplicate red links
  to red links, but this is now easy: we just join RGB-duplicators
  together. See Figure~\ref{fig:double-duplicator-empty} for an
  example -- all the red links have to be in the same state (see
  Figures~\ref{fig:double-duplicator-true}
  and~\ref{fig:double-duplicator-false}). We call this a duplicator
  subgadget. It is important to note that we can repeat this pattern arbitrarily many
  times to get as many red links as we want.
  
  Finally, to construct the variable gadget we join two duplicator
  gadgets together using a blue `wire-like' path containing an odd
  number of nodes -- see Figure~\ref{fig:variable-empty}. We see that
  all inputs in the set $A$ have the same state, and all inputs in set
  $B$ have the opposite state; see Figure~\ref{fig:variable-true}. In
  the figures we have shown the example where there are four link
  nodes in $A$ and four in $B$ but, of course, since we can extend the
  duplicator subgadget arbitrarily we can make the sets $A$ and $B$
  as large as we like. In particular, we can ensure that we have
  enough link nodes of each type to join to all the occurrences of the
  variable in the clauses.

  Note, we also joined the two spare `blue links' by an odd length blue
  `wire-like' path as we have insisted that all links from a gadget
  are red (as this makes joining everything together easier).

  \subsection{The garbage collection gadget}

  We construct the garbage collection gadget as a chain of very simple
  XOR subgadgets. The XOR subgadget is shown in
  Figure~\ref{fig:xor-empty} and Figures~\ref{fig:xor-FF}
  to~\ref{fig:xor-TT} show that in any exact cover of the subgadget
  the output link-node is indeed an XOR of the input link-nodes. (This
  can also be seen by a parity argument -- the subgadget's error
  cluster contains no green defects so must contain an even number
  of red defects.)

  The full garbage collection gadget is shown in
  Figure~\ref{fig:full-gc-xor}. Note, the optional pink node at the end of the
  bottom row; we choose to include or omit this defect from the syndrome so as to
  ensure that the total number of red defects across the whole of our constructed
  syndrome is even.

  To work out the parity of the number of TRUE inputs, we could argue
  directly and just count them, but it is easier to observe that the
  total number of green defects is even since they only occur in
  variable gadgets and crossing wire gadgets (see
  Section~\ref{s:crossing-wire} in the Appendix) and there are an even
  number in each such gadget. Therefore, any exact cover of the other
  gadgets must contain an even number of red defects.  Hence, to find
  the parity of the number of TRUE inputs, we just count the total
  number of red defects in the other gadgets: if that has even parity
  then the garbage-collection gadget has an even number of TRUE
  inputs, and if that is odd then the garbage-collection gadget has an
  odd number of TRUE inputs. The way we chose to include or exclude
  the pink node means that, in either case, we can complete the
  cover to an exact cover of the whole syndrome.

  \begin{figure*}
    \def\figbasename{logical-operator}
  \begin{subfigure}{0.3\textwidth}
  \includegraphics[width=\textwidth]{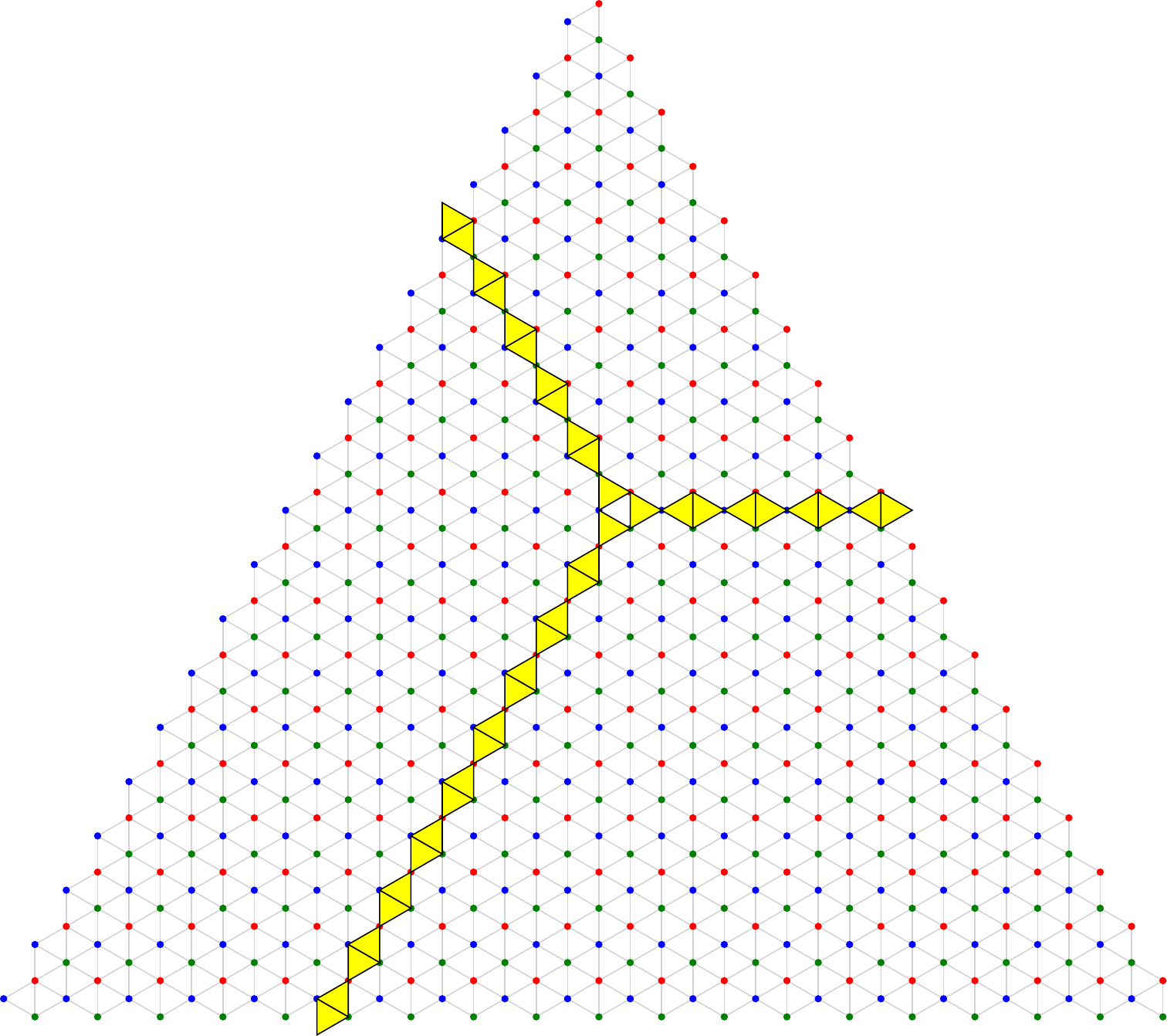}
      \caption{\centering\label{fig:logical-operator-complete}A minimum weight logical operator of weight $d$; in this case $d=39$}
  \end{subfigure}
  \hspace{0.03\textwidth}
  \begin{subfigure}{0.3\textwidth}
  \includegraphics[width=\textwidth]{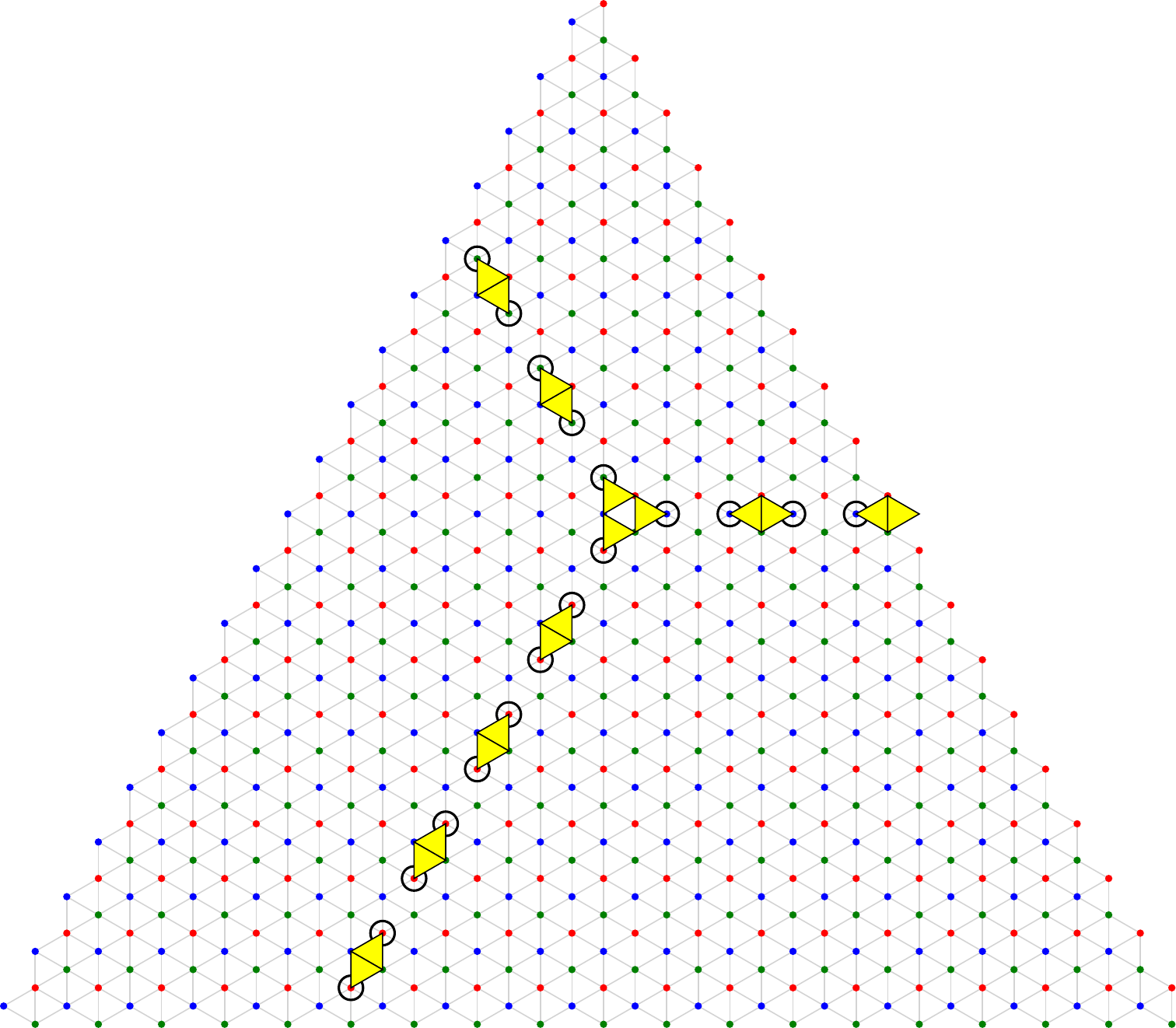}
      \caption{\centering\label{fig:logical-operator-true}Part of weight $\lfloor d/2\rfloor$; 19 in this example}
  \end{subfigure}
  \hspace{0.03\textwidth}
  \begin{subfigure}{0.3\textwidth}
  \includegraphics[width=\textwidth]{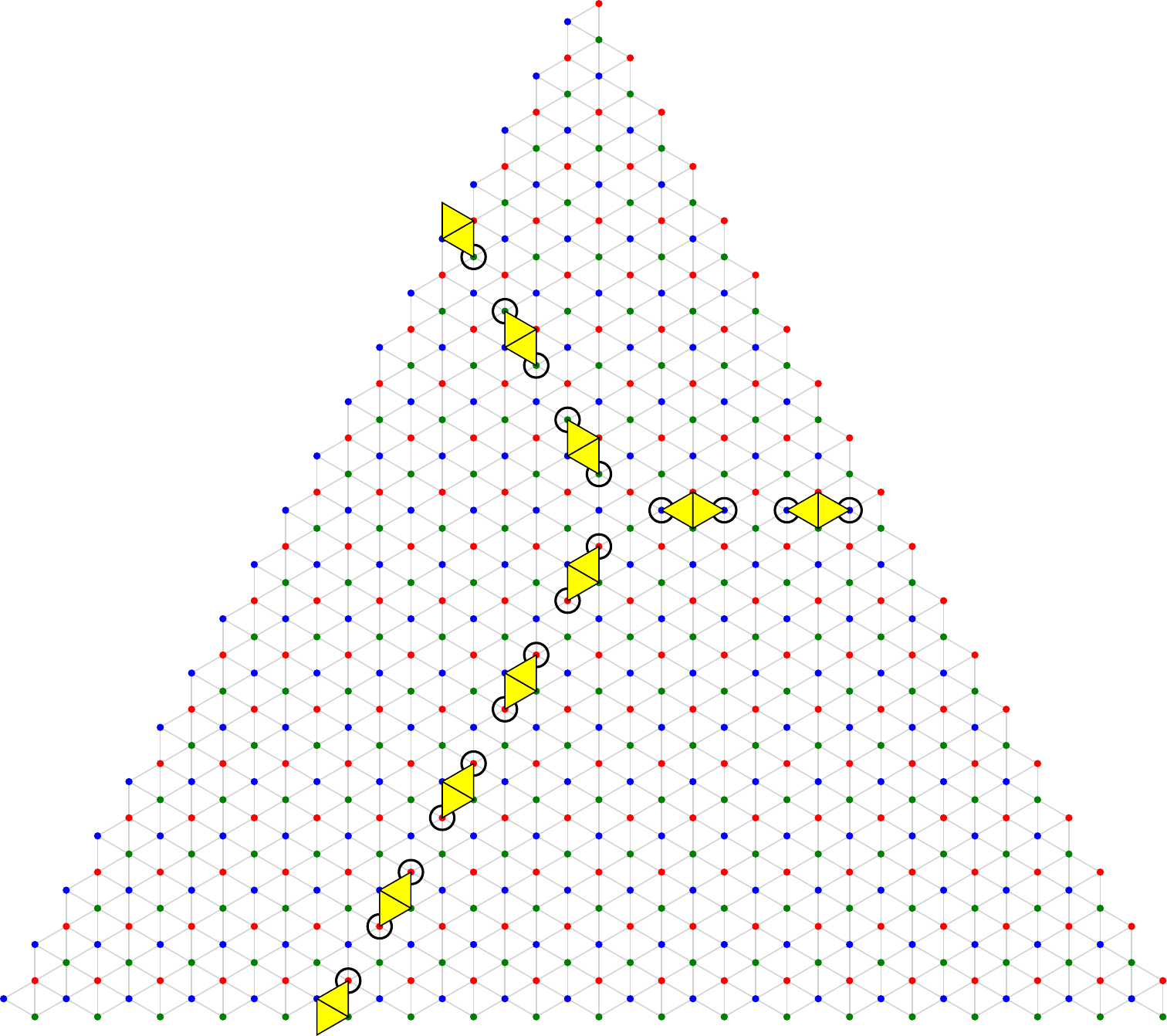}
      \caption{\centering\label{fig:logical-operator-false}Part of weight $\lceil d/2\rceil$; 20 in this example}
  \end{subfigure}

    \def\figbasename{logical-operator-with-links}
  \begin{subfigure}{0.3\textwidth}
  \includegraphics[width=\textwidth]{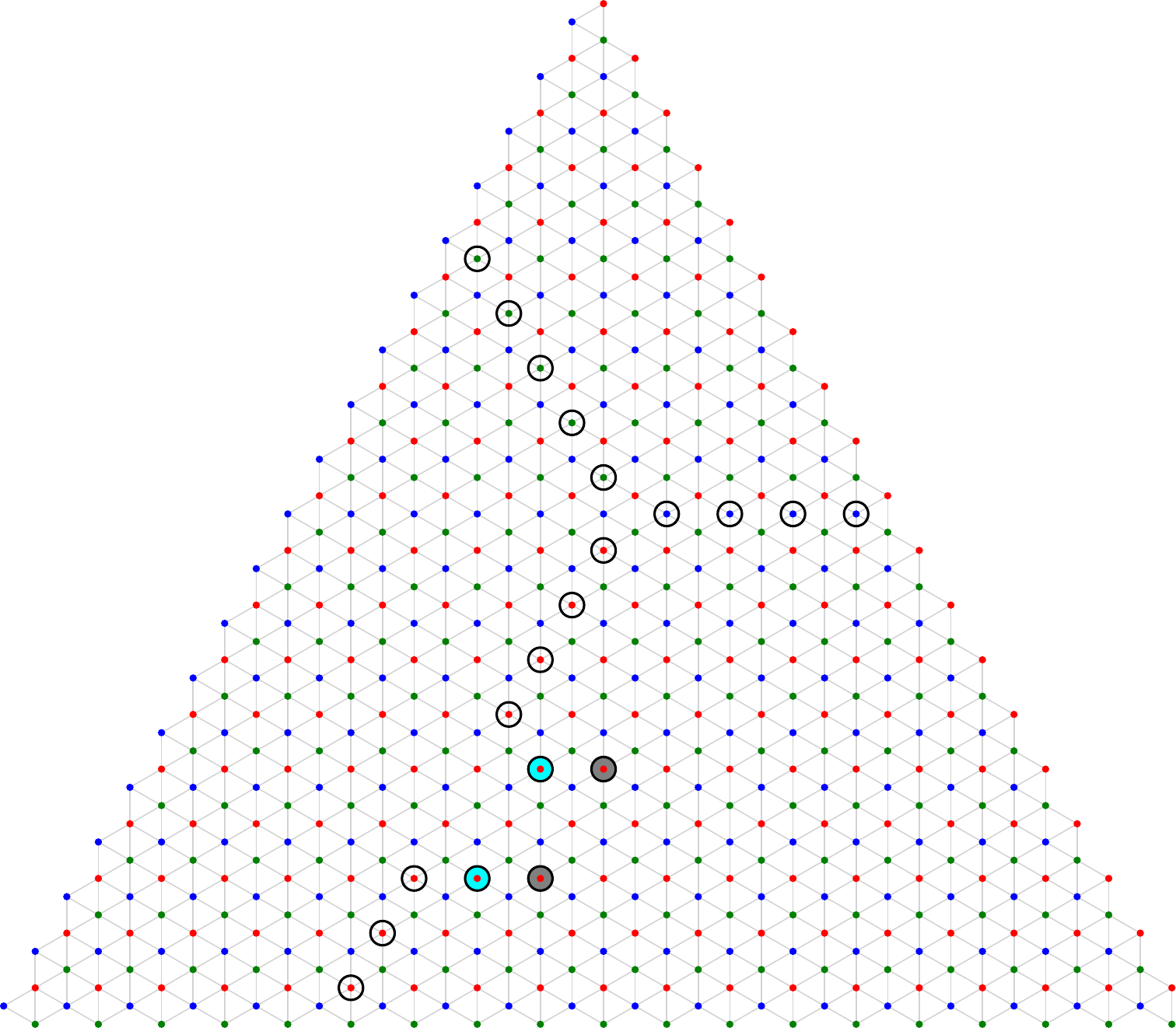}
      \caption{\centering\label{fig:logical-gadget-empty}Empty logical gadget}
  \end{subfigure}
  \hspace{0.03\textwidth}
  \begin{subfigure}{0.3\textwidth}
  \includegraphics[width=\textwidth]{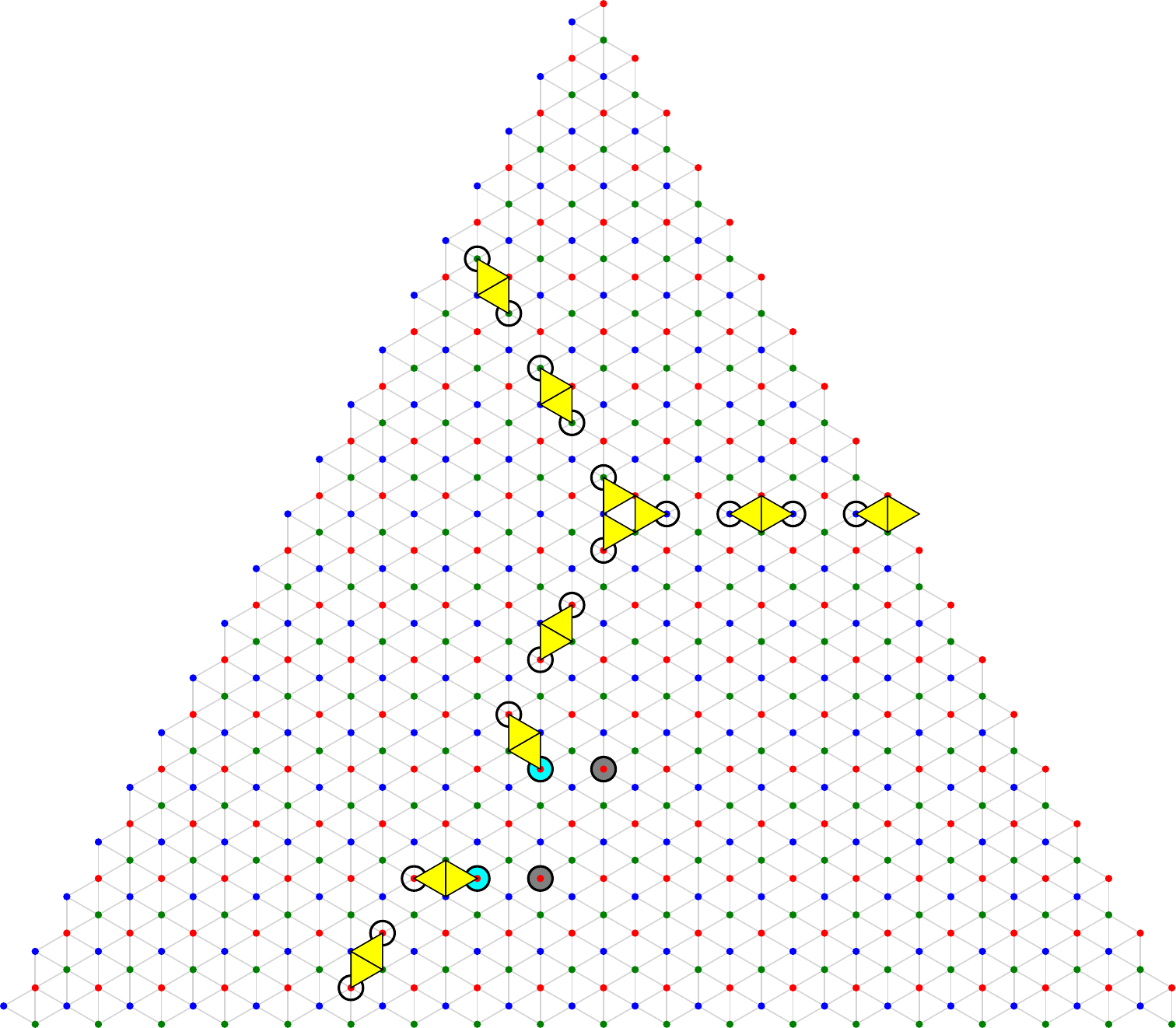}
      \caption{\centering\label{fig:logical-gadget-false}Link nodes FALSE}
  \end{subfigure}
  \hspace{0.03\textwidth}
  \begin{subfigure}{0.3\textwidth}
  \includegraphics[width=\textwidth]{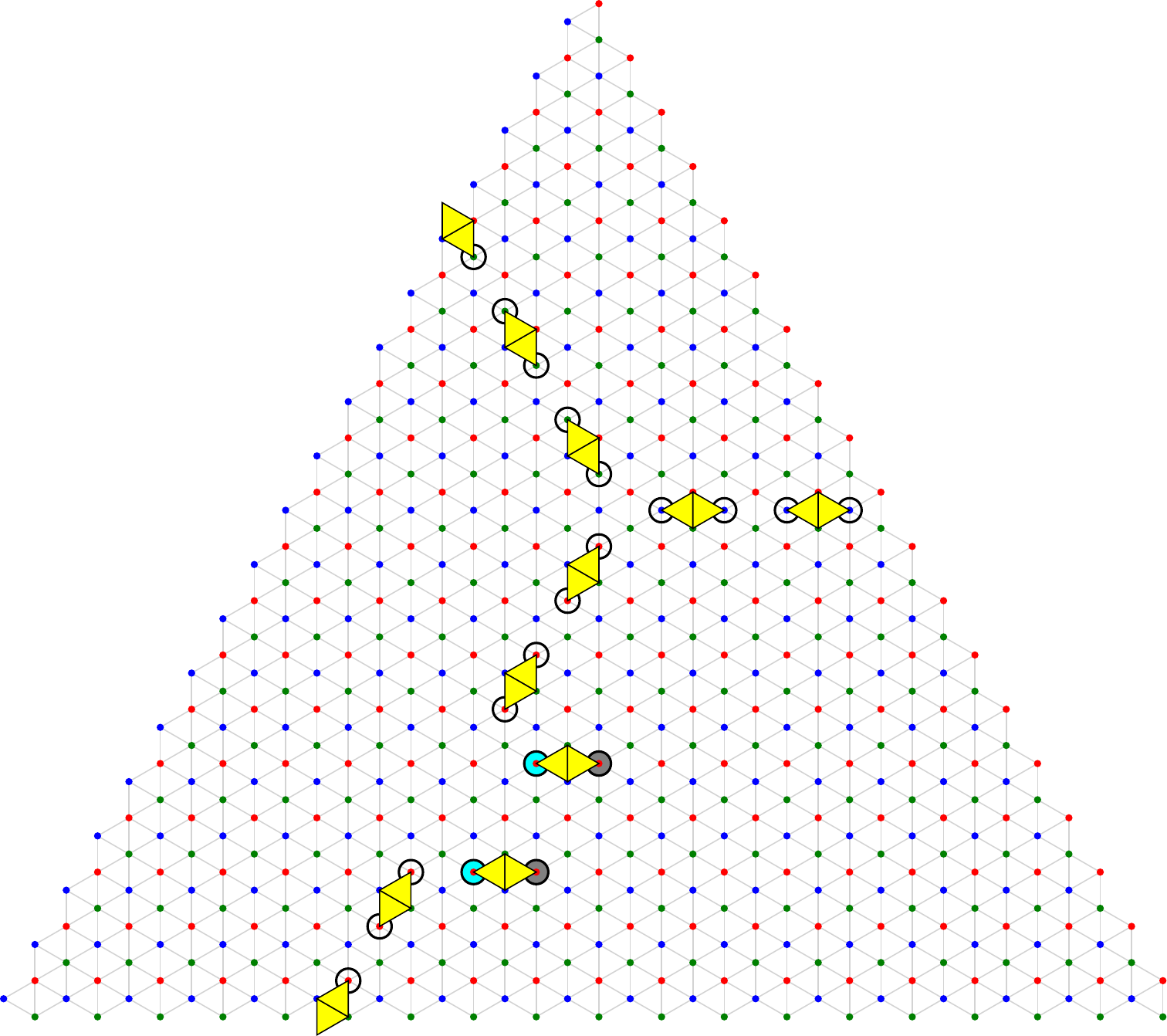}
      \caption{\centering\label{fig:logical-gadget-true}Link nodes TRUE}
  \end{subfigure}
  \caption{\textbf{A logical operator and corresponding logical
      gadget}. The aim of the logical gadget is to construct a
    syndrome with two covers of very similar weight -- one flipping
    the logical and one not -- such that deciding which of these two
    possibilities has lower weight relies on solving the minimum
    weight decoding problem given by Lemma~\ref{l:given_weight}. We
    take a weight $d$ logical operator \textbf{(a)} and split it into
    two parts \textbf{(b)} and \textbf{(c)} with similar weights and
    necessarily the same syndrome.  We take this syndrome and open it
    out as shown in \textbf{(d)}, joining the two link nodes to our
    previous construction with wires. \textbf{(e)} and \textbf{(f)}
    show the two covers we need to choose between -- they have opposite logical effects and differ in
    weight by one.  }
\end{figure*}

\subsection{The crossing-wire gadget}

The construction of this gadget is deferred to
Section~\ref{s:crossing-wire} in the Appendix as it is only needed for
the code capacity noise model -- in the more realistic noise models
such as phenomenological noise or circuit level noise we use the third
dimension (time) to allow wires to `pass' over each other.

\section{Combining the gadgets}\label{s:combining}
Now that we have all the gadgets we just need to combine them as
described in Section~\ref{s:example-construction} and shown in
Figure~\ref{fig:construction}. There are a few points to make. First,
we make the variable gadgets large enough that we have enough outputs
of each type ($X_i$ and $\bar X_i$). Then we place the clause gadgets
-- we need to place these sufficiently far from the variables, so that
there is room to route the wires from the variable gadgets to the
clause gadgets, and also from the variable and clause gadgets to
garbage collection.  There is one slightly subtle point: wires are
defined as being a path of an \emph{even} number of defects so we must
ensure that this constraint holds. However, it is easy to see that we
can deform an `odd length wire' slightly to make it an even length
wire so this is not an issue. We remark that in some other colour
codes it might not be possible to join two nodes by an even length
wire, but a very simple extra gadget solves this problem. We discuss
this in Section~\ref{s:other-colour-code-details} in the
Appendix.

Secondly, we consider the size of the error-set we construct. For the
NP-completeness result to be valid we need to show that the size of
our construction is polynomial in the size of the 3-SAT
formula. Suppose that the formula $F$ has $m$ clauses. Then the total
number of variable outputs is $O(m)$ and if we place the clauses at
distance $O(m^2)$ from the variables, and the garbage collection
gadget $O(m^2)$ beyond that, then it is straightforward to check that
we can place the necessary wires. In essence,
  we have to sort the $O(m)$ outputs from the variables into the order
  needed by the inputs to the clause gadgets, and we can do this
  sort using $O(m^2)$ swaps -- which corresponds to $O(m^2)$ times a wire hops over
  another (in phenomenological noise) or copies of the crossing gadget
  (in code capacity noise). This give a construction of size
$O(m^3)$.
We remark that whilst better
bounds are possible, they are not necessary for our results.

This completes the formal proof of Lemma~\ref{l:given_weight} and, thus, Theorem~\ref{t:min_weight}.

\section{The Logical}\label{s:logical}
In this section we prove Theorem~\ref{t:logical}. We construct a
simple gadget which we call the logical gadget.  We start with the
weight $d$ logical operator shown in
Figure~\ref{fig:logical-operator-complete}. We split this error set
into two parts of weight $\lfloor d/2\rfloor$ and $\lceil d/2\rceil$
which necessarily have the same syndrome (see
Figures~\ref{fig:logical-operator-true}
and~\ref{fig:logical-operator-false}).  We see that the syndrome is
essentially just an RGB-duplicator with some wires joining it to the
boundary.

We now cut the red wire and open it up into two link nodes (see
Figure~\ref{fig:logical-gadget-empty}). Since this gadget has an odd
number of red defects, an odd number of blue defects but an even
number of green defects, any cover of the gadget must meet the
boundary: either just the green boundary, or both the red and blue
boundaries. These two options are shown in
Figures~\ref{fig:logical-gadget-false}
and~\ref{fig:logical-gadget-true}. Since, any cover of the logical
gadget must meet the boundary, and no single error can meet both the boundary and
a defect, it makes sense to say that a cover of the logical gadget is
exact if its error cluster has size one more than the number of
defects (i.e., $|\cE|=|\cS|+1$).  With this definition, we see that the
cover in Figure~\ref{fig:logical-gadget-false} is an exact cover, but
the cover in Figure~\ref{fig:logical-gadget-true} has one more error than
that.  Thus the gadget can only be covered exactly if both
its link nodes are FALSE.  

In essence, we would like to join this logical gadget to our previous
construction. It turns out to be more convenient to join it to the
previous construction applied to a different 3-SAT formula.  Suppose
we are given $F$, a 3-SAT formula, in variables $X_1, X_2,\dots,
X_n$. We construct a 3-SAT formula $F'$ with several extra variables,
including a variable $Y$, with the property that $F'$ is always
satisfiable if $Y$ is TRUE, and $F'$ is satisfiable with $Y$ FALSE if
and only if $F$ is satisfiable. As the proof that we can do this is
routine algebra we defer it to Lemma~\ref{l:3-sat-y} in
Appendix~\ref{s:modified-3-sat}.

Now we use the aforementioned methods to construct a syndrome for the
formula $F'$ -- with one extra constraint insisting that the
variable $Y$ has two `spare' outputs that are joined straight to the
garbage collection gadget. By our construction, this syndrome has the
property that it has an exact cover if and only if $F'$ is
satisfiable. We join this syndrome to our logical gadget by removing
the two spare wires from $Y$ to the garbage collection, and joining
those two outputs from $Y$ to the two link nodes of the logical gadget
with wires. Note that, since we have removed two identical inputs from
the garbage collection gadget we have not changed whether it can be
exactly covered.

Suppose that the original 3-SAT formula $F$ is satisfiable. Then $F'$
is satisfiable with $Y$ FALSE, so there is an exact cover of the whole
syndrome including the logical gadget. This cover is clearly a
minimum weight error-set. On the other
hand, suppose that the original 3-SAT formula $F$ is not
satisfiable. Then $F'$ is not satisfiable with $Y$ FALSE, but is
satisfiable with $Y$ TRUE. In this case, there is no exact cover
(since an exact cover of the logical gadget needs its inputs to be
FALSE, and an exact cover of the rest of the syndrome needs the
outputs of $Y$ to be TRUE). Therefore, the minimum weight error set
must contain at least one more error than the exact cover. However, we
can construct an error-set of that weight (one more than an exact
cover) by covering the main part of the syndrome exactly with $Y$ TRUE
(this is where we use the fact that the formula $F'$ is satisfiable
when Y is TRUE), and then by covering the logical gadget with one
extra error. This is the minimum weight decoding in this case.

The numbers of errors in these two possibilities have opposite parity so  
have opposite effects on the the $Z$-logical -- one flips it and one does not.
Hence, any algorithm that can decide
whether the minimum weight decoding flips the logical can be used to
solve 3-SAT. This completes the proof of Theorem~\ref{t:logical}.

\section{Discussion and outlook}

\subsection{Other colour codes}\label{s:other-colour-codes}
Most of what we have done would work on any other planar colour code,
although the proofs will be mildly more technical since there is less
symmetry. We briefly outline what would need to be done for these
other codes. The clause, wire, and garbage collection gadgets all
consist entirely of defects which all have the same colour (red), so
intuitively these should transfer to other planar colour
codes. However, there are some subtleties. First, the precise
definition of the wire gadget needs a very slight modification see
Appendix~\ref{s:other-colour-code-details}). More importantly, though,
it may not be possible to join two arbitrary red nodes by a wire
containing an even number of nodes (in graph theoretic terms, it can
be the case that a related graph known as the shrunk lattice is
bipartite). This is a potential issue in terms of being able to join
all the gadgets together, but it also can be an issue inside our
gadgets or subgadgets. For example, the XOR-subgadget (used in the
Garbage Collection gadget) contains an `odd length red cycle' and this
is not possible if the shrunk lattice is bipartite. However, it is
easy to solve this issue by introducing a (very simple) NOT-subgadget
implementing a logical NOT operation -- see
Appendix~\ref{s:other-colour-code-details}.

The variable gadget contains defects of all three colours, so is
slightly more complicated.  However, the only thing we need to check is
that we can construct an RGB-duplicator subgadget
(Figure~\ref{fig:rgb-subgadget-empty}) -- once we have this we can
easily construct a variable gadget. This appears to be possible in
general (although we may not be able to insist on a separated
syndrome, so the proofs may become a little more technical). For 
example for the $4.8.8$ colour code see
Figure~\ref{fig:4-8-8-RGB-duplicator} and
Appendix~\ref{s:other-colour-code-details}. An RGB-duplicator
subgadget is all that is needed to make the logical gadget so that
can also be constructed in more general colour codes.

The crossing wire gadget is our most complicated gadget (see Appendix~\ref{s:crossing-wire}), but recall it
is only needed for the code capacity model (wires passing over each
other is possible in the phenomenological noise model in any planar
colour code). We would expect it to be possible in general, but have
only verified it for the $4.8.8$ colour code.

An interesting direction is whether our approach could be extended to
colour codes in three or more dimensions.  In such codes, one Pauli
error type must be associated with syndromes that are fundamentally
not string-like, but rather membrane-like (or even higher dimensional
objects)~\cite{bombinGaugeColorCodes2015}.  As such, for the membrane-like Pauli error, we would not be
able to straightforwardly use our wire gadgets, nor many of the other gadgets which implicitly contain wires.  However, we only need
one Pauli error type to be string-like for an extension of our result
to be plausible.  In the three-dimensional colour code, we always have one Pauli
error type associated with string-like syndromes.  On the other hand,
there exist four-dimensional colour codes where all Pauli errors generate
membrane-like syndromes, and in this setting, there are more
significant obstacles to our proof.  Curiously, membrane-like
syndromes are known to be decodable via so-called single-shot decoders~\cite{bombinSingleShotFaultTolerantQuantum2015, campbellTheorySingleshotError2019, guSingleShotDecodingGood2024}, which prompts the question as to whether single-shot colour
code decoding problems are fundamentally computationally easier than
their two-dimensional cousins.  Beyond these topological observations,
we have not attempted to extend our results to any of these
higher-dimensional cases, but hope these remarks give the reader
signposts towards fertile research directions.

\subsection{Approximating the minimum weight decoding}\label{s:approx}

 For some NP-hard problems it is possible to prove that (assuming
 $\text{P}\not=\text{NP}$) not only is it not possible to find the
 `best' solution quickly, but that it is not even possible to find an
 approximate solution quickly. For example, it is easy to see that
 finding the assignment of variables to a 3-SAT formula that maximises
 the number of true clauses is NP-hard (as it immediately solves the
 3-SAT problem), but in fact, even finding a solution that is within
 $7/8$ of the maximum number of true clauses has been shown by H\r{a}stad~\cite{hastadOptimalInapproximabilityResults2001} to be NP-hard.
 However,
 for other NP-hard problems such as the Euclidean Travelling Salesman
 Problem it is possible to approximate arbitrarily closely in
 polynomial time -- see
 Arora~\cite{aroraPolynomialTimeApproximation1998} and
 Mitchell~\cite{mitchellGuillotineSubdivisionsApproximate1999}.

Although our construction uses 3-SAT, the inapproximability of 3-SAT
does not imply the corresponding result for decoding. The key point is
that when constructing a syndrome from a 3-SAT formula with $m$
clauses our construction uses a superlinear number of errors (we use
$\Omega(m^3)$). This means that an error proportional to $m$ is only
of size $o(|\cE|)$. Thus, the results in this paper do not show which
of these classes colour-code decoding belongs to.  We leave this as an
open question:
\begin{question}\label{q:approx}
  Given $\eps>0$, is it possible to find a polynomial time algorithm
  that, given a syndrome, finds an error set generating that syndrome of
  weight at most $(1+\eps)$ times the minimum weight decoding?
\end{question}
We remark that, in the question above, the run-time of any such
polynomial time algorithm will behave `badly' with respect to the
closeness of approximation $\eps$, but since the details get rather
technical we defer the discussion to Section~\ref{a:approx} in the Appendix. 

The next open question is about the effect on the logical. In
Section~\ref{s:logical} we constructed a syndrome for which it is
NP-hard to decide whether the minimum weight decoding flips the logical. However, the
minimum weight decoding contained more than $d/2$ errors. This is
above the theoretical decoding threshold, so there are error-sets of
this size where the minimum weight decoding will give the `wrong'
answer. Although this may feel like a weakening of our result, we
would like to emphasise both that good decoders will correctly decode `many'
error-sets with weight above the theoretical threshold, and that, for the surface
code, there does exist a polynomial time algorithm finding the minimum
weight decoding regardless of the error size.

\begin{question}\label{q:logical}
  Given a syndrome for the colour code with a minimum weight decoding
  of weight less than $d/2$ is there a polynomial time algorithm that can
  decide whether the minimum weight decoding flips the logical?
\end{question}

We remark that an algorithm giving a positive answer to
Question~\ref{q:approx} could be used to decode all errors with a
minimum weight decoding less than $(1-\eps)d/2$, but would still leave
this question open.

\subsection{Topological perspective and consequences}\label{s:colour-code-harder}

Our result that minimum weight decoding of the colour code is NP-hard is perhaps surprising given the deep structure and symmetries of the code that make it appear qualitatively different to general hypergraph matching. Moreover, since colour codes and surface codes are topologically very similar and closely related~\cite{kubicaUnfoldingColorCode2015}, this raises the question of what property makes them so distinct with regards to decoding complexity. 

In the surface code $X$ and $Z$ errors both generate string-like defect patterns.  Such string-like patterns also occur in the colour code, most explicitly they are seen in our construction of wire gadgets. More generally, gadgets which have all their defects of the same colour (such as our wire, clause, and garbage collection gadgets) can also be decoded efficiently with a matching decoder. Indeed, if our gadgets \textit{only} used string-like defect patterns, this would be highly suggestive that efficient and exact minimum weight decoding would be possible by finding minimum weight matchings in graphs.

Crucially, however, our variable gadget makes use of the RGB-duplicator (Figure~\ref{fig:rgb-subgadget-empty}) which can be thought of as red, green and blue strings meeting and annihilating. It is this interaction between strings of all three colours that cannot be perfectly captured in polynomial time. In fact, this is the crux in the construction of our syndrome that single-handedly transforms the complexity of the minimum weight colour code decoding problem globally.  As such, the topological perspective on our proof is that it rests on the two-dimensional colour code possessing string-like defect patterns that can interact (see Figure~\ref{fig:rgb-subgadget-empty} for example). Other structural information, for example the colour information that accompanies defects is, from a complexity perspective, insufficient to overcome this.

In proving that minimum weight decoding of the colour code is NP-hard, we can say definitively that any transition from surface codes to colour codes will involve a greater degree of approximation in decoding than might otherwise have been anticipated. 
Moreover, since we have shown that fast, exact colour code decoders do not exist, research should focus on improved heuristic and approximate methods balancing accuracy with the speed and scalability necessary in the large-scale, real-time setting. 

\section{Acknowledgements}

  We would like to thank Earl Campbell for discussions and comments on
the manuscript and, in particular, insights into higher dimensional
colour codes and single-shot decoding.
We would also like to thank Ben Barber for discussions on the underlying combinatorics, and 
Joan Camps for further comments on the manuscript.
Finally, we thank Steve Brierley and our colleagues at Riverlane for creating a stimulating environment for research.
\vfill

\pagebreak
\clearpage
   
\appendix

\begin{figure*}
    \def\figbasename{crossing-subgadget-green}
  \begin{subfigure}{0.3\textwidth}
  \includegraphics[width=\textwidth]{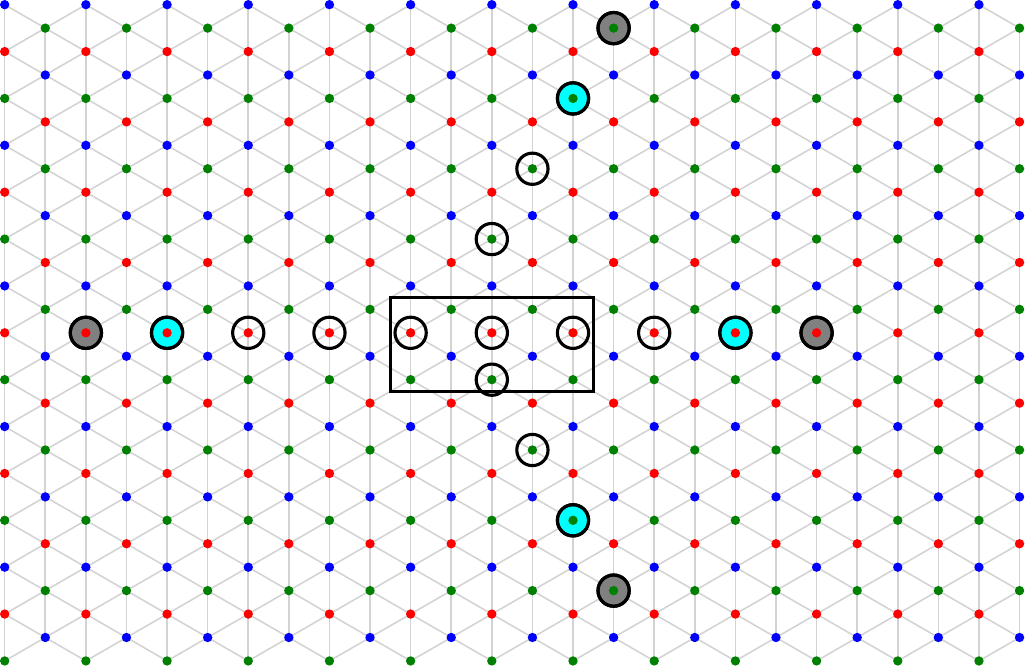}
      \caption{\label{fig:crossing-subgadget-empty}Empty multi-colour crossing subgadget}
  \end{subfigure}
  \hspace{0.03\textwidth}
  \begin{subfigure}{0.3\textwidth}
  \includegraphics[width=\textwidth]{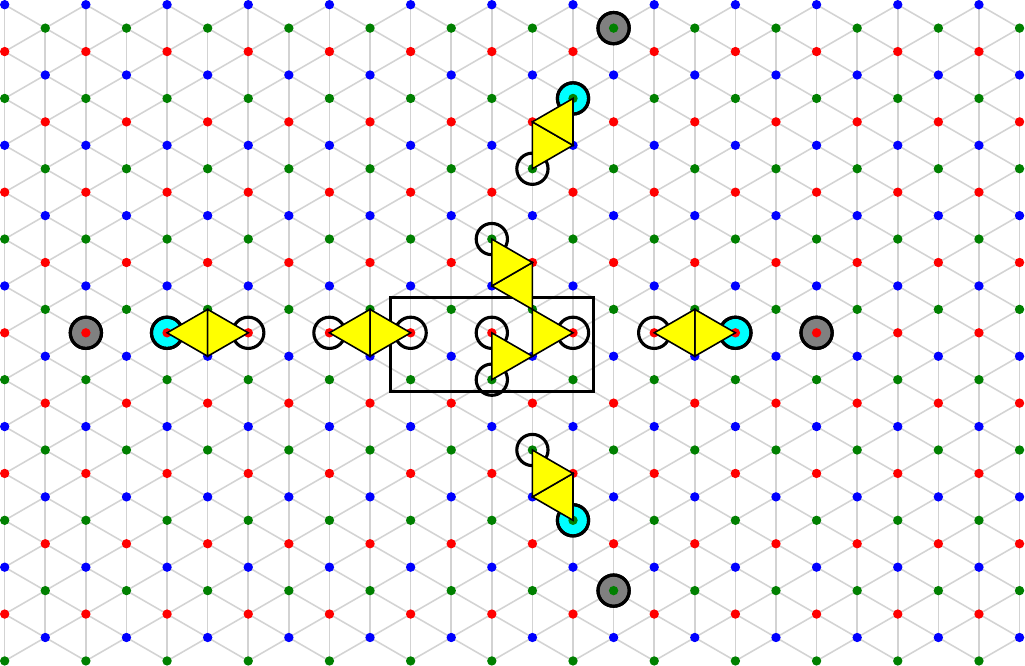}
      \caption{\label{fig:crossing-subgadget-ff}Red links and green links both FALSE}
  \end{subfigure}
  \hspace{0.03\textwidth}
  \begin{subfigure}{0.3\textwidth}
  \includegraphics[width=\textwidth]{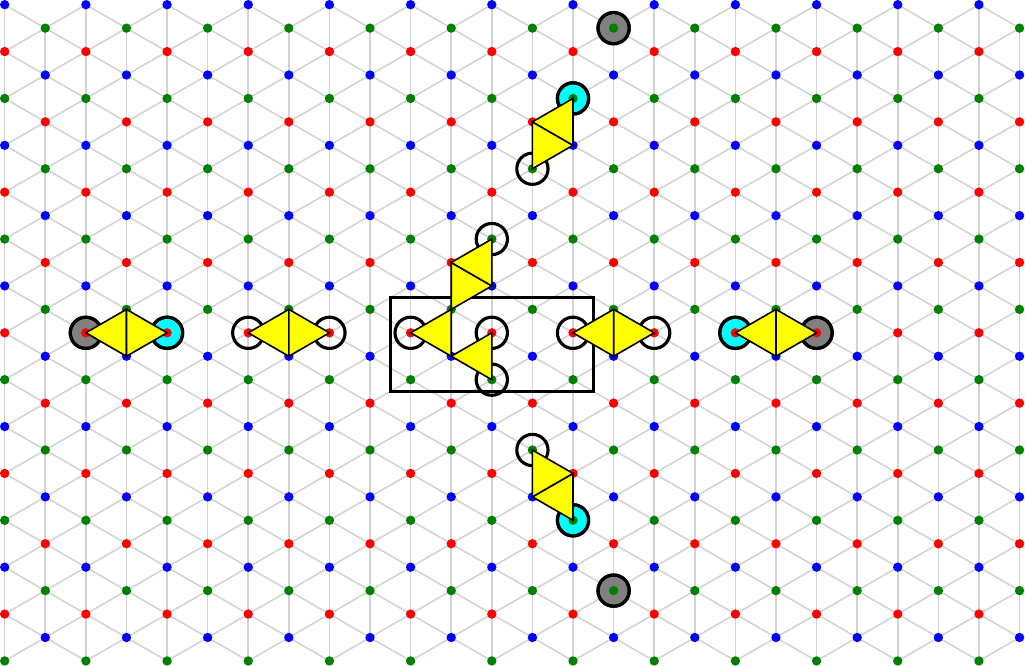}
      \caption{Red links TRUE and green links FALSE}
  \end{subfigure}
  \begin{subfigure}{0.3\textwidth}
  \includegraphics[width=\textwidth]{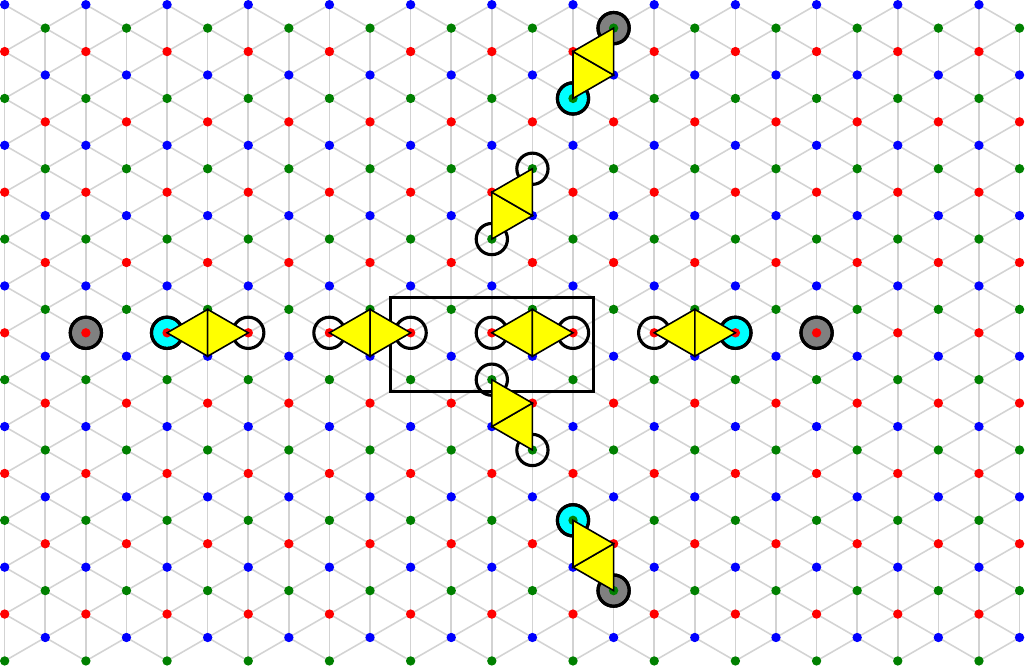}
      \caption{Red links FALSE and green links TRUE}
  \end{subfigure}
  \hspace{0.03\textwidth}
  \begin{subfigure}{0.3\textwidth}
  \includegraphics[width=\textwidth]{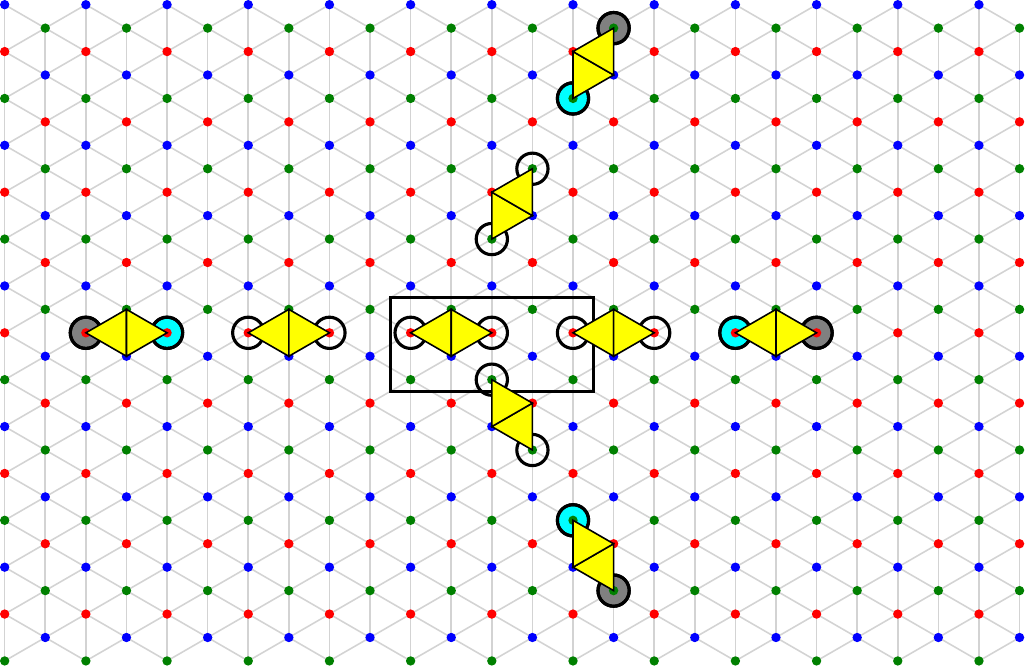}
      \caption{\label{fig:crossing-subgadget-tt}Red links and green links both TRUE}
  \end{subfigure}
    \def\figbasename{full-crossing-gadget}
  \begin{subfigure}{0.45\textwidth}
  \includegraphics[width=\textwidth]{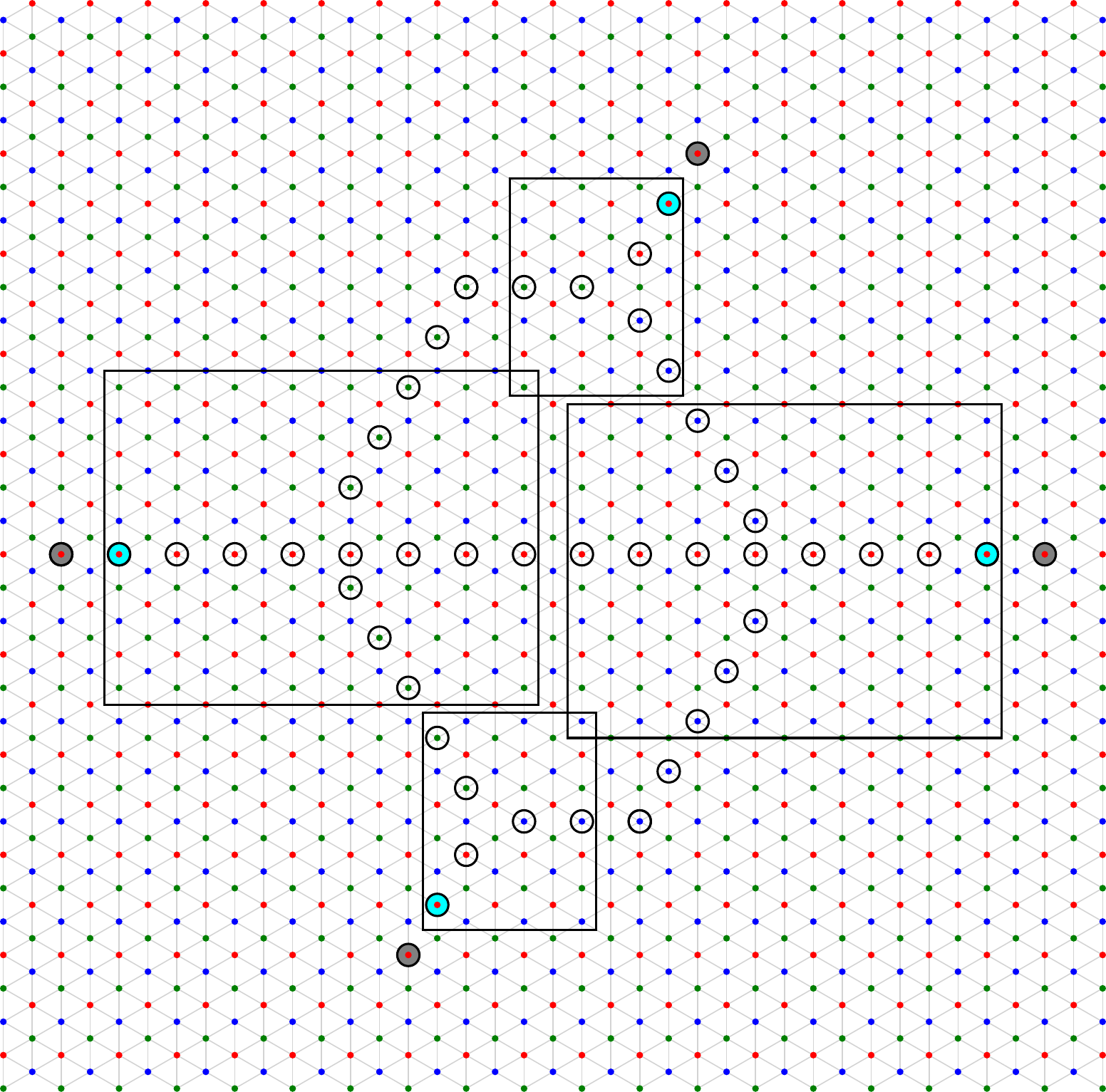}
      \caption{\label{fig:full-crossing-empty}\centerline{Empty full crossing gadget}\newline}
  \end{subfigure}
  \hspace{0.03\textwidth}
  \begin{subfigure}{0.45\textwidth}
  \includegraphics[width=\textwidth]{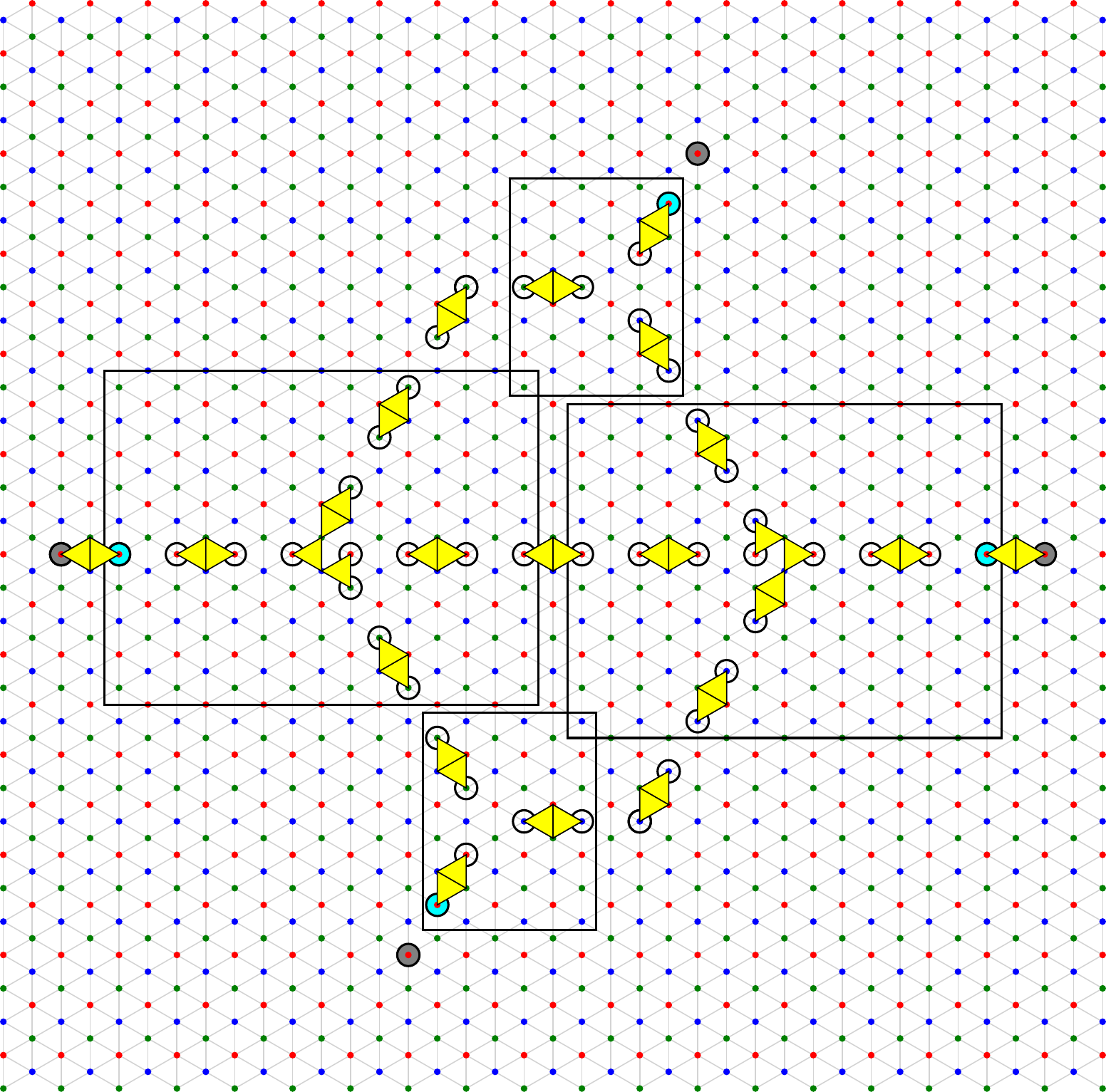}
      \caption{\label{fig:full-crossing-example}An example with the red links TRUE and the green links FALSE.}
  \end{subfigure}
  \caption{\label{fig:full-crossing}\textbf{The full crossing gadget.} In order to maintain the correctness of the wires through a crossing in the plane, we want the
    link nodes at the top and bottom to have the same state, and the
    link nodes at the left and right to have the same state. We
    construct this out of four subgadgets. \textbf{(a)}-\textbf{(e)} show the
    multi-colour crossing subgadget, which has the properties we want
    except the two `wires' have \emph{different} colours. \textbf{(f)} Shows
    the full crossing gadget (where both wires are red) which is
    formed by combining two multi-colour crossing subgadgets with two
    RGB-duplicators (the boxes show the four subgadgets). \textbf{(g)} shows
    one of the possible crossing states (the other three possibilities
    are omitted as they are simple combinations of the states of the
    four subgadgets).}
\end{figure*}

  \begin{figure*}[t]
  \begin{subfigure}{0.3\textwidth}
  \includegraphics[width=\columnwidth]{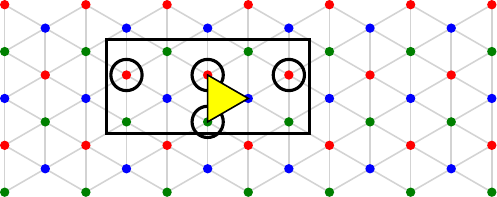}
      \caption{\label{fig:syndrome-box-one-error}One error meets two
        defects, but introduces a blue defect}
  \end{subfigure}
  \hspace{0.05\columnwidth}
  \begin{subfigure}{0.3\textwidth}
  \includegraphics[width=\columnwidth]{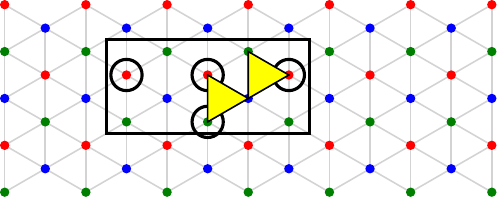}
      \caption{\label{fig:syndrome-box-two-errors-a}One way of cancelling the blue node\newline}
  \end{subfigure}
  \hspace{0.05\columnwidth}
  \begin{subfigure}{0.3\textwidth}
  \includegraphics[width=\columnwidth]{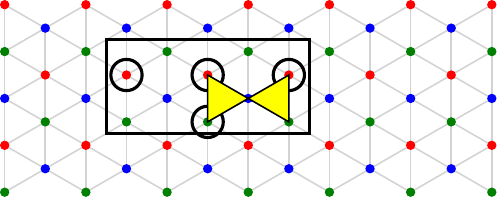}
      \caption{\label{fig:syndrome-box-two-errors-b}Another way of cancelling the blue node}
  \end{subfigure}
  \caption{\label{fig:syndrome-box}\textbf{The syndrome box from the
      multi-colour crossing subgadget.} The defects in the box are not
    separated, but we prove (Lemma~\ref{l:syndrome-region}) that, in
    any cover, at least four errors meet this box so the notion of an
    exact cover is preserved. If there is a single error that meets
    two defects from this box then, without loss of generality, we
    have \textbf{(a)}. This introduces a blue defect that needs to be
    cancelled. There are two ways of doing this while also cancelling the red defect on the right which are shown in \textbf{(b)}
    and \textbf{(c)}. Then we have a green defect on the right that
    needs to be cancelled, and the red defect on the left still needs
    to be covered. In total at least four errors must meet a node of
    the syndrome box.}
  \end{figure*}

  \section{The crossing-wire gadget}\label{s:crossing-wire} 
  This gadget is more complicated than our previous gadgets, but much of it is built from
  subgadgets we have seen before.
  
  We start with subgadget shown in
  Figure~\ref{fig:crossing-subgadget-empty} which we call a
  multi-colour crossing subgadget. We see that it breaks our rule
  that the syndrome contains no defects at distance one -- it contains
  one pair of defects at distance one. Thus, \emph{a priori}, it might
  be possible for an error set generating this syndrome to have size
  (one) less than the size of the syndrome. However, we claim that at
  least four errors must meet a node in the central box shown in
  Figure~\ref{fig:crossing-subgadget-empty}. We call the box the
  \emph{syndrome box}. This is just a short case analysis which we
  give in the following lemma.

  \begin{lemma}\label{l:syndrome-region}
  In any cover (exact or otherwise) of a multi-colour crossing
  subgadget at least four errors meet a node of the syndrome box.
\end{lemma}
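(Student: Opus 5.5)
Our plan is a short local case analysis built around the one adjacent pair of defects in the syndrome box. From Figure~\ref{fig:syndrome-box}, we take the box to contain four defects: two red, one at each side, and one adjacent pair in the middle. Every other pair of box defects is at distance at least two, and the surrounding gadget defects are far enough away that no error meets both a box node and an outside defect. Each defect must lie in at least one error, and an error (a triangle of the dual lattice) contains one node of each colour. Hence an error can contain two box defects only if it contains the adjacent pair, which necessarily have different colours. So if no error contains two box defects, the four defects already need four distinct errors and we are done. It remains to treat the case where some error $e$ contains both defects of the adjacent pair.

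Up to the symmetry of the box, $e$ is the error of Figure~\ref{fig:syndrome-box-one-error}. Its third node is a non-defect, say blue, inside the box. That node must be met by an odd number of errors, so some error $e'\neq e$ meets it. We then enumerate the triangles containing this blue node. There are six, one being $e$. For each remaining candidate we track which new nodes become odd-parity and must be cancelled.

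\begin{itemize}
\item If $e'$ also contains the right red defect, the two possibilities are exactly Figures~\ref{fig:syndrome-box-two-errors-a} and~\ref{fig:syndrome-box-two-errors-b}. In each, $e'$ creates a green node of odd parity on the right of the box. Cancelling it requires a third error $e''$ meeting that box node.
\item In addition, the left red defect is at distance at least two from every node of $e$ and $e'$, so it needs a fourth error. This gives at least four.
\item If $e'$ does not contain the right red defect, then $e$, $e'$, and an error covering the right red defect are already three distinct errors. The left red defect still needs its own error, again because of the distances, so we again get four.
\end{itemize}

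The one point needing care is distinctness: we must check that no single error can play two of these roles. For example, the error cancelling the green node must not also cover the left red defect, and the error covering the right red defect must not also cancel the blue node. We will verify this by reading off distances between the relevant nodes in Figure~\ref{fig:syndrome-box-one-error}. We expect this explicit finite check, together with making sure the enumeration of the six triangles at the blue node is exhaustive and uses the symmetry correctly, to be the main obstacle. It is routine but must be done against the figure geometry.
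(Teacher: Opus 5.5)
Your proposal is correct and follows essentially the paper's proof: the same split on whether some error contains the adjacent pair, the same reduction to Figure~\ref{fig:syndrome-box-one-error}, and, in the subcase of Figures~\ref{fig:syndrome-box-two-errors-a}--\ref{fig:syndrome-box-two-errors-b}, the same four errors ($e$, the blue-cancelling error, the error at the introduced green node, and the error at the left red defect). Your sub-split on the right red defect, rather than the paper's split on `either of the other two red defects', is an equivalent reorganisation, and the distinctness checks you flag are left implicit in the paper; one wording slip to fix is that the blue node is not a defect, so it must be met by an \emph{even} number of errors, which together with $e$ meeting it is what forces a second error $e'\neq e$.
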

\begin{proof}
The first case is that no error meets more than one defect. Then as
before we have (at least) four errors meeting the defects in the box.

  The other case is that an error does meet more than one defect.
  Without loss of generality this looks like
  Figure~\ref{fig:syndrome-box-one-error}.  This introduces a blue
  defect which needs to be cancelled. If this cancellation does not
  also meet one of the other two red defects then they still need two
  errors, meaning we have at least four errors in total. So suppose,
  that the error cancelling this blue defect also meets a red
  defect. There are two ways this can occur shown in
  Figures~\ref{fig:syndrome-box-two-errors-a}
  and~\ref{fig:syndrome-box-two-errors-b}. Both of these introduce a
  green defect, so we need at least two more errors: one for the red defect on
  the left and one for the green defect we have introduced on the
  right.
\end{proof}

  Since all the defects in the multi-colour crossing subgadget which
  are outside the syndrome box are distance at least two from any node
  in the box, no error meeting a node in the box can also meet a
  defect outside the box. Thus the number of defects covered by the
  subgadget's error cluster is at most the number of errors in that
  cluster. In other words we still have the notion of an exact
  cover.

  We remark that, whilst it is convenient to have green link nodes in
  the subgadget they are only present during this intermediate step
  in constructing the full gadget; the full gadget will only have red
  link nodes.

  We claim that, for the multi-colour crossing subgadget,
  \begin{enumerate}
  \item any exact cover either covers both green partner `link nodes', or
    neither; i.e., the green link nodes are either both TRUE, or
    both FALSE.
  \item Any exact cover either covers both red partner link nodes, or neither;
    i.e., the red link nodes are either both TRUE, or both
    FALSE.
    \item All four combinations of these are possible.
  \end{enumerate}

  The easiest way to see the first two of these is a parity
  argument. Consider the subgadget's error cluster. Since the
  subgadget contains no blue defects, its error cluster also
  contains no blue defects, so must contain an even number of green
  defects, and an even number of red defects. It is easy to see that
  this implies the first two points. The four
  Figures~\ref{fig:crossing-subgadget-ff}-\ref{fig:crossing-subgadget-tt}
  show that all four combinations of these cases can occur.
 
Since all our wires are red we actually want red wires to cross red
wires. We do this by splitting one of the red wires into a green wire
and a blue wire, crossing them over the other red wire, and then
recombining. The splitting/recombining is done using the
RGB-duplicator subgadget we saw earlier
(Section~\ref{ss:variable-gadget} and
Figure~\ref{fig:rgb-subgadget-empty}). The full gadget is shown in
Figure~\ref{fig:full-crossing-empty}. Since we know how each subgadget
behaves, it is easy to see that this has the properties we want: the
link nodes at the top and bottom of the figure must have the same
state, and the link nodes at the left and right must have the same
state, and that all four cases can occur. We show one example in
Figure~\ref{fig:full-crossing-example}.

This completes the construction of the crossing-wire gadget.

\section{Modifying the 3-SAT formula}\label{s:modified-3-sat}

In this section we describe how to construct the modified 3-SAT formula $F'$ used in the proof of Theorem~\ref{t:logical} in Section~\ref{s:logical}. 

\begin{lemma}\label{l:3-sat-y}
  Suppose $F$ is a 3-SAT formula in variables
  $X_1,X_2,\dots,X_n$ with $m$ clauses. Then there is a 3-SAT formula $F'$ in variables
  $Y, X_1,X_2,\dots X_n, Z_1, Z_2,\dots Z_m$ with $2m$ clauses, such that
  $F'$ is (identically) TRUE if $Y$ is TRUE, and $F'$ is satisfiable with $Y$ FALSE
  if and only if $F$ is satisfiable.
\end{lemma}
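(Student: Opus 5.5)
Write $F=C_1\wedge C_2\wedge\dots\wedge C_m$, where $C_j=(\ell_{j,1}\vee\ell_{j,2}\vee\ell_{j,3})$. The natural plan is to weaken each clause by $Y$: we want the $j$th clause to become $(Y\vee C_j)$. That has four literals, so we split it with a fresh variable $Z_j$ in the standard way. Concretely, I would set
\[
F'=\bigwedge_{j=1}^m \bigl(Y\vee \ell_{j,1}\vee Z_j\bigr)\wedge\bigl(\bar Z_j\vee \ell_{j,2}\vee \ell_{j,3}\bigr).
\]
This has $2m$ clauses in the variables $Y,X_1,\dots,X_n,Z_1,\dots,Z_m$, as the statement requires. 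Each clause has exactly three literals, and no variable is repeated within a clause, because $Y$ and $Z_j$ are new.

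The first property cannot be read literally, since $\bar Z_j\vee\ell_{j,2}\vee\ell_{j,3}$ does not contain $Y$. Instead I would show that $F'$ is satisfiable for every assignment of the $X_i$ once $Y$ is TRUE. With $Y$ TRUE, set every $Z_j$ FALSE. Then each first clause is satisfied by $Y$ and each second clause by $\bar Z_j$. This is the property actually used in Section~\ref{s:logical}, where we need an exact cover of the main syndrome with the outputs of $Y$ TRUE. So I would read `identically TRUE' as `satisfiable for every choice of the $X_i$', and possibly note this explicitly.

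If one wants a formula that is literally identically TRUE when $Y$ is TRUE, a cleaner alternative is to put $Y$ in both clauses: $(Y\vee\ell_{j,1}\vee Z_j)\wedge(Y\vee\bar Z_j\vee\ell_{j,2}\vee\ell_{j,3})$. The second clause then has four literals, and splitting it needs more auxiliary variables than the statement allows. I would therefore use the first version.

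For the second property, fix $Y$ FALSE. The $j$th pair of clauses becomes $(\ell_{j,1}\vee Z_j)\wedge(\bar Z_j\vee\ell_{j,2}\vee\ell_{j,3})$, and the claim is that this is satisfiable in $Z_j$ exactly when $C_j$ holds.
\begin{itemize}
\item If $\ell_{j,1}$ is TRUE, take $Z_j$ FALSE.
\item If $\ell_{j,1}$ is FALSE, then $Z_j$ must be TRUE, and then $\ell_{j,2}\vee\ell_{j,3}$ must hold.
\item Conversely, if all three literals are FALSE, no value of $Z_j$ satisfies both clauses.
\end{itemize}
Each $Z_j$ appears only in its own pair of clauses, so the choices of the $Z_j$ are independent. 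Hence $F'|_{Y=\mathrm{FALSE}}$ is satisfiable if and only if some assignment of the $X_i$ satisfies every $C_j$, that is, if and only if $F$ is satisfiable.

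All the steps are routine. The only point needing care is the mismatch noted above: the literal phrase `identically TRUE' versus `satisfiable for every assignment of the $X_i$'. I would resolve this by proving the stronger-than-needed extension property, which is what the application in Section~\ref{s:logical} uses.
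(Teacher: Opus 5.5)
Your proof is correct and is essentially the paper's: weaken each clause to $C_j\vee Y$ and split the resulting four-literal clause with the fresh variable $Z_j$. The paper uses the pair $X_{i_1}\vee X_{i_2}\vee Z_i$, $X_{i_3}\vee Y\vee\bar Z_i$, putting $Y$ beside $\bar Z_i$ rather than $Z_i$, which makes no difference.

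Your caveat about `identically TRUE' is right, and it applies equally to the paper's split, which also leaves a $Y$-free clause; both proofs establish satisfiability for every assignment of the $X_i$ when $Y$ is TRUE, which is all Section~\ref{s:logical} uses. One correction to your aside: the rejected alternative cannot be rescued by extra auxiliary variables. Literal identical truth forces every non-tautological clause to contain $Y$, and then setting $Y$ FALSE leaves a 2-SAT instance, which is decidable in polynomial time.
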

\begin{proof}
 Suppose $F$ has clauses $C_1\dots C_m$ (so $F=\bigwedge_i
 C_i$). First form a `4-SAT' formula $F''=\bigwedge_{i=1}^m C_i'$ in
 variables $Y, X_1,X_2,\dots X_n$ by replacing each clause $C_i$ by
 the clause $C_i'$ defined by $C_i'=C_i\vee Y$. It is immediate that
 if $Y$ is TRUE then $F''$ is identically TRUE (so definitely satisfiable), and if $Y$ is FALSE then $F'$ is
 satisfiable if and only if the original formula $F$ is satisfiable.
 
  To complete the proof we convert $F''$ to a 3-SAT formula in a
  standard way. Add variables $Z_1,\dots, Z_m$ and for each clause
  $C_i'=X_{i_1}\vee X_{i_2}\vee X_{i_3}\vee Y$ form two clauses
  $X_{i_1}\vee X_{i_2}\vee Z_i$ and $X_{i_3}\vee Y \vee \bar Z_i$.
  This new formula has the claimed properties. This follows because
  the pair of clauses above are jointly satisfiable (i.e., we can find
  $Z_i$ to make them both TRUE) if and only if the 4-variable clause
  $C_i'=X_{i_1}\vee X_{i_2}\vee X_{i_3}\vee Y$ is satisfiable.
\end{proof}

\begin{figure*}[t]
  \def\figbasename{4-8-8-not-gadget}
  \begin{subfigure}{0.3\textwidth}
  \includegraphics[width=\columnwidth]{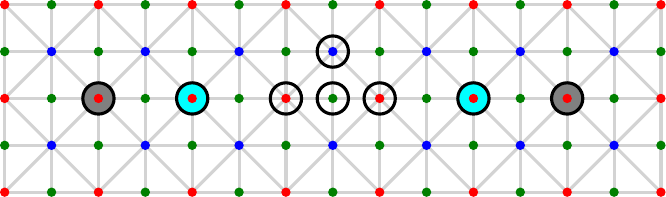}
      \caption{\label{fig:4-8-8-not-empty}Empty NOT-subgadget\newline}
  \end{subfigure}
  \hspace{0.03\columnwidth}
  \begin{subfigure}{0.3\textwidth}
  \includegraphics[width=\columnwidth]{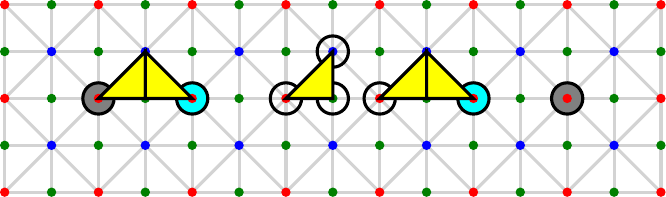}
      \caption{\label{fig:4-8-8-not-true}Input TRUE, output FALSE\newline }
  \end{subfigure}
  \hspace{0.03\columnwidth}
  \begin{subfigure}{0.3\textwidth}
  \includegraphics[width=\columnwidth]{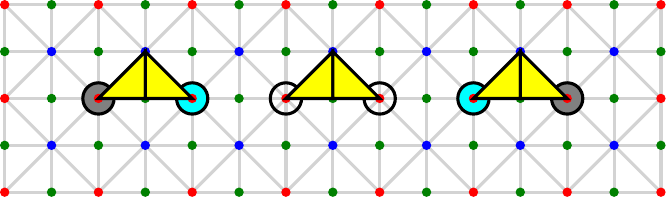}
      \caption{\label{fig:4-8-8-wire-true}A wire of the same length with both links TRUE}
  \end{subfigure}
  \caption{\label{fig:4-8-8-not-gadget} \textbf{The NOT-subgadget for
      the $\mathbf{4.8.8}$ colour code}. \textbf{(a)} the empty
    NOT-subgadget. Note the syndrome is not separated -- in this case
    we define a cover of this gadget to be exact if the number of red
    defects it covers is equal to the number of errors. \textbf{(b)}
    shows the exact cover when the input (bottom left) is TRUE. The
    case of the input being FALSE (not shown) is just the reversed
    gadget. \textbf{(c)} shows a wire of the same length and we see
    that that both ends have the same state (the case of both ends
    TRUE is shown) so the parity changes when replacing a bare wire by
    a NOT-gate.}
\end{figure*}

\section{Other noise models}\label{a:other-noise}
We have shown that finding a minimum weight decoding in the code-capacity
model with independent equally likely errors for the colour code is
NP-hard. However, in reality we are much more interested in the case
of phenomenological noise or circuit level noise. In general, decoding
is harder in these cases, so we would expect them to also be
NP-hard. For the phenomenological case this is true -- we just take
the exact syndrome constructed here in one layer (time-step) of the lattice. Since
the minimum weight decoding set must lie in the layer our proof
immediately applies.

For circuit level noise the situation is more
  complicated, not least because there are many possible circuits that
  could be used. Again we consider the exact syndrome constructed
  above in one layer -- the only way our proof could fail is if the
  existence of errors such as hook errors mean there are other error sets generating
  the given syndrome that have even lower weight (higher probability)
  than the exact covers we discuss. Since hook errors cause defects in
  multiple layers, it is unlikely that they would cancel in the way needed
  to give a lower weight error set, unless the circuit were
  specifically contrived (almost certainly reducing the code distance in the process). 

Similarly, if we consider noise models where errors do not all have
the same probability then the problem \emph{could} become easier. For
example, in the unrealistic scenario where some errors have zero
probability the decoding problem could reduce to a much simpler
problem such as MWPM. However, we would not expect this to be the case
for any realistic noise model. Indeed, we have chosen to concentrate
on the equal probability case to emphasise that it is not the
\emph{weights} that make the problem hard.

\begin{figure*}[t]
  \def\figbasename{4-8-8-xor-subgadget}
  \begin{subfigure}{0.3\textwidth}
  \includegraphics[width=\columnwidth]{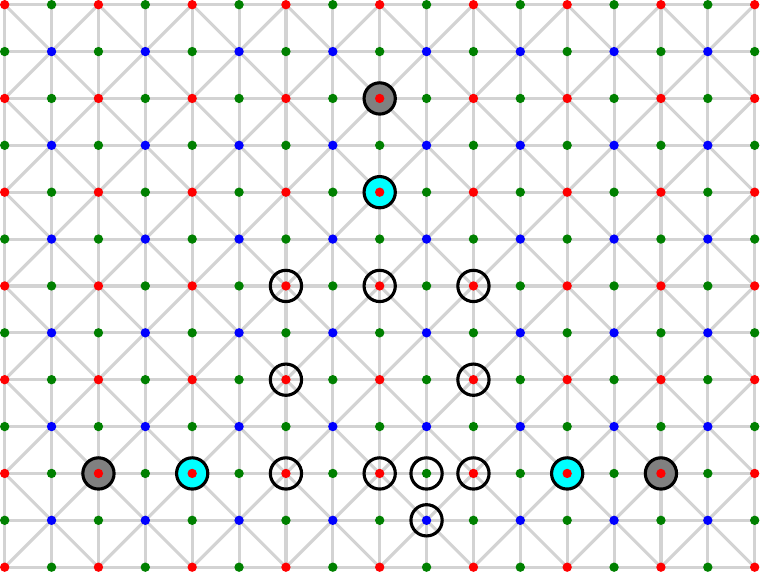}
      \caption{\label{fig:4-8-8-xor-empty}Empty XOR-subgadget}
  \end{subfigure}
  \hspace{0.03\columnwidth}
  \begin{subfigure}{0.3\textwidth}
  \includegraphics[width=\columnwidth]{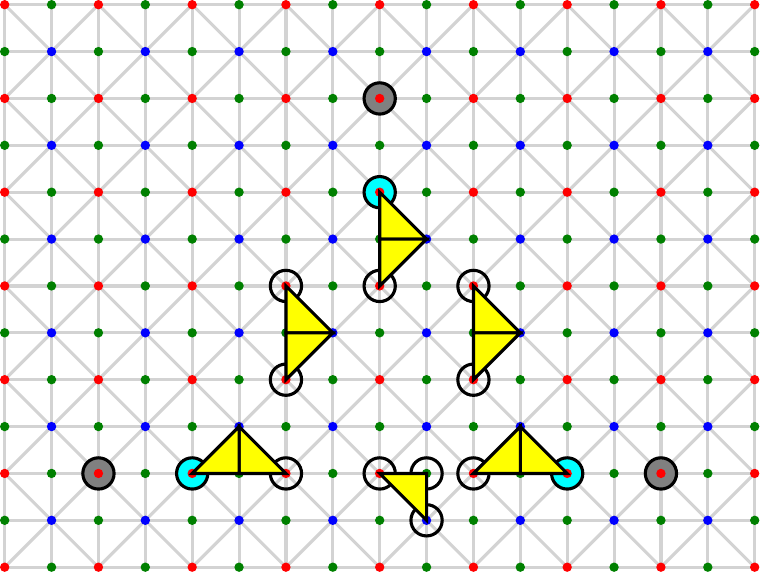}
      \caption{\label{fig:4-8-8-xor-FF}Both inputs FALSE}
  \end{subfigure}
  \hspace{0.03\columnwidth}
  \begin{subfigure}{0.3\textwidth}
  \includegraphics[width=\columnwidth]{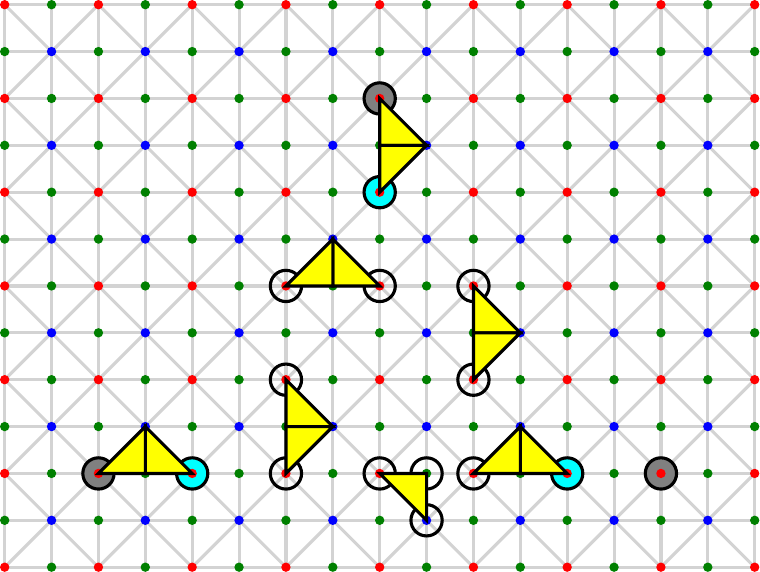}
      \caption{\label{fig:4-8-8-xor-TT}Both inputs TRUE.}
  \end{subfigure}
  \begin{subfigure}{0.3\textwidth}
  \includegraphics[width=\columnwidth]{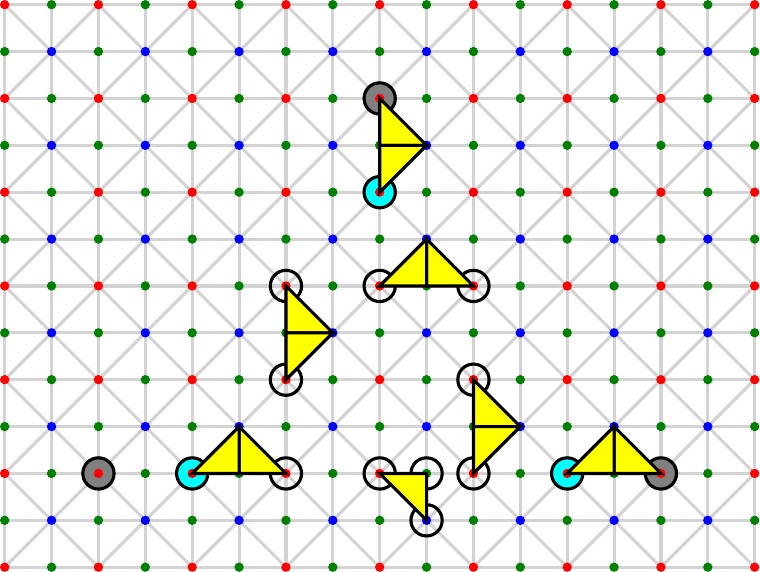}
      \caption{\label{fig:4-8-8-xor-TF}One input TRUE, one FALSE.}
  \end{subfigure}
  \hspace{0.03\columnwidth}
  \begin{subfigure}{0.3\textwidth}
  \includegraphics[width=\columnwidth]{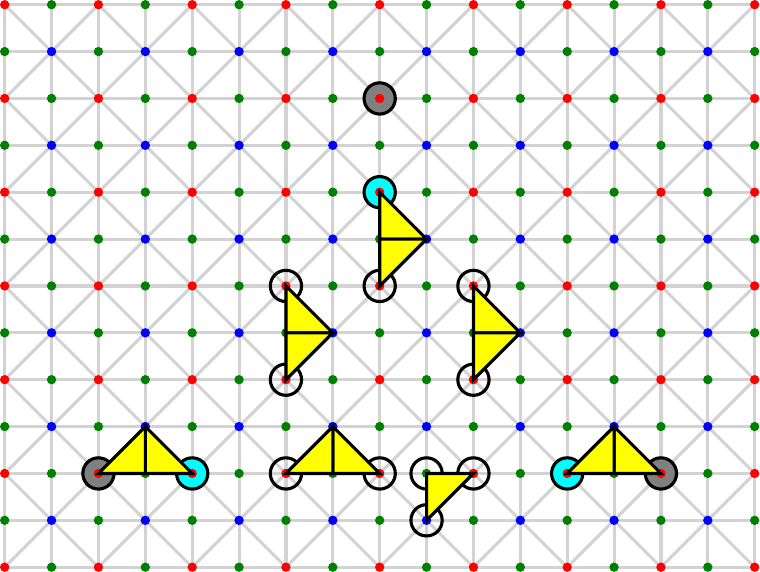}
      \caption{\label{fig:4-8-8-xor-FT}One input TRUE, one FALSE.}
  \end{subfigure}
  \caption{\label{fig:4-8-8-xor-gadget} \textbf{The XOR-gadget for the $\mathbf{4.8.8}$
    colour code}. This is very similar to the $6.6.6$ version, but with
    the central `triangle' blown up (as the $4.8.8$ code does not have a
    cycle that short). More importantly, the shrunk lattice for the
    $4.8.8$ colour code does not have odd cycles, so we need to include a
    NOT-subgadget (the blue/green defects along the bottom
    edge).}
\end{figure*}

\begin{figure*}[t]
  \def\figbasename{4-8-8-rgb-duplicator}
  \begin{subfigure}{0.3\textwidth}
  \includegraphics[width=\columnwidth]{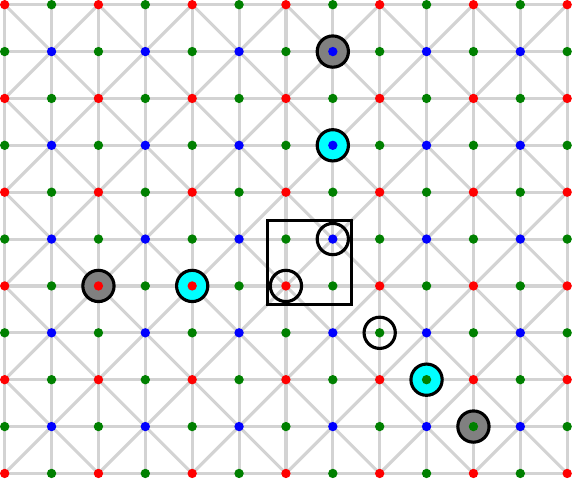}
      \caption{\label{fig:4-8-8-RGB-empty}Empty RGB-duplicator}
  \end{subfigure}
  \hspace{0.03\columnwidth}  
  \begin{subfigure}{0.3\textwidth}
  \includegraphics[width=\columnwidth]{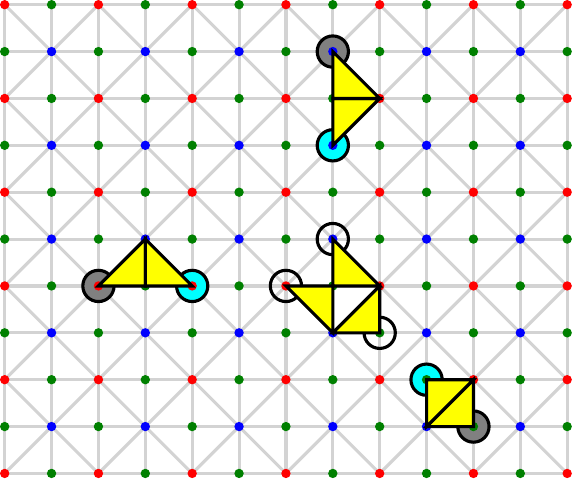}
      \caption{\label{fig:4-8-8-RGB-true}All links TRUE}
  \end{subfigure}
  \hspace{0.03\columnwidth}
  \begin{subfigure}{0.3\textwidth}
  \includegraphics[width=\columnwidth]{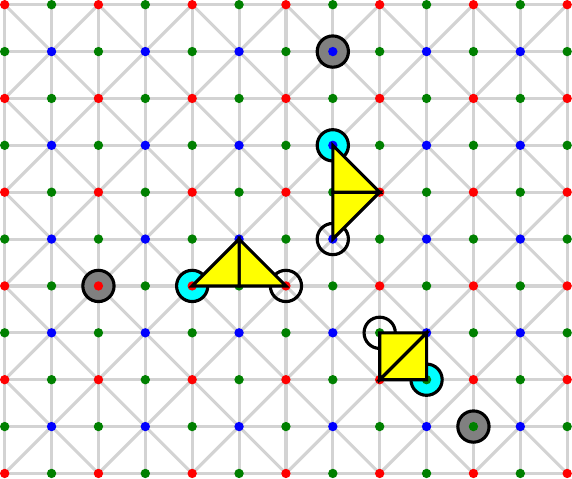}
      \caption{\label{fig:4-8-8-RGB-false}All links FALSE.}
  \end{subfigure}
  \caption{\label{fig:4-8-8-RGB-duplicator} \textbf{The RGB-duplicator
      gadget for the $4.8.8$ colour code}. Using this gadget it is
    easy to construct a double duplicator and then a full variable
    gadget. \textbf{(a)} shows the gadget. Note the syndrome is not
    separated but the box shown is a `syndrome box' -- at least two
    errors must meet the box, so we still have the notion of an exact
    cover. \textbf{(b)} and \textbf{(c)} show exact covers with all
    link nodes TRUE, and all link nodes FALSE respectively. We remark
    that, unlike in the $6.6.6$ colour code, there are other exact
    covers, but the standard parity based argument shows that, in any
    exact cover, all the links have to be in the same state.}
\end{figure*}

\begin{figure}[t]
  \def\figbasename{4-8-8-green-wire}
  \begin{subfigure}{0.47\columnwidth}
  \includegraphics[width=\columnwidth]{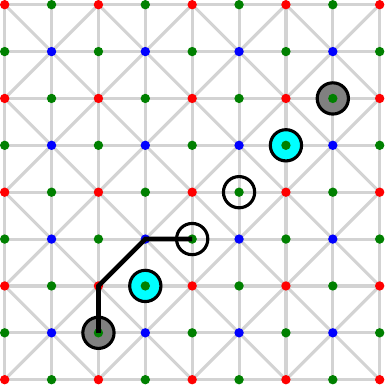}
      \caption{\label{fig:4-8-8-green-wire-extended}Input TRUE, output FALSE\newline }
  \end{subfigure}
  \hspace{0.03\columnwidth}
  \begin{subfigure}{0.47\columnwidth}
  \includegraphics[width=\columnwidth]{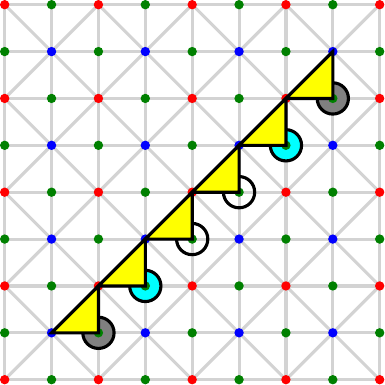}
      \caption{\label{fig:4-8-8-green-wire-true}Input TRUE, output FALSE\newline }
  \end{subfigure}
  \caption{\label{fig:4-8-8-green-wire} \textbf{Anomalous behaviour in
      the $\mathbf{4.8.8}$ color code in the `4' colour}.  Wires and
    link nodes constructed in the `4' colour have some strange
    behaviour in the $4.8.8$ colour code which is why we work with
    wires in one of the `8' colours. Both \textbf{(a)} and
    \textbf{(b)} show a wire like set of defects in the `green' colour
    (which corresponds to the `4' in the 4.8.8 code). In \textbf{(a)},
    we see that the `link' node at the bottom left does not satisfy
    our requirements for a link node as the partner link node is
    distance three from a non-link node in the wire -- a path of
    length three is shown in the figure.  \textbf{(b)} shows a long
    chain of error cancelling except at the ends -- this is precisely
    the kind of behaviour link nodes were designed to prevent. This
    type of behaviour cannot occur in the other colours in the 4.8.8
    code nor in the 6.6.6 colour code.  }
\end{figure}

\section{The details for the other colour codes}\label{s:other-colour-code-details}
In this section we go into detail about the changes we need for other
(planar) colour codes and the 4.8.8 colour code in particular. The
first is a minor tweak of the definition of a wire. In
Section~\ref{ss:wire} we required that `consecutive defects' in a wire
be at distance two. In fact what we really needed is that the two
defects can be \emph{matched} (in the sense of
Figure~\ref{fig:2-errors}) -- that is there exist a pair of errors
giving rise to exactly these two defects with the other nodes in these
two errors cancelling. In the normal hexagonal colour code these two
definitions are equivalent, but that is not the case in other colour
codes.

The shrunk lattice~\cite{bombinTopologicalQuantumDistillation2006} is
a convenient way of capturing this information. It is formed from the
dual lattice, by taking just the vertices of one colour, and joining
two of them exactly if they can be matched in the sense above. Note
there is one shrunk lattice for each colour, and these need not be
isomorphic. For the hexagonal colour code each shrunk lattice is just
a hexagonal lattice; for the $4.8.8$ colour code all three shrunk
lattices are isomorphic to the standard square lattice $\Z^2$. In this
terminology, wires consist of `paths' of defects where consecutive
defects are at distance one in the shrunk lattice.

One surprising extra technicality is that of the link nodes which
require extra care in colour codes other than the hexagonal colour
code.  For example, consider the shrunk lattice for the $4.8.8$ colour
code corresponding to the degree 4 nodes (which all have the same
colour). Then, if we take three nodes $u,v,w$ in this shrunk lattice
with both $uv$ and $vw$ being edges, we find that the distance from
$u$ to $w$ in the \emph{dual} lattice is at most three, not the
distance four we required in our definition of link nodes. In other
words link nodes do not exist in this colour class in the $4.8.8$
code.  Indeed, this anomaly extends to wires: if we make wires in the
colour corresponding to the degree 4 nodes in the dual lattice for the
$4.8.8$ code we find that they can be covered by an extended chain of
errors -- see Figure~\ref{fig:4-8-8-green-wire-extended}. However,
both link nodes and wires behave as expected in either of the other
two colours in the dual lattice, so this is just a surprising
difference rather than a crucial obstruction.

The hexagonal lattice is not bipartite, which corresponds to the fact
that any two red defects in the hexagonal colour code can be connected
by a wire of an even length.  However, $\Z^2$ \emph{is} bipartite and
this is exactly the problem we mentioned in Section~\ref{s:combining}
and~\ref{s:other-colour-codes} -- we cannot always find an \emph{even}
length wire between two red defects. To avoid this issue we introduce
the NOT-subgadget shown in Figure~\ref{fig:4-8-8-not-gadget}. This
gadget does not have a separated syndrome -- we define a cover of this
gadget to be exact if the number of \emph{red} defects its error
cluster meets equals the number of errors (in other words we ignore
the green and blue defect inside the NOT-subgadget in our counting).

We need to use the NOT-subgadget internally in the
XOR-subgadget (which is used to build the garbage collection gadget)
and this is shown in Figure~\ref{fig:4-8-8-xor-gadget}

The RGB-duplicator subgadget is shown in
Figure~\ref{fig:4-8-8-RGB-duplicator}. We remark that, unlike in the
6.6.6 code, the syndrome is not separated. However, we have a
syndrome-box shown in Figure~\ref{fig:4-8-8-RGB-empty}, and it is easy
to see that, in any cover, at least two errors must meet the syndrome
box. Since all other defects are at distance at least two from the
syndrome box this means that we still have the notion of an exact
cover.

The crossing gadget also appears to exist. However, it is
constructed out of two multi-coloured-crossing subgadgets: one crossing red and blue, and
one crossing red and green. In most cases (the $6.6.6$ colour code is an exception) this will necessitate constructing two different multi-coloured crossing subgadgets. Again, as the crossing gadget is only needed for
code capacity noise we omit these constructions.

The final regular planar colour code is the $4.6.12$ colour code. We
would expect our results to hold in this case but since this code is
much less practical for QEC (the weight twelve stabilisers are
problematic) we have not verified the details.

\section{Approximating the minimum weight decoding (mathematical details)}\label{a:approx}

In this section we discuss the question of approximating the minimum
weight decoding in more mathematical detail. We point the interested
reader to two concepts in the literature: FPTAS and PTAS. As we said
in Section~\ref{s:approx} some NP-hard problems have polynomial time algorithms that
approximate the optimal solution arbitrarily closely. The following
definition makes this more precise.

\begin{defn}
  An optimisation problem has a \emph{polynomial time approximation scheme
  (PTAS)}, if for all $\eps>0$ there exists a polynomial time algorithm
  that finds a solution within a factor $1+\eps$ of the optimum.
\end{defn}

Some NP-hard problems have a PTAS (e.g., Euclidean TSP) and some do
not (e.g., Max-3SAT).  It is important that the polynomial is allowed
to depend \emph{arbitrarily} on $\eps$. The problem is said to have a
Fully Polynomial Time Approximation Scheme (FPTAS) if the run time is
bounded by a polynomial in both the size of the problem and
$1/\eps$. Assuming that $\text{P}\not=\text{NP}$, it is known that no NP-hard problem has an FPTAS. In other
words, for any NP-hard problem, including our problem of minimum
weight decoding for the colour code, as we try and approximate more
closely the run-time must behave super-polynomially in $1/\eps$ --
i.e., it behaves badly as $\eps$ gets small. However, how close a
practical algorithm can get remains an open question.

\vfil

\bibliography{articles_tidied.bib} 

\end{document}